\title{Existential Second-Order Logic Over Graphs: A~Complete Complexity-\!Theoretic Classification}
\author{Till Tantau}
\date{\small
  Institute of Theoretical Computer Science\\Universität zu
  Lübeck, Germany\\\texttt{tantau@tcs.uni-luebeck.de}\\[.5em]
  December 16th, 2014
}
\theoremstyle{plain}
\newtheorem{theorem}{Theorem}[section]
\newtheorem{lemma}[theorem]{Lemma}
\theoremstyle{definition}
\tikzset{
  every picture/.style={semithick},%
  node/.style={draw,circle,minimum size=5mm,inner
    sep=0.5pt,font=\footnotesize,fill=white},%
  small node/.style={node,minimum size=3.5pt,inner sep=0pt,outer
    sep=0pt,font=\tiny},%
  on/.style={fill=white,inner sep=-.4pt,circle}
}
\newtheorem{fact}[theorem]{Fact}
\let\normalsmall=\small
\newcommand\specialsmall{\let\normalsmall=\scriptsize}
\newcommand\abstractsmall{\let\normalsmall=\footnotesize}
\newcommand\Class[1]{\mathchoice{\text{\normalfont\normalsmall$\mathrm{#1}$}}{\text{\normalfont\normalsmall$\mathrm{#1}$}}{\text{\normalfont$\mathrm{#1}$}}{\text{\normalfont$\mathrm{#1}$}}}       
\newcommand{\Lang}[1]{\ifmmode{\text{\textsc{#1}}}\else\textsc{#1}\fi}
\newcommand\FaginDef[1]{\Class{FD}_{\mathrm{#1}}}
\newcommand\FaginDefbar[1]{\Class{\overline{FD}}_{\mathrm{#1}}}
\newcommand\Models[1]{\Lang{models}_{\mathrm{#1}}}
\newcommand\hilight[1]{\colorbox{black!10}{\vrule width0pt height1.3ex\smash{#1}}}
\newcommand\patterntwo[8]{%
  \smash{\!%
  \tikz[baseline=-.5ex,any/.style={black!25,on/.style={inner sep=-.4pt,circle}}]{
    \node [node, minimum size=3mm] at (1,0) (w) {};
    \node [node, minimum size=3mm,fill=black] at (0,0) (b) {};
    \draw (w) edge [out=25, in=-25,looseness=12,#1] node  [on,near start,anchor=center] {\scriptsize#2} (w);
    \draw (b) edge [out=155, in=205,looseness=12,#3] node  [on,near
    start, anchor=center] {\scriptsize#4} (b);
    \draw (b) edge [bend left=20,#5] node [on] {\scriptsize#6} (w);
    \draw (w) edge [bend left=20,#7] node [on] {\scriptsize#8} (b);
  }\!}%
}
\newcommand\patterntwomixed[4]{%
  \smash{%
  \tikz[baseline=-.5ex,any/.style={black!25,on/.style={inner sep=-.4pt,circle}}]{
    \node [node, minimum size=3mm] at (1,0) (w) {};
    \node [node, minimum size=3mm,fill=black] at (0,0) (b) {};
    \draw (b) edge [bend left=20,#1] node [on] {\scriptsize#2} (w);
    \draw (w) edge [bend left=20,#3] node [on] {\scriptsize#4} (b);
  }}%
}
\begin{document}
\maketitle

\begin{abstract}\abstractsmall
  Descriptive complexity theory aims at inferring a problem's
  computational complexity from the syntactic complexity of its
  description. A cornerstone of this theory is Fagin's Theorem,
  by which a graph property is expressible
  in \emph{existential second-order logic} (\textsc{eso} logic) if,
  and only if, it is in~$\Class{NP}$. A natural question, from the
  theory's point of view, is which syntactic fragments
  of \textsc{eso} logic also still characterize $\Class{NP}$. Research on this
  question has culminated in a  dichotomy result by Gottlob, Kolaitis,
  and Schwentick: for each possible quantifier prefix of an
  \textsc{eso} formula, the resulting prefix class either contains an
  $\Class{NP}$-complete problem or is contained in~$\Class
  P$. However, the exact complexity of the prefix classes
  inside~$\Class P$ remained  elusive. In the present paper, we clear up the 
  picture by showing that for each prefix class of \textsc{eso} logic, its
  reduction closure under first-order reductions is either
  $\Class{FO}$, $\Class L$, $\Class{NL}$, or $\Class{NP}$. For
  undirected self-loop-free graphs two containment results are
  especially challenging to prove: containment in~$\Class L$ for the
  prefix $\exists R_1\cdots  
  \exists R_n \forall x \exists y$ and containment in~$\Class{FO}$ for
  the prefix $\exists M \forall x \exists y$ for monadic~$M$. The complex
  argument by Gottlob, Kolaitis, and Schwentick concerning polynomial
  time needs to be carefully reexamined and either combined with
  the logspace version of Courcelle's Theorem or directly improved to
  first-order computations. A different challenge is posed by formulas with the
  prefix $\exists M \forall x\forall y$, which we show to express 
  special constraint satisfaction problems that lie
  in~$\Class L$. 
\end{abstract}

\section{Introduction}

Fagin's Theorem \cite{Fagin1974} establishes a tight connection between
complexity theory and finite model theory: A language lies in
$\Class{NP}$ if, and only if, it is the set of all finite models
(coded appropriately as words) of some formula in \emph{existential
  second-order logic} (\textsc{eso} logic). This machine-independent
characterization of a major complexity class sparked the research area
of descriptive  complexity theory, which strives to characterize the
computational complexity of languages by the syntactic structure of
the formulas that can be used to describe them. Nowadays, syntactic logical
characterizations have been found for all major complexity classes, see
\cite{Immerman1998} for an overview, although some syntactic extras
(like numerical predicates) are often needed for technical reasons. 

When looking at subclasses of $\Class{NP}$ like $\Class P$,
$\Class{NL}$, $\Class L$, or $\Class{NC}^1$, one might hope that syntactic
restrictions of \textsc{eso} logic can be used to characterize them;
and the most natural way of restricting \textsc{eso} formulas is to
limit the number and types of quantifiers used. All \textsc{eso}
formulas can be rewritten in prenex normal form as $\exists R_1 \cdots
\exists R_r \forall x_1 
\exists x_2 \cdots \forall x_{n-1} \exists x_n\, \psi$, where the $R_i$
are second-order variables, the $x_i$ are first-order variables, and
$\psi$ is quan\-ti\-fier-free. Formulas like 
$\phi_{\text{3-colorable}} = \exists R \exists G \exists B \forall x
\forall y \bigl(R(x) \lor G(x) \lor B(x) \land (E(x,y) \to\penalty-500
\lnot (R(x) \land\nobreak R(y)) \land  \lnot (G(x) \land G(y)) \land \lnot
(B(x) \land B(y)))\bigr)$, which describes the $\Class{NP}$-complete
problem \Lang{3-color\-able}, show that we  do not need the 
full power of \textsc{eso} logic to capture 
$\Class{NP}$-complete problems: the prefix $\exists R \exists G \exists B \forall x
\forall y$ suffices. However, do formulas of the form,
say, $\exists R \forall x \exists y\, \psi$ also capture all
of $\Class{NP}$; or do they characterize exactly, say,~$\Class P$?
This question lies at the heart of a detailed study by Gottlob,
Kolaitis, and Schwentick \cite{GottlobKS2004} entitled \emph{Existential
  Second-Order Logic Over Graphs: Charting the Tractability Frontier,}
where the following dichotomy is shown: For each possible syntactic
restriction of the quantifier block of \textsc{eso} formulas, the
resulting \emph{prefix class} either contains an $\Class{NP}$-complete
problem or is contained in $\Class P$. For instance, it is shown there
that all graph problems expressible by formulas of the form $\exists R
\forall x \forall y\, \psi$ lie in $\Class P$, while some problems
expressible by formulas of the form  $\exists R \forall x \forall y
\forall z\, \psi$ are $\Class{NP}$-complete.
The dichotomy does not, however, settle the question of whether all of
$\Class P$ -- or at least some interesting subclass thereof like
logarithmic space ($\Class L$) or nondeterministic logarithmic space
($\Class{NL}$) -- is described by one of the logical 
fragments.

\subsection{Contributions of This Paper}

One cannot really hope to show that the prefix class of, say, the
quantifier prefix $\exists R \forall x \forall y$ is  
equal to~$\Class P$ since $\Class P \neq \Class{NP}$ would follow:  
This syntactically severely restricted 
prefix class can be shown \cite[Proposition 10.6]{EiterGG2000} to be
contained in $\Class{NTIME}(n^k)$ for some constant~$k$ and is thus
provably different from~$\Class{NP}$ by the time hierarchy
theorem. The best one can try to prove are statements like ``this
prefix class is contained in $\Class{P}$ and contains a problem
complete for $\Class{P}$'' or, phrased more succinctly, ``the
reduction closure of this prefix class is~$\Class{P}$.'' Our main
result, Theorem~\ref{thm:main}, consists of such
statements: \emph{For each possible \textsc{eso} prefix class, its
  reduction closure under first-order reductions is either
  $\Class{FO}$, $\Class L$, $\Class{NL}$, or $\Class{NP}$.} In
particular, no prefix class yields $\Class P$ as its reduction closure
(unless, of course,
$\Class P = \Class{NP}$ or $\Class{NL} = \Class P$).  

It makes a difference which vocabulary we are allowed to use in our
formulas and which logical structures we are interested in: Results
depend on whether we consider arbitrary graphs, 
undirected graphs, undirected graphs without self-loops, or just strings. (In
this paper, all considered graphs are finite.) The case of
strings has been addressed and settled in
\cite{EiterGG2000}. In the 
present paper we consider the same three cases as in
\cite{GottlobKS2004}: In our vocabulary, we always have just a single
binary relational symbol~($E$), so all models of formulas are
graphs. We then differentiate between directed graphs, undirected
graphs, and undirected graphs without self-loops (which we call
\emph{basic graphs} for brevity). Note that allowing self-loops, whose
presence at a vertex~$x$ can be tested with the formula $E(x,x)$, is
equivalent to considering basic graphs together with an
additional monadic input predicate. 

To describe the syntactic fragments of \textsc{eso} logic easily and
succinctly, we use the notation of~\cite{GottlobKS2004}: The uppercase
letter~$E$ denotes the presence of an existential second-order
quantifier, an optional index as in $E_2$ denotes the arity of the quantifier,
and the  lowercase letters $a$ and~$e$ denote universal and existential
first-order quantifiers, respectively. The prefix type of the formula
$\phi_{\text{3-colorable}}$ mentioned earlier is $EEEaa$ (or even
$E_1E_1E_1 aa$ since the predicates are monadic) and we say 
that \emph{$\phi_{\text{\normalfont 3-colorable}}$ has prefix type $EEEaa$} (and
also $E_1E_1E_1 aa$). We will use regular expressions over the alphabet 
$\{a,e,E,E_1,E_2,E_3,\dots\}$ to denote patterns of prefix types such
as $E^*aa$ for ``any number of existential second-order quantifiers
followed by exactly two universal first-order quantifiers.'' 
To define the three kinds of prefix classes that we are interested
in, for a formula~$\phi$ let $\Models{directed}(\phi) = \{G 
\mid G$ is a directed graph and $G \models \phi\}$,
$\Models{undirected}(\phi) = \{G \mid G$ is an undirected graph and $G
\models \phi\}$, and $\Models{basic}(\phi) = \{G \mid G$ is a
basic graph and $G \models \phi\}$. For instance,
$\Models{basic}(\phi_{\text{3-colorable}}) = \Lang{3-colorable}$
(ignoring coding issues). Next, for a 
prefix type pattern~$P$, let $\FaginDef{directed} (P) =
\{\Models{directed}(\phi) \mid \phi$ has 
a prefix type in~$P\}$ and define $\FaginDef{undirected}(P)$ and
$\FaginDef{basic}(P)$ similarly for undirected and basic graphs. ``$\Class{FD}$'' stands for
``Fagin-definable'' and Fagin's Theorem can be stated succinctly as
$\FaginDef{strings} (E^*(ae)^*) = \Class{NP}$.

As stated earlier, in the context of syntactic fragments of
\textsc{eso} logic it makes sense to consider reduction closures of
prefix classes rather than the prefix classes themselves. It will
not matter much which particular kind of reductions we use, as long
as they are weak enough. All our reductions will be \emph{first-order
  reductions}~\cite{Immerman1998}, which are first-order
queries with access to the bit predicate or, equivalently, functions
computable by a logarithmic-time-uniform constant-depth circuit
family.\footnote{\specialsmall As a technicality, since we use
  first-order reductions with access to the bit predicate, by
  $\Class{FO}$ we refer to ``first-order logic with access to the bit
  predicate,'' which is the same as logarithmic-time-uniform $\Class{AC}^0$.}
Let us write $A \le_{\mathrm{fo}} B$ if $A$ can be reduced to $B$ using
first-\penalty0order reductions. Let us write $\FaginDefbar{directed}(P) = \{A \mid A
\le_{\mathrm{fo}} B \in \FaginDef{directed}(P)\}$ for the reduction closure
of $\FaginDef{directed}(P)$ and define $\FaginDefbar{undirected}(P)$ and $\FaginDefbar{basic}(P)$
similarly. 

\begin{theorem}[Main Result]\label{thm:main}
  The following table completely classifies all prefix classes of
  \textsc{eso} logic over basic graphs (upper part) and
  undirected and directed graphs (lower part):\footnote{The
  ``interesting'' prefixes, where the complexity classes differ
  between the two parts, are highlighted.}
  \medskip

  \noindent\setlength{\tabcolsep}{0pt}\small%
  \begin{tabular}{lll}
    If $P$ is at least one of \dots & and at most one  of \dots,\kern1em & then \\\hline\\[-.75em]
    -- & {$(ae)^*$}, {$E^* e^* a$}, \hilight{$E_1ae$} & $\FaginDefbar{basic}(P)=\Class{FO}$  \\
    \hilight{$E_1E_1ae$}, \hilight{$E_2ae$} & \hilight{$E^* ae$} & $\FaginDefbar{basic}(P)=\Class{L}$ \\
    \hilight{$E_1aa$} & \hilight{$Eaa$} & $\FaginDefbar{basic}(P)=\Class{L}$ \\
    {$E_1eaa$} & {$E_1 e^* aa$} & $\FaginDefbar{basic}(P)=\Class{NL}$ \\
    {$E_1aaa$}, {$E_1E_1aa$}, {$E_2 eaa$}, $E_1eae$,\kern1em &   &   \\
    {$E_1aee$}, {$E_1aea$}, {$E_1 aae$} & {$E^*(ae)^*$} & $\FaginDefbar{basic}(P)=\Class{NP}$ \\[1em]
    -- & {$(ae)^*$}, {$E^* e^* a$} & $\FaginDefbar{undirected}(P) = \FaginDefbar{directed}(P) =\Class{FO}$ \\
    \hilight{$E_1aa$} & {$E_1e^*aa$}, \hilight{$Eaa$} & $\FaginDefbar{undirected}(P) = \FaginDefbar{directed}(P)=\Class{NL}$ \\
    {$E_1aaa$}, {$E_1E_1aa$}, {$E_2eaa$}, \hilight{$E_1ae$} & {$E^*(ae)^*$} & $\FaginDefbar{undirected}(P) = \FaginDefbar{directed}(P)=\Class{NP}$  
  \end{tabular}
\end{theorem}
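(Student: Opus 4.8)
The plan is to prove the theorem as a large collection of containment and hardness results, organized along the rows of the table, and then observe that the stated upper and lower prefix bounds for each row follow by syntactic monotonicity: if $P \subseteq P'$ as prefix patterns then $\FaginDef{}(P) \subseteq \FaginDef{}(P')$, so it suffices to prove each containment for the \emph{largest} pattern in a row and each hardness for the \emph{smallest} pattern in a row. Thus the real work splits into: (i) upper bounds $\FaginDefbar{basic}(P) \subseteq \mathcal C$ for the maximal $P$ of each row, and (ii) matching lower bounds, i.e.\ exhibiting a single problem in $\FaginDef{}(P)$ (for the minimal $P$) that is hard for $\mathcal C$ under $\le_{\mathrm{fo}}$. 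The $\Class{NP}$ rows are essentially inherited from Gottlob--Kolaitis--Schwentick \cite{GottlobKS2004} together with the easy observation that the relevant $\Class{NP}$-complete problems (3-colourability and its variants) are $\Class{FO}$-reducible to the prefix classes in question; the $\Class{FO}$ rows are largely classical (formulas with no second-order quantifier, or with prefix $E^* e^* a$, can be evaluated by a first-order formula after a Skolemization/greedy argument). The heart of the paper is the remaining rows.

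For the $\Class L$ and $\Class{NL}$ rows over basic graphs I would proceed as follows. \emph{Containment} for the prefix $E^* a e$ (basic graphs): I expect to re-examine the GKS polynomial-time argument for $E^* ae$, which reduces model-checking to a kind of $2$-SAT-like or reachability-like structure on the graph; the claim is that this structure can in fact be captured by monadic second-order logic over graphs of bounded tree-width (or, in the truly hard case, by a more hands-on argument), so that the logspace version of Courcelle's Theorem \cite{Immerman1998} or a direct construction places it in $\Class L$. For $Eaa$ over basic graphs, the observation that $\exists M \forall x \forall y\,\psi$ defines a constraint-satisfaction-type problem — each vertex gets a Boolean label from $M$, constrained by $\psi$ along edges and at single vertices — lets me reduce membership to a graph-reachability / $2$-colouring-consistency question solvable in $\Class L$; and the undirected self-loop-free restriction is what makes the non-monadic $Eaa$ case collapse to the monadic one here. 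For the $\Class{NL}$ row $E_1 e^* aa$, one extra existential first-order block turns the consistency question into a genuine directed-reachability problem, giving $\Class{NL}$ on the upper side; the lower bound comes from a direct $\le_{\mathrm{fo}}$-reduction of $\Lang{reachability}$ (or $2$-SAT) into a concrete formula of type $E_1 e a a$.

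The genuinely delicate part, and where I expect the main obstacle, is the two cases singled out in the abstract: containment in $\Class L$ for prefix $E^* \forall x \exists y$ and containment in $\Class{FO}$ for prefix $\exists M \forall x \exists y$ with monadic $M$, over basic self-loop-free graphs. The obstacle is that the GKS tractability proof for these prefixes is intricate and yields only $\Class P$; to push it down to $\Class L$ one must localize the computation — e.g.\ show the relevant Skolem functions can be chosen greedily/locally and that the induced dependency graph has bounded tree-width (feeding into logspace Courcelle), and to push the monadic case all the way to $\Class{FO}$ one must eliminate even the logspace bookkeeping, presumably by showing the satisfiability condition is equivalent to a first-order-checkable local pattern condition on the graph (no long ``bad'' substructure of a fixed finite list can occur). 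So the proof skeleton is: prove the syntactic monotonicity reductions (routine); dispatch the $\Class{FO}$ and $\Class{NP}$ rows (mostly citing known results plus easy reductions); prove the $\Class L$/$\Class{NL}$ containments for $Eaa$, $E_1e^*aa$, $E^* ae$ by the CSP/reachability/Courcelle arguments above; prove the matching $\le_{\mathrm{fo}}$-hardness results by explicit small formulas; and finally do the careful local-structure analysis for $E^* \forall x\exists y$ and $\exists M\forall x\exists y$ — this last analysis, reworking and improving the GKS polynomial-time argument, is the technical core I would budget the most effort for.
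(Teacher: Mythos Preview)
Your overall architecture matches the paper's: split into lower bounds (hard problems for the minimal prefixes) and upper bounds (containment for the maximal prefixes), use syntactic monotonicity, inherit the $\Class{NP}$ and most $\Class{FO}$/$\Class{NL}$ bounds from \cite{GottlobKS2004}, treat $Eaa$ over basic graphs as a special $\textsc{csp}$, and push the GKS argument for $E^*ae$ down to $\Class L$ via the logspace Courcelle theorem (which is \cite{ElberfeldJT2010}, not \cite{Immerman1998}).

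Two places where your sketch glosses over the actual content of the proof. First, for $\FaginDef{basic}(E^*ae)\subseteq\Class L$ the input graph is \emph{not} in general of bounded tree width, so ``bounded tree width plus logspace Courcelle'' is not the whole story. The paper follows GKS in reformulating the question as a \emph{saturation} problem for a fixed \emph{pattern graph}, and then splits according to whether the input contains a \emph{mixed self-saturating cycle}: if it does, either the graph is $(k,t)$-special (a bounded-tree-width core plus a bounded number of large equivalence classes, which one shrinks and then applies logspace Courcelle) or it is automatically saturable; if it does not, one reimplements in logspace the GKS routine that computes components, checks bounded tree width, and calls Courcelle on each piece. Your ``more hands-on argument'' placeholder is exactly this $(k,t)$-special/mixed-cycle machinery, and it is where most of the work sits. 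Second, for $\FaginDef{basic}(E_1ae)\subseteq\Class{FO}$ the paper does not find a single ``no bad local substructure'' criterion; instead, since a single monadic predicate yields a pattern graph on \emph{two} colours, it simply enumerates all such pattern graphs and for each one proves directly that saturability is first-order definable (often: ``all graphs with $\ge 2$ vertices'' or ``$B$ is a star'' or ``$B$ contains a length-$4$ alternating cycle'').

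On the lower-bound side you underestimate one item: showing $\Class{NL}$-hardness for $E_1eaa$ over basic graphs and for $E_1aa$ over undirected graphs is not a direct encoding of reachability. The paper reduces from $\Lang{unreach}$ via a nontrivial gadget (four copies of each vertex, shadow vertices marked by $S$ or self-loops, plus triangle gadgets to force the existential witness to be the intended marker vertex); getting this to work with only $\exists M\forall x\forall y$ and no extra $\exists z$ in the undirected case is precisely why that row differs from the basic-graph row.
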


Note that we always have $\FaginDefbar{undirected}(P) = \FaginDefbar{directed}(P)$, which
is not trivial, especially for the prefix $E_1 aa$: On undirected
graphs, using only two universally quantified variables, it seems
difficult to express ``non-symmetric'' properties, suggesting
$\FaginDef{undirected}(E_1aa) \subseteq \Class L$. However, using a gadget
construction, we will show that $\FaginDef{undirected}(E_1aa)$ contains
an $\Class{NL}$-complete problem.

As an application of the theorem, let us use it to prove
$\Lang{even-cycle} \in \Class L$, which is the problem of detecting
the presence of a cycle\footnote{A cycle in an undirected graph must,
  of course, have length at least $3$ and consist of distinct
  vertices.} of even length in basic graphs~$B$. The complexity of
this problem has been researched for a long time, see
\cite{HemaspaandraST2004} for a discussion and variants. The idea is
to consider the following \textsc{eso} formulas: 
\begin{align} \textstyle
  \phi_m =  \exists
  C_1 \cdots \exists C_m \forall x \exists y \Big( 
  E(x,y) \land \bigvee_{i=1}^m \big( C_i(x) \land C_{(i \bmod m)+1}(y) \land \bigwedge_{j\neq i}
  \neg C_j(x)\big)\Big).
  \label{eq:phim}
\end{align}
They ``describe'' the following situation: The basic graph can
be colored with $m$ different colors so that each vertex $x$ is
connected to a ``next'' vertex $y$ with the ``next'' color (with color $C_1$
following $C_m$). For $m > 2$, it is not hard to see that $B \models
\phi_m$ if, and only if, every connected component of~$B$ contains a 
cycle whose length is a multiple of~$m$. Since $\phi_m$ has quantifier
prefix $E^*ae$ and the graphs are basic, the second row concerning
basic graphs in Theorem~\ref{thm:main} tells
us that $B \models \phi_m$  can be decided in logarithmic space. The
following algorithm now shows $\Lang{even-cycle} \in \Class L$:
In a basic input graph~$B$, replace all edges by length-$2$ paths, then test
whether $C \models \phi_4$ holds for some connected component $C$ of~$B$.

\subsection{Technical Contributions}

The proofs of the statements $\FaginDef{basic}(E^*ae)
\subseteq \Class L$ and  $\FaginDef{basic}(E_1 ae) \subseteq \Class{FO}$  
require a sophisticated technical machinery. In both cases, our proofs
follow the ideas of a 35-page proof of $\FaginDef{basic}(E^*ae)
\subseteq \Class P$ in~\cite{GottlobKS2004}. The central observation
concerning the first statement is that the
\emph{algorithmically} most challenging part in the proof of
\cite{GottlobKS2004} is the application of Courcelle's Theorem~\cite{Courcelle1990a} to
graphs of bounded tree width. It has been shown in
\cite{ElberfeldJT2010} that there is a logspace version of Courcelle's
Theorem, which will allow us to lower the complexity from $\Class P$
to~$\Class L$ when the input graphs have bounded tree
width. For graphs of unbounded tree width, we will explain how the
other polynomial time procedures from the proof of  
\cite{GottlobKS2004} can be reimplemented in logarithmic space. 

To prove $\FaginDef{basic}(E_1 ae) \subseteq \Class{FO}$, we need to lower
the complexity of the involved algorithms further. The idea is to again follow
the ideas from \cite{GottlobKS2004} for $E_1^* ae$. When there is just a single
monadic predicate, certain algorithmic aspects of the proof can be
simplified so severely that they can 
actually be expressed in first-order logic. Note, however, that
already a second monadic predicate or 
a single binary predicate makes the complexity jump up to~$\Class L$,
that is, $\FaginDefbar{basic}(E_1E_1ae) = \FaginDefbar{basic}(E_2ae) = \Class L$.

Concerning the remaining claims from Theorem~\ref{thm:main} that are
not already proved in \cite{GottlobKS2004}, two cases are noteworthy:
Proving that $\FaginDef{basic}(E_1eaa)$ contains an $\Class{NL}$-complete
problem turns out to require a nontrivial gadget construction. Proving
$\FaginDef{basic}(E_1aa) \subseteq \Class L$ requires a reformulation of
the problems in $\FaginDef{basic}(E_1aa)$ as special constraint
satisfaction problems and showing that these lie in~$\Class L$.

\subsection{Related Work}

The study of the expressive power of syntactic fragments of logics
dates back decades; the decidability of prefix classes of first-order
logic, for instance, has been solved completely in a long sequence of
papers, see \cite{BoegerGG1997} for an overview. Interestingly, the
first-order Ackermann prefix class~$ae$ plays a key role in that
context and both $E_1ae$ and $E^* ae$ turn out to be the most
complicated cases in the context of the present paper, too. The
expressive power of monadic second-order logic (\textsc{mso} logic)
has also received a lot of attention, for instance in
\cite{Buechi1960,Courcelle1990a,ElberfeldGT2012}, but emphasis has been
on restricted structures rather than on syntactic fragments.  

Concerning syntactic fragments of \textsc{eso} logic, the two papers
most closely related to the present paper are \cite{EiterGG2000} by
Eiter, Gottlob, and Gurevich and \cite{GottlobKS2004} by Gottlob,
Kolaitis, and Schwentick. In the first paper, a similar kind of
classification is 
presented as in the present paper, only over \emph{strings} rather
than \emph{graphs}. It is shown there that for all prefix patterns $P$
the class $\FaginDef{strings}(P)$ is
either equal to $\Class{NP}$; is not equal to $\Class{NP}$ but
contains an $\Class{NP}$-complete problem; is equal to $\Class{REG}$;
or is a subclass of $\Class{FO}$. Interestingly, two classes of
special interest are $\FaginDef{strings}(E_1^*ae)$ and 
$\FaginDef{strings}(E_1^*aa)$, both of which are the minimal classes equal
to $\Class{REG}$ (by the results of Büchi~\cite{Buechi1960}). In
comparison, by the results of the present paper $\FaginDefbar{basic}(E_1^*ae) =
\FaginDefbar{basic}(E_1E_1ae) = \Class L$, while $\FaginDefbar{basic}(E_1ae) = \Class{FO}$,
and $\FaginDefbar{basic}(E_1^*aa) = 
\FaginDefbar{basic}(E_1E_1aa) = \Class{NP}$, while $\FaginDefbar{basic}(E_1aa) = \Class{L}$.

The present paper builds on the paper \cite{GottlobKS2004} by
Gottlob, Kolaitis, and Schwentick, which contains many of the upper
and lower bounds from Theorem~\ref{thm:main} for the class
$\Class{NP}$ as well as most of the 
combinatorial and graph-theoretic arguments needed to prove
$\FaginDef{basic}(E^*ae) \subseteq\nobreak \Class L$ and  $\FaginDef{basic}(E_1ae)
\subseteq \Class{FO}$. The paper misses, however, the finer
classification provided in our
Theorem~\ref{thm:main} and Remark~5.1 of \cite{GottlobKS2004}
expresses the unclear status of the exact complexity of
$\FaginDef{basic}(E^*ae)$ at the time of writing, which hinges on a
problem called $\Lang{satu}(P)$: ``Note also that for each 
$P$, $\Lang{satu}(P)$ is probably not a $\Class{PTIME}$-complete
set. [\dots] This is due to the check for bounded treewidth, which is
in $\Class{LOGCFL}$ (cf.\ Wanke [1994]) but not known to be in
$\Class{NL}$.'' The complexity of the check for bounded tree width was
settled only later, namely in a paper by Elberfeld, Jakoby, and the
author \cite{ElberfeldJT2010}, and shown to lie in $\Class L$. This
does not mean, however, that the proof of \cite{GottlobKS2004}
immediately yields $\FaginDef{basic}(E^*ae) \subseteq \Class L$ since the
application of Courcelle's Theorem is but one of several subprocedures
in the proof and since a generalization of tree width rather than
normal tree width is used.

\subsection{Organization of This Paper}

To prove Theorem~\ref{thm:main}, we need to prove the lower bounds
implicit in the first column of the theorem's table and the upper
bounds implicit in the second column. The lower bounds are proved
in Section~\ref{section:lower} by presenting reductions from 
complete problems for $\Class L$, $\Class{NL}$, or~$\Class{NP}$. The upper bounds are proved in
Section~\ref{section:upper}, where we prove, in order,
$\FaginDef{basic}(Eaa) \subseteq \Class L$, $\FaginDef{basic}(E^*ae) \subseteq
\Class L$, and $\FaginDef{basic}(E_1ae) \subseteq \Class{FO}$ using
arguments drawn from different areas.

\section{Lower Bounds: Hardness for L and NL}
\label{section:lower}

For each of the prefix patterns listed in the first column of
the table in Theorem~\ref{thm:main} we now show that their prefix
classes contain problems that are hard for $\Class L$, $\Class{NL}$,
or~$\Class{NP}$. The problems from which we reduce are  
listed in Table~\ref{tab:lower}. As can be seen, we only need to prove
new results for a minority of the classes since the $\Class{NP}$ cases
have already been settled in~\cite{GottlobKS2004}.

\begin{table}[ht]
  \caption{The lower bounds in Theorem~\ref{thm:main} are proved by
    showing that the problems in this table, which are complete for
    the classes in the claims, are either expressible in
    the fragment or are at least reducible to a problem
    expressible in the fragment. The problem $\Lang{unreach}$ asks
    whether there is \emph{no} path from $s$ to $t$ in a directed
    graph. The problems $A_2$ and $A_3$ are explained
    below.}\label{tab:lower}  
  \medskip
  
  \small%
  \begin{tabular}{lp{2cm}p{5cm}ll}
    \emph{Claim}&  & \emph{Hard problem} & \emph{Proved where} \\ \hline\\[-.9em]
    \emph{\rlap{Lower bounds for basic graphs}}\\
    $\FaginDefbar{basic}(E_1 E_1 ae) $ & $\supseteq \Class{L}$ & $A_3$ & Lemma~\ref{lem:lower-e1e1ae}\\    
    $\FaginDefbar{basic}(E_2 ae) $ & $\supseteq \Class{L}$ & $A_2$ & Lemma~\ref{lem:lower-e2ae}\\    
    $\FaginDefbar{basic}(E_1 aa) $ & $\supseteq \Class{L}$ & $\Lang{2-colorable}$ & \cite[Remark 3.1]{GottlobKS2004}\\
    $\FaginDefbar{basic}(E_1 eaa) $ & $\supseteq \Class{NL}$ & $\Lang{unreach}$ & Lemma~\ref{lem:lower-e1aaun}\\
    $\FaginDefbar{basic}(E_1 aaa) $ & $\supseteq \Class{NP}$ & $\Lang{positive-one-in-three-3sat}$ & \cite[Theorem 2.2]{GottlobKS2004}\\
    $\FaginDefbar{basic}(E_1E_1 aa) $ & $\supseteq \Class{NP}$ & $ \Lang{3-colorable}$ & \cite[Theorem 2.3]{GottlobKS2004}\\
    $\FaginDefbar{basic}(E_2 eaa) $ & $\supseteq \Class{NP}$ & $\Lang{3-colorable}$ & \cite[Theorem 2.4]{GottlobKS2004}\\
    $\FaginDefbar{basic}(E_1 eae) $ & $\supseteq \Class{NP}$ & $\Lang{3sat}$ & \cite[Theorem 2.5]{GottlobKS2004}\\
    $\FaginDefbar{basic}(E_1 aee) $ & $\supseteq \Class{NP}$ & $\Lang{not-all-equal-3sat}$ & \cite[Theorem 2.6]{GottlobKS2004}\\
    $\FaginDefbar{basic}(E_1 aea) $ & $\supseteq \Class{NP}$ & $\Lang{positive-one-in-three-3sat}$ & \cite[Theorem 2.7]{GottlobKS2004}\\
    $\FaginDefbar{basic}(E_1 aae) $ & $\supseteq \Class{NP}$ & $\Lang{positive-one-in-three-3sat}$ & \cite[Theorem 2.8]{GottlobKS2004}\\[.5em]
    \emph{\kern-1pt\rlap{Remaining lower bounds for undirected and,
        thereby, also for directed graphs}}\\
    $\FaginDefbar{undirected}(E_1 aa) $ & $\supseteq \Class{NL}$ & $\Lang{unreach}$ & Lemma~\ref{lem:lower-e1aaun}\\
    $\FaginDefbar{undirected}(E_1 ae) $ & $\supseteq \Class{NP}$ & $\Lang{3sat}$ & \cite[Theorem 2.1]{GottlobKS2004}
  \end{tabular}
\end{table}

The two special languages $A_2$ and $A_3$ in the table are defined as
follows: For $m\ge 2$ let $A_m = \{G \mid 
G$ is an undirected graph in which each connected component contains a
cycle whose length is a multiple of $m\}$. These languages are all 
hard for $\Class L$: In \cite[page 388, remarks for problem
\Lang{ufa}]{CookM1987} it is shown that the reachability problem for
graphs consisting of just two undirected trees is complete for~$\Class
L$. Since $\Class L$ is trivially closed under complement, testing
whether there is \emph{no} path from a vertex $u$ to a vertex $v$ in a
graph consisting of two trees is also complete for~$\Class
L$, which in turn is the same as asking whether $u$ and $v$ lie in
different trees. To reduce this question to $A_m$, attach cycles of
length $2m$ to both $u$ and~$v$. Then all (namely both) components of the 
resulting graph contain a cycle whose length is a multiple of~$m$
if, and only if, $u$ and $v$ lie in different components. (Using a
cycle length of $2m$ rather than $m$ ensures that also for $m=2$ we
attach a proper cycle.)

\begin{lemma}\label{lem:lower-e1e1ae}
  $A_3 \in \FaginDef{basic}(E_1E_1ae)$. 
\end{lemma}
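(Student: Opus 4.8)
The plan is to exhibit a single formula $\phi$ of prefix type $E_1E_1ae$ with $\Models{basic}(\phi)=A_3$, obtained from the formula $\phi_3$ of~\eqref{eq:phim} by squeezing its three monadic ``color'' predicates $C_1,C_2,C_3$ into two monadic predicates $C$ and $D$. Concretely, I would read the patterns $(\neg C(z),\neg D(z))$, $(C(z),\neg D(z))$, $(\neg C(z),D(z))$ as ``colors'' $0,1,2$ via the abbreviations $\alpha_0(z)=\neg C(z)\land\neg D(z)$, $\alpha_1(z)=C(z)\land\neg D(z)$, $\alpha_2(z)=\neg C(z)\land D(z)$, leaving $(C(z),D(z))$ as an illegal fourth pattern, and set
\[
  \phi \;:=\; \exists C\,\exists D\;\forall x\;\exists y\;\Bigl(E(x,y)\,\land\,\bigl(\alpha_0(x)\land\alpha_1(y)\;\lor\;\alpha_1(x)\land\alpha_2(y)\;\lor\;\alpha_2(x)\land\alpha_0(y)\bigr)\Bigr).
\]
This clearly has prefix type $E_1E_1ae$. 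The first thing to observe is that the illegal pattern needs no explicit forbidding clause: if a vertex $x$ satisfies $C(x)\land D(x)$, then all three disjuncts $\alpha_i(x)$ fail, so $\exists y(\cdots)$ fails at $x$; hence in every satisfying interpretation of $C,D$ each vertex carries a genuine color in $\{0,1,2\}$, and the matrix simply asserts that every vertex has a neighbor whose color is the cyclic successor (mod $3$) of its own.

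For the inclusion $\Models{basic}(\phi)\subseteq A_3$ I would run the standard argument for $\phi_m$ with $m>2$. Given a satisfying interpretation and an arbitrary vertex $v_0$, repeatedly pass to a successor-colored neighbor to obtain an infinite walk $v_0,v_1,v_2,\dots$ inside the connected component of $v_0$. Taking the first repeated vertex, say $v_i=v_j$ with $i<j$ minimal, the vertices $v_i,v_{i+1},\dots,v_{j-1}$ are distinct, consecutive ones are adjacent, and $v_{j-1}$ is adjacent to $v_i$; since the color increases by $1 \bmod 3$ along each edge, $v_i=v_j$ forces $j-i\equiv 0\pmod 3$, so $j-i$ is a positive multiple of $3$ and in particular at least $3$, making this a legitimate cycle of the basic graph. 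As $v_0$ was arbitrary, every component contains a cycle whose length is a multiple of~$3$.

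For the reverse inclusion $A_3\subseteq\Models{basic}(\phi)$ I would color each connected component $K$ of the input graph separately. Fix in $K$ a cycle $w_0,w_1,\dots,w_{3k-1},w_0$ of length $3k\ge 3$ and give $w_i$ color $i\bmod 3$; then perform a breadth-first search from the vertex set of this cycle and, for every remaining vertex $u$ of $K$, give $u$ the color $(\text{color of its BFS-parent})-1\bmod 3$. Then every cycle vertex $w_i$ has the successor-colored neighbor $w_{i+1\bmod 3k}$, and every other vertex $u$ has the successor-colored neighbor supplied by its BFS-parent; reading off $C$ and $D$ from these colors via $\alpha_0,\alpha_1,\alpha_2$ therefore witnesses $B\models\phi$. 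I do not expect any genuine obstacle in this proof; the only points that need (minor) care are the bookkeeping of the two-bit color encoding and the observation, used above, that the pattern $C(x)\land D(x)$ is already self-forbidding so that no extra universal quantifier is required to rule it out.
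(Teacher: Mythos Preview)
Your proposal is correct and follows essentially the same approach as the paper: encode the three colors of $\phi_3$ with two monadic predicates via a two-bit encoding. The paper's proof is terser---it simply cites the discussion after~\eqref{eq:phim} for $\Models{basic}(\phi_3)=A_3$ and then remarks that two bits suffice to encode three (even four) colors---whereas you additionally spell out both inclusions and note explicitly that the illegal fourth pattern is self-forbidding, which is a nice touch the paper leaves implicit.
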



\begin{proof}
  The discussion following the definition of the formula $\phi_3$ from
  equation~\eqref{eq:phim} shows that $\Models{basic}(\phi_3) = A_3$
  holds; but $\phi_3$ has the prefix $E_1E_1E_1ae$ rather than
  $E_1E_1ae$. However, from $\phi_3$ we can easily build an equivalent 
  formula $\phi_2'$ that only uses two monadic quantifiers: Instead of
  using one monadic relation for each of the three colors, we can
  encode three (even four) 
  colors using only two monadic relations: a vertex $x$ has the first
  color if $C_1(x) \land C_2(x)$, it has the second 
  color if $C_1(x) \land \neg C_2(x)$, the third if $\neg C_1(x) \land
  C_2(x)$, and the fourth if $\neg C_1(x) \land \neg C_2(x)$. 
\end{proof}

\begin{lemma}\label{lem:lower-e2ae}
  $A_2 \in \FaginDef{basic}(E_2ae)$. 
\end{lemma}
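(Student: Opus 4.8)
The plan is to write down one \textsc{eso} formula $\phi$ of prefix type $E_2ae$ (one binary existential second-order quantifier, then $\forall x\,\exists y$) with $\Models{basic}(\phi)=A_2$. The subtlety, compared with $A_3$, is that a plain $2$-colouring is too weak: the ``next colour'' pointers of $\phi_2$ can already be satisfied by a single edge traversed back and forth. The binary relation $R$ will therefore be used twice over: the diagonal $R(x,x)$ plays the role of a $2$-colouring $c$, while the off-diagonal part records an \emph{orientation} of each edge used by the witness function, and it is this orientation that rules out the degenerate back-and-forth. Concretely I would take
\[
  \phi = \exists R\,\forall x\,\exists y\,\Bigl( E(x,y) \land \bigl((R(x,x)\land\lnot R(y,y)) \lor (\lnot R(x,x)\land R(y,y))\bigr) \land R(x,y) \land \lnot R(y,x)\Bigr),
\]
and read the witness $y$ chosen for $x$ as the value $f(x)$ of a function. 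Then $G\models\phi$ amounts to: there are a $2$-colouring $c(x):=R(x,x)$, a function $f$ with $f(x)$ a neighbour of $x$ and $c(f(x))\neq c(x)$, and a bit pattern for the entries $R(x,y)$ with $x\neq y$ satisfying $R(x,f(x))=1$ and $R(f(x),x)=0$ for all $x$ --- and the last of these is possible exactly when $f$ has no $2$-cycle (no $u\neq v$ with $f(u)=v$ and $f(v)=u$). Since the only atoms in $\psi$ mention just $x$ and its witness, $G\models\phi$ iff every connected component of $G$ satisfies $\phi$ on its own, so it suffices to treat connected $G$.

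For ``$G\models\phi\Rightarrow G\in A_2$'': the functional digraph of $f$ contains, in each component, a cycle $z_0\to z_1\to\dots\to z_{k-1}\to z_0$ on distinct vertices; because each $z_{i+1}$ is a neighbour of $z_i$ this is a genuine cycle of $G$, it has even length since $c$ alternates along it, and $k\neq 2$ since $f$ has no $2$-cycle, hence $k\geq 4$. So each component contains an even cycle. For the converse, given an even cycle $D=v_0v_1\cdots v_{\ell-1}v_0$ in connected $G$, I would let $f$ wind around $D$ (that is, $f(v_i)=v_{(i+1)\bmod\ell}$) and, for $x\notin D$, step from $x$ to a neighbour whose BFS distance to $D$ is one smaller; no $2$-cycle can then arise (two vertices off $D$ cannot each strictly decrease the other's distance, and a vertex of $D$ maps into $D$). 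Colour $D$ by $c(v_i):=i\bmod 2$ --- consistent because $\ell$ is even --- and extend outward along the BFS layers by $c(x):=1-c(f(x))$, which makes $c(f(x))\neq c(x)$ everywhere, and in particular gives every vertex an opposite-coloured neighbour. Setting $R(x,x):=c(x)$, $R(x,f(x)):=1$, $R(f(x),x):=0$, and every remaining entry to $0$ is consistent precisely because $f$ has no $2$-cycle, and it makes $G\models\phi$. The degenerate connected graphs are covered by the same discussion: an isolated vertex fails $\phi$ (no witness), $K_2$ fails it (the witness function is forced to form a $2$-cycle), and a connected graph all of whose cycles are odd contains no even cycle of length $\geq 4$, so the unavoidable cycle of its functional digraph cannot be produced.

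I expect the converse direction to be the main obstacle: one has to build $c$, $f$, and the bit pattern of $R$ simultaneously so that every conjunct of $\psi$ holds at once. The delicate point is the coupling between the asymmetric conjunct $R(x,y)\land\lnot R(y,x)$ and the requirement ``no $2$-cycle in $f$'' --- they interlock only because a consistent $R$ exists iff $f$ avoids $2$-cycles --- together with the need to choose $f$ so that the colouring can still be completed layer by layer; the BFS-from-$D$ definition of $f$ is engineered for both purposes. Once the right $f$ is fixed, the remaining verifications (the component reduction, the even-length argument, and the degenerate cases) are routine.
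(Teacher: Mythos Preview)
Your proposal is correct and essentially identical to the paper's proof: your formula is the same as the paper's (the paper writes the colour-alternation conjunct as $F(x,x)\leftrightarrow\neg F(y,y)$ rather than the explicit XOR, and uses $F$ for your $R$), and both directions of the equivalence are argued the same way---the functional digraph of the witness function yields an even cycle of length $\geq 3$ via the asymmetry clause $R(x,y)\land\neg R(y,x)$, and conversely one winds $f$ around a given even cycle and extends it toward the cycle (you via BFS layers, the paper by ``repeatedly picking an uncoloured neighbour of a coloured vertex'').
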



\begin{proof}
  Let $\phi = \exists F \forall x \exists y \bigl(
  E(x,y) \land F(x,y) \land \neg F(y,x) \land (F(x,x) \leftrightarrow
  \neg F(y,y))\bigr)$. Then $\phi$ has prefix type $E_2ae$ and we claim    
  $A_2 = \Models{basic}(\phi)$. To see this, first assume that all
  components in a basic graph $B$ contain a cycle of even length. For
  a given component, color the vertices on the cycle alternatively
  white and black. For black vertices $x$, let $F(x,x)$ hold, while
  for white vertices $x$, let $\neg F(x,x)$ hold. Direct the cycle in
  some way and let $F(x,y)$ hold for any two consecutive vertices
  $x$ and $y$ (with respect to the orientation). For all vertices
  $x$ on the cycle we can now choose a vertex~$y$ (namely the next
  vertex on the cycle) such that the quantifier-free part of $\phi$ is
  true. To extend the construction to all vertices, repeatedly
  pick a vertex~$x$ not yet colored, but connected by an edge to an
  already colored vertex~$y$. Assign the opposite color of $y$ to $x$, set
  $F(x,x)$ or $\neg F(x,x)$ accordingly, and let $F(x,y)$ hold. The
  relation $F$ constructed in this way will now witness $B \models
  \phi$.

  For the other direction, let a relation $F$ be given that witnesses
  $B \models \phi$ and consider any component of~$B$. The formula
  $\phi$ chooses for each vertex~$x$ a vertex~$y$; let us call this
  vertex~$y$ the \emph{witness} $w(x)$
  of~$x$. Clearly, $\phi$ enforces that there is an edge between $x$
  and $w(x)$ in~$B$. Starting at any vertex $x$ in
  the component under 
  consideration, consider the sequence $x_1 = x$, $x_2 =
  w(x_1)$, $x_3 = w(x_2)$, and so
  on. Trivially, $x_i \neq x_{i+1}$ since there are no self-loops in
  a basic graph, but we also have $x_i \neq x_{i+2}$ since $\phi$
  enforces $\neg F(x_{i+1},x_i)$, namely for $x = x_i$, and also
  $F(x_{i+1},x_{i+2})$, namely for $x = x_{i+1}$. Now, since
  the graph is finite, the sequence $(x_1,x_2,\dots)$ must run into a
  cycle and, as we just saw, this cycle must have length at
  least~$3$. Finally, the cycle must have even length since 
  $F(x_i,x_i) \leftrightarrow \neg F(x_{i+1},x_{i+1})$ holds for all
  vertices $x_i$ on the cycle and, thus, exactly every second vertex on
  the cycle has a self-loop attached to it by~$F$.
\end{proof}

\begin{lemma}\label{lem:lower-e1aaun}
  $\Lang{unreach}$ reduces to a problem in $\FaginDef{basic}(E_1 eaa)$ and
  also to a problem in $\FaginDef{undirected}(E_1 aa)$. 
\end{lemma}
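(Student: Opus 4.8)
My plan is to reduce \Lang{unreach} to a problem Fagin-definable by an $E_1eaa$ formula over basic graphs, and separately to a problem Fagin-definable by an $E_1aa$ formula over undirected graphs, in both cases going through \Lang{2-colorable}-style ideas but with one extra subtlety: \Lang{unreach} is about directed reachability, so the first step is a routine first-order reduction from \Lang{unreach} on a directed graph $(D,s,t)$ to an \emph{undirected} instance. The standard trick is to build an undirected (basic) graph $B$ whose vertex set is $V(D) \cup \{s',t'\}$ together with gadget vertices, placing an undirected edge for each directed edge of $D$ and extra pendant structure at $s$ and $t$; ``no path from $s$ to $t$'' becomes ``$s$ and $t$ lie in different connected components of $B$.'' (For the direction of edges we may simply forget orientation, since $\Lang{unreach}$ for two-tree graphs is already the source of hardness via \cite{CookM1987}, so undirected connectivity suffices for the hardness transfer.) Then the real content is: express ``some connected component contains a certain local defect'' — or its negation — with the meagre quantifier budget $E_1eaa$ or $E_1aa$.

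For the $\FaginDef{basic}(E_1eaa)$ half, the idea is to characterize ``$s'$ and $t'$ are in the same component'' as: there is \emph{no} way to 2-color the graph so that $s'$ is white, $t'$ is black, and every edge is monochromatic inside its component — equivalently, there \emph{is} a monadic set $M$ that is a union of connected components separating the two distinguished vertices. Concretely, let $M$ be a guessed set of vertices; the formula $\exists M\,\exists z\,\forall x\,\forall y\,\psi$ should assert that $M$ contains exactly one of the two distinguished vertices, is closed under edges (if $x\in M$ and $E(x,y)$ then $y\in M$, and symmetrically), and the existential first-order $z$ picks out the distinguished vertex lying in $M$ to anchor the ``exactly one of $s',t'$'' condition without needing constants — since we have no constant symbols, the distinguished vertices must themselves be identified by their gadget structure (e.g.\ $s'$ is the unique vertex of a certain small degree/with a certain attached gadget, testable by the $\forall x\forall y$ part). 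The $e$-quantifier $z$ is precisely what lets us say ``there exists the separating-component witness vertex'' and then universally check closure; this is why $E_1aa$ alone does not obviously suffice for the basic-graph version but $E_1eaa$ does.

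For the $\FaginDefbar{undirected}(E_1aa)$ half we do not need the extra $e$: here we may use self-loops (the graph is undirected but \emph{not} required to be self-loop-free), and the key observation from the paper's own remark is that a single loop at a vertex acts like a free monadic marker. So in the undirected reduction we mark $s'$ with a self-loop and $t'$ \emph{without} one, and write $\exists M\,\forall x\,\forall y$ asserting: $M$ is edge-closed (both ways), $M$ contains every vertex carrying a self-loop that is reachable-consistently, and $M$ avoids $t'$ — more carefully, the formula says ``if $E(x,x)$ then $x\in M$; if $x\in M$ and $E(x,y)$ then $y\in M$; and $\lnot(E(y,y)$ holds somewhere forcing $t'\in M)$,'' arranged so that a satisfying $M$ exists iff the self-loop vertex and the loop-free distinguished vertex lie in different components, i.e.\ iff there is no path. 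The gadget making $t'$ ``recognizable'' by a two-variable formula is the delicate part: with only $x,y$ universally quantified we can check the self-loop and edge-adjacency patterns, but to pin down $t'$ uniquely we attach to it a tiny rigid gadget (a small tree of a characteristic shape) whose presence is expressible as a conjunction of $E(x,y)/\lnot E(x,y)$ assertions over two vertices.

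\textbf{Main obstacle.} The hard part will be the gadget design for the second claim: arranging, with only \emph{two} universally quantified first-order variables and \emph{no} constants or self-loops in the \emph{basic} case (and one extra $e$ to compensate), a local check that simultaneously (i) uniquely identifies the two distinguished vertices, (ii) is robust against the adversary's choice of $M$ so that no ``cheating'' set $M$ can satisfy the formula when $s$ \emph{does} reach $t$, and (iii) is genuinely expressible in the two-variable (or $eaa$) fragment rather than secretly needing a third variable to trace a path of length $\ge 2$ through a gadget. I expect to resolve (iii) by making every gadget edge incident to a distinguished vertex, so that all required adjacency checks stay within a two-element window, and to resolve (ii) by the edge-closure clauses, which force $M$ to be a union of whole components regardless of how it is guessed, so separation of $s'$ from $t'$ is the only way to satisfy the constraints — exactly mirroring why \Lang{2-colorable} (equivalently, a connectivity-parity condition) already gives $\Class{L}$-hardness in row three of the table, now upgraded to $\Class{NL}$ by threading directed reachability through the construction.
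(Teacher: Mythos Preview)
Your proposal has a fatal gap at the very first step. You write that ``for the direction of edges we may simply forget orientation, since $\Lang{unreach}$ for two-tree graphs is already the source of hardness via \cite{CookM1987}, so undirected connectivity suffices for the hardness transfer.'' This is wrong on two counts. First, the Cook--McKenzie result you cite concerns $\Class{L}$-completeness of undirected two-tree reachability; it is used elsewhere in the paper to establish $\Class{L}$-hardness of $A_2$ and $A_3$, not here. Second, and decisively, the target of Lemma~\ref{lem:lower-e1aaun} is $\Class{NL}$-hardness: $\Lang{unreach}$ for \emph{directed} graphs is $\Class{coNL}=\Class{NL}$-complete, and the moment you forget orientations you are reducing from undirected non-reachability, which is in $\Class{L}$ by Reingold. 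Your clauses ``$M$ is edge-closed (both ways)'' make this explicit: they force $M$ to be a union of \emph{undirected} components, so the property you express is at best $\Class{L}$-hard, not $\Class{NL}$-hard.

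The paper's construction is built precisely to avoid this collapse. It introduces, for each vertex $v$ of the directed input, four copies $v,\bar v,v',\bar v'$, with the primed ``shadow'' vertices forming a set $S$. The formula over the vocabulary $(E^2,S^1)$ imposes an exclusive-or constraint on $M$ along edges that stay inside $S$ or inside its complement, but only a \emph{one-directional} implication $M(x)\to M(y)$ on edges from a non-shadow $x$ to a shadow $y$. A directed edge $(u,v)$ of the input becomes the undirected edge $\{u,v'\}$; the asymmetric clause then pushes membership in $M$ from $u$ to $v'$ (and thence, via the square gadget, to $v$) but \emph{not} back from $v$ to $u$. Extra edges at $s$ and $t$ force $s\in M$ and $t\notin M$, so a satisfying $M$ exists exactly when $t$ is not reachable from $s$ in the directed graph. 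Only afterwards is the auxiliary predicate $S$ eliminated: by self-loops for $\FaginDef{undirected}(E_1aa)$, and by a fresh vertex $z^*$ (bound by the $\exists z$) whose neighbourhood encodes $S$ for $\FaginDef{basic}(E_1eaa)$, together with two triangle gadgets that make any choice $z\neq z^*$ unsatisfiable. In short, the existential first-order quantifier is not there to ``anchor a separating component'' as you suggest, but to simulate the monadic input predicate that carries the edge-direction information; without some device to encode direction, the reduction cannot reach $\Class{NL}$.
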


\begin{proof}
  Since undirected graphs with self-loops are essentially the same as
  basic graphs with an additional monadic relation (the self-loops
  allow us to ``mark'' vertices) and since a single existential
  first-order quantifier such as the one in $E_1 eaa$ also in some sense
  allows us to single out a set of vertices (those that are connected
  to it), we temporarily consider the vocabulary $(E^2, S^1)$, instead 
  of our usual vocabulary $(E^2)$. Logical structures are now graphs
  together with a set of vertices (modeled by~$S^1$). 
  Our objective is to reduce $\Lang{unreach}$ to $\Models{basic}(\phi)$
  where $\phi$ is an $(E^2,S^1)$-formula of the form $\exists M \forall x \forall
  y \, \psi$ for monadic $M$ and quantifier-free~$\psi$. Let
  $(G,s,t)$ be the input for the reduction, where $G = (V,E)$ is a
  directed graph and $s,t \in V$. We build a new, basic graph $B = (V_B,E_B)$ and
  a subset $S$ of $B$'s vertices as follows: For each vertex $v \in V$
  there will be four vertices in $V_B$, designated $v$, $\bar v$, $v'$,
  and $\bar v'$. The vertices $v'$ and $\bar v'$ will be called the
  \emph{shadow vertices} of $v$ and $\bar v$. The shadow vertices will
  form the set~$S$. We have the following undirected edges in~$B$, see
  Figure~\ref{fig:reduction} for an example of the construction:
  \begin{enumerate}
  \item 
    For every vertex $v \in V$ there are the two edges $\{v, \bar
    v\} \in E_B$ and   $\{v',\bar v'\} \in E_B$ and also the two edges
    $\{v,v'\} \in E_B$ and $\{\bar v, \bar v'\} \in E_B$.\footnote{Using $\{u,v\}$ to indicate an
      undirected edge between $u$ and~$v$ in a basic graph and, in
      not-so-slight abuse of notation, even writing $\{u,v\} \in E_B$, helps
      in distinguishing these edges from directed edges in~$E$. Formally,
      we mean of course $(u,v) \in E_B$ and $(v,u) \in E_B$; and $E_B
      \subseteq V \times V$ holds.} 
  \item For every edge $(u,v) \in E$ of the graph~$G$, there 
    is an edge $\{u,v'\} \in E_B$.
  \item There are edges $\{\bar s, s'\}\in E_B$ and $\{t,\bar t'\}\in E_B$.
  \end{enumerate}

  \begin{figure}[htb]
    \centering
    \begin{tikzpicture}
        
      \node (G) at (1.75,0.5) [left] {$G$\rlap{$\colon$}};
      
      \node (B) at (1.75,-1) [left] {$B$\rlap{$\colon$}};

      \draw [|->] (G) -- node[auto] {\footnotesize the first-order
        reduction} (B);
      
      \foreach \n [count=\i] in {s,a,b,c,t}
      {
        \node (\n o) [node] at (\i*2.25,0.5) {$\n$};
        
        \node (\n)      [node] at (\i*2.25,-1) {$\n$};
        \node (\n b)    [node] at (\i*2.25+1,-1) {$\bar \n$};
        \node (\n p)    [node] at (\i*2.25,-2) {$\n'$};
        \node (\n pb)   [node] at (\i*2.25+1,-2) {$\bar \n'$};
        
        \draw (\n) -- (\n b) -- (\n pb) -- (\n p) -- (\n);
      }
      
      \draw (sb) -- (sp)  (t) -- (tpb);
      
      \foreach \from/\to/\s in {s/a/-45, a/c/-30, b/t/-30}
      {
        \draw (\from o) edge[bend left=20, ->] (\to o);
        \draw (\from) edge[out=\s,in=150] (\to p);
      }
      \foreach \from/\to in {c/b, t/c}
      {
        \draw (\from o) edge[bend left=20, ->] (\to o);
        \draw (\from) edge[in=45,out=-140] (\to p);
      }
      
      \begin{scope}[on background layer]
        \fill [black!20,rounded corners=3mm]
        ([shift={(-1.3mm,-1.3mm)}]sp.south west) rectangle
        ([shift={(1.3mm,1.3mm)}]tpb.north east);
        
        \node [above left=6mm,yshift=-3mm] at (sp) (S) {$S$};
        \draw [black!20, very thick] (sp) -- (S);
      \end{scope}
    \end{tikzpicture}
    \caption{Example of the reduction from
      Lemma~\ref{lem:lower-e1aaun}. The directed graph $G$ on top is
      reduced to the basic graph at the bottom. The edges from the
      ``squares'' are the edges resulting from the first rule, the
      curved edges result from the  second rule, and the two diagonal
      edges result from the last rule.} 
    \label{fig:reduction}
  \end{figure}
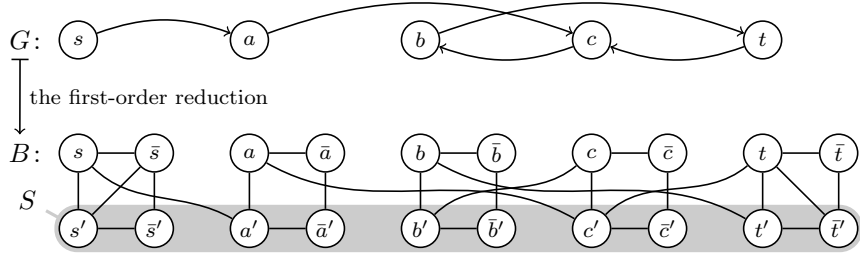

  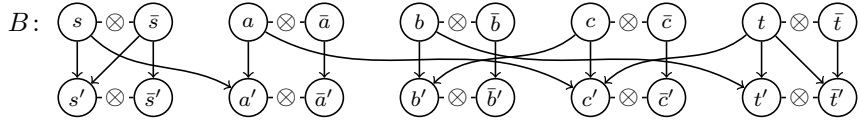
\begin{figure}[htb]
    \centering
    \begin{tikzpicture}
      \node (B) at (1.75,-1) [left] {$B$\rlap{$\colon$}};

      \foreach \n [count=\i] in {s,a,b,c,t}
      {
        \node (\n)      [node] at (\i*2.25,-1) {$\n$};
        \node (\n b)    [node] at (\i*2.25+1,-1) {$\bar \n$};
        \node (\n p)    [node] at (\i*2.25,-2) {$\n'$};
        \node (\n pb)   [node] at (\i*2.25+1,-2) {$\bar \n'$};
        
        \draw (\n) -- node [fill=white,inner sep=1pt] {$\otimes$} (\n b);
        \draw (\n b) edge [->] (\n pb);
        \draw (\n pb) -- node [fill=white,inner sep=1pt] {$\otimes$}(\n p);
        \draw (\n ) edge [->] (\n p);
      }
      
      \draw (sb) edge [->] (sp)  (t) edge [->] (tpb);
      
      \foreach \from/\to/\s in {s/a/-45, a/c/-30, b/t/-30}
      {
        \draw (\from) edge[out=\s,in=150,->] (\to p);
      }
      \foreach \from/\to in {c/b, t/c}
      {
        \draw (\from) edge[in=45,out=-140,->] (\to p);
      }
    \end{tikzpicture}
    \caption{Visualization of the requirements concerning which
      vertices may lie in~$M$  imposed by the formula~$\psi$: For
      edges with label $\otimes$ exactly one end must lie in~$M$ and
      for directed edges, if the tail of the edge lies in~$M$, the
      head must also lie in~$M$.} 
    \label{fig:restr}
  \end{figure}
  
  Let $\phi$ be the following formula:
  \begin{align*}
    \exists M \forall x \forall y \Bigl( E(x,y) \to \bigl(
    & \phantom{{}\land{}} \bigl((\phantom{\neg}S(x) \land
    \phantom{\neg}S(y)) \to (M(x) \leftrightarrow \neg M(y))\bigr)  \\[-2mm]
    & {}\land \bigl((\neg S(x) \land \neg S(y)) \to (M(x)
    \leftrightarrow \neg M(y))\bigr)  \\[-2mm]
    & {}\land \bigl((\neg S(x) \land \phantom{\neg}S(y)) \to (M(x) \to M(y))\bigr)\bigr)\Bigr).
  \end{align*}
  We make some observations concerning how $M$ can be chosen to
  make this formula true: First, we
  only impose restrictions on $M$ when there is an edge 
  between two vertices $x$ and~$y$ in~$B$ (by ``$E(x,y) \to$''). Next, 
  for the edges between vertices inside $S$ (``$S(x) \land S(y)$'') we
  require that exactly one of the two endpoints lies in~$M$. The same
  is true for edges between vertices outside~$S$. Thus, for a vertex
  $v$, we always have either $v \in M$ and $\bar v \notin M$ or
  $v\notin M$ and $\bar v \in M$. Similarly, we always have either $v' \in M$
  and $\bar v' \notin M$ or $v'\notin M$ and $\bar v' \in M$.
  The final restriction (``$\neg S(x) \land S(y)$'') concerns the
  diagonal and curved edges between a vertex and a shadow vertex: Here,
  we require that if $x \in M$ holds, we also have $y \in
  M$. Figure~\ref{fig:restr} visualizes these restrictions for the
  example from Figure~\ref{fig:reduction} by placing an
  $\otimes$-symbol on each edge where exactly one endpoint must be
  in~$M$ and by adding an arrow tip to all edges between a vertex and
  a   shadow vertex.
  
  For any vertex $v \in V$ consider the four vertices $v$, $\bar v$,
  $v'$, and $\bar v'$ in~$B$. Exactly one of $v$ and~$\bar v$ and
  exactly one of $v'$ and~$\bar v'$ must be elements of~$M$. If $v$ is
  an element of~$M$, then so must~$v'$; and if $\bar v$ is an element
  of~$M$, then so must~$\bar v'$. This means 
  that a vertex is an element of $M$ if, and only if, its shadow
  vertex is. Thus, for every vertex $v \in V$ we have $v,v'\in M$ and
  $\bar v,\bar v'\notin M$ or we have $v,v'\notin M$ and $\bar v,\bar
  v'\in M$. Now consider an edge $(x,y) \in E$. If we have $x \in M$,
  then we must 
  also have $y'\in M$ and thus, as we just saw, also $y \in M$. This
  means that when $x \in M$ holds, we also have $z \in M$ for all
  vertices $z$ reachable from $x$ in~$G$. Now, the edge $\{\bar s,
  s'\}$ in $B$ enforces that $s' \in M$ holds (since one of $s$ and
  $\bar s$ will lie in $M$ and the edge from this vertex to $s'$
  enforces that $s'\in M$ holds), which, in turn, enforces $s \in
  M$. The other way round, the edge $\{t, \bar t'\}$ enforces that $t
  \notin M$ holds since, otherwise, we would have both $t'\in M$ and
  also $\bar t'\in M$, which is forbidden.

  Our observations up to now can be summed up as follows: If there is
  some $M$ that makes $\phi$ true, there can be \emph{no} path from
  $s$ to $t$ in $G$ since we must have $s \in M$, $t \notin M$, and
  together with $s$ the set $M$ must contain all vertices reachable
  from~$s$. The other way round, suppose there is no path from $s$
  to~$t$ in $G$. Then the formula $\phi$ is true as the following
  choice for the set~$M$ shows:
  For each vertex $v \in G$, if $v$ is reachable from $s$ in $G$, let
  $v,v'\in M$ and $\bar v, \bar v' \notin M$; otherwise, let
  $v,v'\notin M$ and $\bar v, \bar v' \in M$. Clearly, we now have $s
  \in M$, $t \notin M$, and all requirements of the formula $\phi$ are
  met. This shows that the reduction is correct.
  
  Returning to the original statement of the lemma, we now reduce
  $\Models{basic}(\phi)$ to problems in $\FaginDef{basic}(E_1eaa)$ and
  $\FaginDef{undirected}(E_1aa)$ where there is no $S^1$-predicate any
  longer. For this, let $\psi$ be the quantifier-free part of
  $\phi$. We argue that there are $(E^2)$-formulas
  $\psi'$ and $\psi''$ such that $\Models{basic}(\phi)$ reduces to 
  $\Models{basic}( \exists M \exists z \forall x \forall y \, \psi')$
  and also to
  $\Models{undirected}( \exists M \forall x \forall y \, \psi'')$.  
  
  Switching over to undirected graphs is fairly easy: Construct 
  $\psi''$ from $\psi$ by replacing all occurrences of $S(x)$ by
  $E(x,x)$ and of $S(y)$ by $E(y,y)$. Clearly, we can reduce
  $\Models{basic}(\exists M \forall x \forall y\, \psi)$ to
  $\Models{undirected}(\exists M \forall x \forall y\, \psi'')$ by
  mapping a structure $(V,E,S)$ consisting of a basic graph $B =
  (V,E)$ and a subset $S \subseteq V$ to the undirected graph $(V, E
  \cup \{(x,x) \mid x \in S\})$.
  
  Next, we wish to replace basic graphs with a designated set $S$ by
  basic graphs without such a set, but where a special vertex $z$ can
  be bound by an existential first-order quantifier. Let $\psi'$ be
  obtained from $\psi$ by replacing all occurrences of $S(x)$ and
  $S(y)$ by $E(x,z)$ and $E(y,z)$, respectively, and adding the
  restriction $(x \neq z \land y \neq z) \to \dots$ at the
  beginning, resulting in the following formula $\psi'$:
  \begin{align*}
    (E(x,y) \land x \neq
    z \land y \neq z) \to \bigl(
    & \phantom{{}\land{}} \bigl((\phantom{\neg}E(x,z) \land
    \phantom{\neg}E(y,z)) \to (M(x) \leftrightarrow \neg M(y))\bigr)  \\
    & {}\land \bigl((\neg E(x,z) \land \neg E(y,z)) \to (M(x)
    \leftrightarrow \neg M(y))\bigr)  \\
    & {}\land \bigl((\neg E(x,z) \land \phantom{\neg}E(y,z)) \to (M(x) \to M(y))\bigr)\bigr).    
  \end{align*}
  We claim that 
  $\Models{basic}(\exists M \forall x \forall y\, \psi)$ 
  reduces to $\Models{basic}(\exists M \exists z \forall x \forall
  y\,\psi')$. The reduction would basically like to map a structure
  $(V, E, S)$ to a new basic graph~$B'$ as follows: $B'$ is identical
  to $B= (V,E)$, but
  has a new vertex $z^*$ and edges $\{x,z^*\}$ for all vertices $x \in
  S$. Then if $(V,E,S) \models \exists M \forall x 
  \forall y\, \psi$, we also have $B'\models \exists M \exists z
  \forall x \forall y\,\psi'$ since we can choose $z^*$ in~$\exists
  z$. However, the other direction is not clear: It could 
  happen that $B'\models \exists M \exists z \forall x \forall
  y\,\psi'$, but $z$ is chosen to be some vertex other than~$z^*$ and
  the tests $E(x,z)$, which \emph{should} check whether $S(x)$ used to
  hold in the original graph, test something different.

  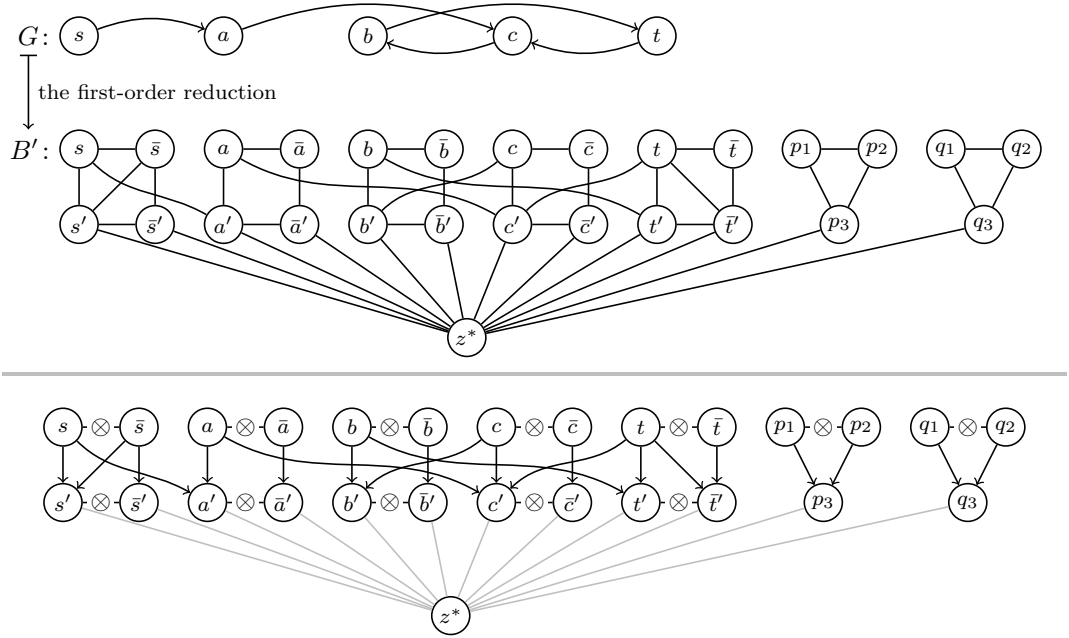
\begin{figure}[htb]
    \centering
        \centering
    \begin{tikzpicture}
        
      \node (G) at (1.5,0.5) [left] {$G$\rlap{$\colon$}};

      \node (B) at (1.5,-1) [left] {\phantom{$G$}\llap{$B'$}{\rlap{$\colon$}}};

      \draw [|->] (G) -- node[auto] {\footnotesize the first-order
        reduction} (B);
      
      \foreach \n [count=\i] in {s,a,b,c,t}
      {
        \node (\n o) [node] at (\i*1.9,0.5) {$\n$};
        
        \node (\n)      [node] at (\i*1.9,-1) {$\n$};
        \node (\n b)    [node] at (\i*1.9+1,-1) {$\bar \n$};
        \node (\n p)    [node] at (\i*1.9,-2) {$\n'$};
        \node (\n pb)   [node] at (\i*1.9+1,-2) {$\bar \n'$};
        
        \draw (\n) -- (\n b) -- (\n pb) -- (\n p) -- (\n);
      }
      
      \draw (sb) -- (sp)  (t) -- (tpb);
      
      \foreach \from/\to/\s in {s/a/-45, a/c/-30, b/t/-30}
      {
        \draw (\from o) edge[bend left=20, ->] (\to o);
        \draw (\from) edge[out=\s,in=150] (\to p);
      }
      \foreach \from/\to in {c/b, t/c}
      {
        \draw (\from o) edge[bend left=20, ->] (\to o);
        \draw (\from) edge[in=45,out=-140] (\to p);
      }
      
      \foreach \n [count=\i] in {p,q}
      {
        \begin{scope}[yshift=-2cm,xshift=\i*1.9cm+9.5cm]
          \node (\n1) [node] at (0,1) {$\n_1$};
          \node (\n2) [node] at (1,1) {$\n_2$};
          \node (\n3) [node] at (.5,0) {$\n_3$}; 
          \draw (\n1) --
          (\n2);
          \draw
          (\n1) edge (\n3) (\n2) edge (\n3);
        \end{scope}
      }

      \node (z) [node] at (7,-3.5) {$z^*$};
      
      \foreach \n in {sp, spb, ap, apb, bp, bpb,
          cp, cpb, tp, tpb, p3, q3} { \draw (z) -- (\n); }
    \end{tikzpicture}
    {
      \medskip
      
      \color{black!25}
      \hrule height1pt
      \bigskip
      
    }
    \begin{tikzpicture}
        
      \foreach \n [count=\i] in {s,a,b,c,t}
      {
        \node (\n)      [node] at (\i*1.9,-1) {$\n$};
        \node (\n b)    [node] at (\i*1.9+1,-1) {$\bar \n$};
        \node (\n p)    [node] at (\i*1.9,-2) {$\n'$};
        \node (\n pb)   [node] at (\i*1.9+1,-2) {$\bar \n'$};
        
        \draw (\n) -- node [fill=white,inner sep=1pt] {$\otimes$} (\n b);
        \draw (\n b) edge [->] (\n pb);
        \draw (\n pb) -- node [fill=white,inner sep=1pt] {$\otimes$}(\n p);
        \draw (\n ) edge [->] (\n p);
      }
      
      \draw (sb)  edge [->] (sp)  (t) edge [->] (tpb);
      
      \foreach \from/\to/\s in {s/a/-45, a/c/-30, b/t/-30}
      {
        \draw (\from) edge[out=\s,in=150,->] (\to p);
      }
      \foreach \from/\to in {c/b, t/c}
      {
        \draw (\from) edge[in=45,out=-140,->] (\to p);
      }
      
      \foreach \n [count=\i] in {p,q}
      {
        \begin{scope}[yshift=-2cm,xshift=\i*1.9cm+9.5cm]
          \node (\n1) [node] at (0,1) {$\n_1$};
          \node (\n2) [node] at (1,1) {$\n_2$};
          \node (\n3) [node] at (.5,0) {$\n_3$}; 
          \draw (\n1) -- node [fill=white,inner sep=1pt] {$\otimes$} (\n2);
          \draw [->] (\n1) edge (\n3) (\n2) edge (\n3);
        \end{scope}
      }

      \node (z) [node] at (7,-3.5) {$z^*$};
      
      \foreach \n in  {sp, spb, ap, apb, bp, bpb,
          cp, cpb, tp, tpb, p3, q3} { \draw[black!25] (z) -- (\n); }
    \end{tikzpicture}
    \caption{
      Example of the reduction from $\Lang{unreach}$ to
      $\Models{basic}(\exists M \exists z \forall x \forall y\, \psi')$ in the
      upper part. The lower part visualizes the conditions imposed
      by the formula $\psi'$ when $z$ is chosen to be~$z^*$ (nothing
      is required concerning the gray lines). Note that the conditions
      on the triangles can easily be satisfied. On the other hand, if
      any vertex other than $z^*$ is chosen, the conditions in at
      least one of the triangles will change to three exclusive ors
      and no solution exists.}
    \label{fig:triangles}
  \end{figure}

  To fix this last problem, we modify the construction of $B'$
  slightly: We add two triangles $p_1$, $p_2$, $p_3$ and $q_1$, $q_2$,
  $q_3$ to $B'$ and additionally the two edges $\{z^*,p_3\}$ and
  $\{z^*,q_3\}$, see Figure~\ref{fig:triangles} for an example. Now,
  if $z$ is chosen as the vertex $z^*$, the edges 
  $\{z^*,p_3\}$ and $\{z^*,q_3\}$ mark $p_3$ and $q_3$ as shadow
  vertices and the conditions imposed by $\psi'$ on the triangle can
  be visualized similarly to Figure~\ref{fig:restr} as shown also in
  Figure~\ref{fig:triangles}. 
  Clearly, the conditions are satisfied when $p_2,p_3,q_2,q_3 \in M$
  and $p_1,q_1 \notin M$.  

  Now suppose that $z$ is not~$z^*$. We claim
  that the formula cannot be true in this case: Whatever vertex we
  choose, the vertices of at least one of 
  the triangles are not connected to the chosen vertex. But, then,
  $\psi'$ enforces that for each edge of the triangle exactly one end
  point lies in~$M$, which is not possible in a triangle, yielding a
  contradiction. 
\end{proof}

\section{Upper Bounds: Containment in FO and L}

\label{section:upper}

The second column of the table in Theorem~\ref{thm:main} lists upper
bounds that we address in the present
section. Table~\ref{tab:upper} shows the order in which we tackle them.

\begin{table}[ht]
  \caption{The upper bounds from Theorem~\ref{thm:main} and where they
    are proved. Missing upper bounds for basic and undirected graphs
    follow from the bounds for directed graphs on the right.}\label{tab:upper}    
  \small%
  \begin{tabular}{lp{1cm}l}
    \rlap{\emph{Claims for basic graphs}}&  & \emph{Proved where} \\ \hline
    $\FaginDef{basic}(E_1ae) $ & $\subseteq \Class{FO}$ & Section~\ref{sec:e1ae} \\
    $\FaginDef{basic}(E^*ae) $ & $\subseteq \Class{L}$ & Section~\ref{sec:estarae} \\
    $\FaginDef{basic}(Eaa) $ & $\subseteq \Class{L}$ &
    Section~\ref{sec:eaa}\\ \\ \\
  \end{tabular}\hfill
  \begin{tabular}{lp{1cm}l}
    \rlap{\emph{Claims for directed graphs}}&  & \emph{Proved where} \\ \hline
    $\FaginDef{directed}((ae)^*) $ & $\subseteq \Class{FO}$ & trivial \\
    $\FaginDef{directed}(E^*e^*a) $ & $\subseteq \Class{FO}$ & \cite[Theorem 3.1]{GottlobKS2004}\\
    $\FaginDef{directed}(E_1e^*aa) $ & $\subseteq \Class{NL}$ & \cite[Theorem 3.2]{GottlobKS2004} \\
    $\FaginDef{directed}(Eaa) $ & $\subseteq \Class{NL}$ & \cite[Theorem 3.4]{GottlobKS2004} \\
    $\FaginDef{directed}(E^*(ae)^*) $ & $\subseteq \Class{NP}$ & Fagin's Theorem
  \end{tabular}

  \vskip-1em
\end{table}

\subsection{\emph{Eaa} Over Basic Graphs:\\ Reformulation as Constraint Satisfaction}
\label{sec:eaa}

Our first upper bound, $\FaginDef{basic}(Eaa) \subseteq \Class{L}$, is
proved in two steps: First, we reformulate the problems in 
$\FaginDef{basic}(Eaa)$ as special constraint satisfaction
problems (\textsc{csp}s) in  Lemma~\ref{lem:ascard}. Second, we show
that these \textsc{csp}s lie in~$\Class L$ in Lemma~\ref{lem:cspinl}.  

It will not be necessary to formally introduce the whole theory of
constraint satisfaction problems since we will only 
encounter one very specialized form of them. Furthermore, our
\textsc{csp}s do not quite fit into the standard framework and major
results on \textsc{csp}s like Schaefer's Theorem~\cite{Schaefer1978}
or the refined version thereof~\cite{AllenderBISV2009} do not settle
the complexity of these special \textsc{csp}s. Nevertheless, we will
need some basic terminology: In a binary \textsc{csp}, we are given a
universe~$U$ and a set of \emph{constraints}, each of which picks a
number of elements from~$U$ and specifies one or more possibilities
concerning which of these elements may lie in a \emph{solution $X
  \subseteq U$}. A \emph{constraint language} specifies the types of
constraints that we are allowed to use. For instance the constraint language for
$\Lang{3sat}$ specifies that constraints (which are clauses) must rule
out one of the eight possibilities concerning which of the elements
(which are the variables) are in~$X$ (are set to $\mathit{true}$). We
need to deviate from this framework in one important way: we require
that there is a constraint for \emph{every} pair of 
distinct elements of~$U$, not just for some of them. Unfortunately,
this deviation inhibits our applying the classification
of the complexity of \textsc{csp}s from~\cite{AllenderBISV2009}; more
precisely, the smallest standard \textsc{csp} 
classes that are able to express the special \textsc{csp}s we are
interested in are known to contain $\Class{NL}$-complete languages --
while we wish to prove containment in~$\Class L$.

For sets $C, D \subseteq \{0,1,2\}$ we define a
\emph{$\{C,D\}$-constraint satisfaction problem $P$ on a universe $U$}
to be a mapping that maps each size-$2$ subset $\{x,y\} \subseteq U$ to
either $C$ or~$D$. A \emph{solution for~$P$} is a subset $X \subseteq
U$ such that for all size-2 subsets $\{x,y\} \subseteq U$ we have
$|\{x,y\} \cap X| \in P(\{x,y\})$. In other words, $P$ fixes for
every pair of two vertices $x$ or $y$ one of two possible constraints
concerning \emph{how many} elements of $\{x,y\}$ may lie in~$X$. Let
$\Lang{csp}\{C,D\} = \{ P \mid P$ is a $\{C,D\}$-\textsc{csp} that has
a solution$\}$. As an example, $\Lang{csp}\bigl\{\{1\},\{0,1,2\}\bigr\}$ is
essentially the same as the problem $\Lang{2-colorable} =
\Lang{bipartite}$ since a $\{1\}$-constraint enforces that exactly one
of two vertices must lie in~$X$ (and, hence, corresponds to an edge),
while a $\{0,1,2\}$-constraint has no effect (and, hence, corresponds
to no edge being present). In Lemma~\ref{lem:cspinl} we show that all
$\Lang{csp}\{C,D\}$ lie in~$\Class L$, which is fortunate since we
reduce to them:

\begin{lemma}\label{lem:ascard}
  For every $Eaa$-formula $\phi$ there are sets $C,D \subseteq
  \{0,1,2\}$ such that the set $\Models{basic}(\phi)$ reduces to
  $\Lang{csp}\{C,D\}$. 
\end{lemma}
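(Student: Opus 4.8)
The plan is to analyze an arbitrary $Eaa$-formula $\phi = \exists R\,\forall x\,\forall y\,\psi$ over basic graphs and show that its quantifier-free body $\psi$, once we fix the input graph $B=(V,E)$, imposes on the second-order relation $R$ precisely a collection of cardinality constraints on pairs, one constraint per pair. The key point is that we may assume $R$ is \emph{monadic}: a $k$-ary $R$ applied only to the two variables $x,y$ effectively sees only the $2^k$ combinations $R(x,x),R(x,y),R(y,x),R(y,y)$ (and, on basic graphs, self-loop-freeness removes nothing here but symmetry of $E$ will help), so $R$ behaves like a bounded number of monadic predicates; by a standard re-coding (as already used for $\phi_3$ in Lemma~\ref{lem:lower-e1e1ae}) a bounded number of monadic predicates can be folded together, and ultimately only whether a vertex lies in the ``solution set'' $X\subseteq V$ matters. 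Formally I would first push $\phi$ into the normal form $\exists X\,\forall x\,\forall y\,\chi(x,y)$ with $X$ monadic and $\chi$ quantifier-free over the vocabulary $(E^2,X^1)$.

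Next, I would split $\chi(x,y)$ according to the four Boolean possibilities for the pair $\bigl(E(x,y),x=y\bigr)$ (using that on basic graphs $E$ is symmetric and $E(x,x)$ is always false, only the cases ``$x=y$'', ``$E(x,y)$ and $x\neq y$'', and ``no edge and $x\neq y$'' survive). The $x=y$ case forces a unary condition on each vertex: $\chi(x,x)$ is a Boolean function of $X(x)$, so it either holds for all $X$-values, only for $X(x)=\mathit{true}$, or only for $X(x)=\mathit{false}$, or for none (in which case $\Models{basic}(\phi)$ is trivially empty and reduces to anything). A unary ``$x\in X$'' or ``$x\notin X$'' requirement can be folded into pair constraints by noting that forcing $x\in X$ is the same as, for every $y$, demanding $|\{x,y\}\cap X|\in\{1,2\}$ combined with whatever the edge/non-edge pair constraint already is — more cleanly, I would first observe that unary constraints can be simulated by padding $U$ with one extra ``always-in'' dummy vertex $d$ and using $\{1,2\}$-type constraints on $\{d,\cdot\}$, but since our CSPs only allow \emph{two} constraint symbols $C,D$ I instead absorb unary requirements directly: the pair constraint assigned to $\{x,y\}$ is refined to the intersection of the ``edge vs.\ non-edge'' constraint with the unary requirements on $x$ and on $y$. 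The remaining two cases, ``$E(x,y)$, $x\neq y$'' and ``no edge, $x\neq y$'', each give a symmetric Boolean function of $\bigl(X(x),X(y)\bigr)$ — symmetric because $\chi$ may be taken symmetric in $x,y$ after conjoining $\chi(x,y)\land\chi(y,x)$ — and a symmetric subset of $\{(\mathit{t},\mathit{t}),(\mathit{t},\mathit{f}),(\mathit{f},\mathit{t}),(\mathit{f},\mathit{f})\}$ is exactly a subset of $\{0,1,2\}$ counting how many of $x,y$ lie in $X$. Call these $C$ (for edge-pairs) and $D$ (for non-edge-pairs).

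The reduction itself is then immediate and first-order: given a basic graph $B$, output the $\{C,D\}$-CSP on universe $U=V$ whose pair function assigns $C$ to $\{x,y\}$ when $E(x,y)$ and $D$ otherwise — then intersect each assigned set with the (at most two) unary requirements coming from the $x=y$ case, which only shrinks $C$ or $D$ to one of finitely many subsets of $\{0,1,2\}$, and by splitting into finitely many cases on which subsets arise we get finitely many candidate pairs $(C,D)$, hence a first-order (indeed constant) case distinction. A solution $X$ of the resulting CSP is, by construction, exactly a monadic relation witnessing $B\models\phi$, and conversely. The main obstacle I expect is bookkeeping rather than depth: carefully verifying that a non-monadic $R$ can be replaced by a monadic $X$ without changing the models — one must check that $\forall x\forall y\,\psi$ never ``uses'' $R$ at argument tuples other than those built from $x,y$, which is true syntactically, and that the finitely many Boolean cases for $R$'s values on $\{x,y\}^2$ collapse correctly under the symmetry $E(x,y)\leftrightarrow E(y,x)$ and the absence of self-loops. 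The only genuinely delicate point is handling the unary ($x=y$) constraints within the two-symbol $\{C,D\}$ format without introducing a third constraint type; absorbing them into the edge/non-edge constraints as above resolves this, at the cost of enlarging the (still finite) list of possible $(C,D)$ pairs one must consider.
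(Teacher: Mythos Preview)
Your overall approach mirrors the paper's: reduce to a single monadic $M$, split on edge versus non-edge, use symmetry in $x,y$ to read off cardinality constraints $C$ and~$D$, and map $B$ to the obvious $\{C,D\}$-\textsc{csp}. The paper simply invokes \cite[Lemma~3.3]{GottlobKS2004} for the first step; your treatment of the diagonal $x=y$ case (absorbing a unary requirement by intersecting $C,D$ with $\{0\}$ or $\{2\}$) is in fact more explicit than the paper's, which tacitly assumes $\psi$ holds trivially when $x=y$.

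There is, however, a real gap in your justification of the monadic step. You claim that a $k$-ary $R$ ``behaves like a bounded number of monadic predicates'' which can then be ``folded together'' as in Lemma~\ref{lem:lower-e1e1ae}. Neither part holds. The off-diagonal atoms such as $R(x,y,\dots)$ are not monadic in any single vertex; and several genuine monadic predicates \emph{cannot} be collapsed into one in this setting --- recall $\FaginDefbar{basic}(E_1E_1aa)=\Class{NP}$ while $\FaginDefbar{basic}(E_1aa)=\Class L$, so the binary-encoding trick (which in Lemma~\ref{lem:lower-e1e1ae} reduced three colours to \emph{two} predicates, not one) is of no help here. The correct reason a single monadic $M$ suffices is specific to there being just \emph{one} second-order variable: fix the diagonal bit $M(v):=R(v,\dots,v)$ of each vertex; then the remaining off-diagonal values of $R$ on tuples from $\{u,v\}^k$ occur only in the constraint $\psi(u,v)\land\psi(v,u)$ for that particular pair $\{u,v\}$ and in no other, so an optimal choice can be made independently for each pair and ``there exist off-diagonal bits'' becomes a finite disjunction absorbed into the quantifier-free body. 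This is exactly the content of \cite[Lemma~3.3]{GottlobKS2004}. With that fix, the rest of your proposal goes through and matches the paper's argument.
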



\begin{proof}
  We may assume that $\phi$ has the form $\exists M \forall x \forall
  y \, \psi$ with a \emph{monadic} quantifier~$M$ since \cite[Lemma
  3.3]{GottlobKS2004} states that every $Eaa$-formula is equivalent to
  an $E_1aa$-formula. Since the graphs we consider are basic, any
  occurrence of $E(x,x)$ or $E(y,y)$ in $\psi$ can be replaced by just
  $\mathit{false}$. Similarly, $E(y,x)$ can be replaced by 
  $E(x,y)$. Finally, we may assume that $\psi \to x \neq y$ holds as
  well as $\psi(x,y) \leftrightarrow \psi(y,x)$.

  Rewrite $\psi$ equivalently as $x \neq y \to \bigl((E(x,y) \to
  \gamma) \land (\neg E(x,y) \to \delta)\bigr)$ for formulas $\gamma$
  and $\delta$ that are in disjunctive normal form and contain only $M(x)$,
  $M(y)$, $\neg M(x)$, or~$\neg M(y)$ in their terms. Since our graphs are
  basic and the   roles of $x$ and $y$ can be exchanged arbitrarily,
  $\gamma$ and $\delta$ can only make statements about \emph{how many}
  elements of the set $\{x,y\}$ lie in~$M$. For instance, if $\gamma$
  is just $M(x)$, then $\forall x \forall y (E(x,y) \to M(x))$ is
  actually equivalent to $\forall x \forall y (E(x,y) \to (M(x) \land
  M(y)))$ and this imposes the constraint $|\{x,y\} \cap M| = 2$. As
  further examples, $\gamma = (M(x) \land \neg M(y)) \lor (\neg M(x)
  \land M(y))$ imposes the constraint $|\{x,y\} \cap M| = 1$; and
  $\gamma = M(x) \lor M(y)$ imposes the constraint $|\{x,y\}
  \cap M| \in \{1,2\}$. Let $C$ be the cardinality constraints imposed
  by~$\gamma$ and let $D$ be the cardinality constraints imposed
  by~$\delta$ (note that both $C$ and $D$ may be equal to $\emptyset$
  or $\{0,1,2\}$). Then $\Models{basic}(\phi)$ clearly reduces to 
  $\Lang{csp}\{C,D\}$ by mapping each basic graph $B$ to the following
  $\{C,D\}$-\textsc{csp}~$P$: For every edge $\{x,y\}$ of~$B$, let
  $P(\{x,y\}) = C$; and let $P(\{x,y\}) = D$ when 
  there is no edge $\{x,y\}$ in~$B$. 
\end{proof}

\begin{lemma}\label{lem:cspinl}
  Let $C,D \subseteq \{0,1,2\}$. Then $\Lang{csp}\{C,D\} \in \Class L$.
\end{lemma}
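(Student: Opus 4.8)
The plan is to run a case distinction on the pair $(C,D)$: there are only finitely many such pairs, and I will give a logspace decision procedure in each case. Many pairs are trivial. If $0\in C\cap D$, then $\emptyset$ is a solution of \emph{every} $\{C,D\}$-\textsc{csp}, so $\Lang{csp}\{C,D\}$ contains all inputs; dually, $2\in C\cap D$ makes $U$ a solution of every input. If $C=D$, or if $C$ or $D$ equals $\emptyset$, the problem collapses to testing solvability of a \textsc{csp} all of whose constraints are one fixed set (after, in the $\emptyset$ case, a first-order check that no pair carries the impossible constraint), and that answer depends only on the constraint and on $|U|$. If $C$ or $D$ equals $\{0,1,2\}$, that constraint is vacuous and only the other type matters; the previous remarks dispose of every such pair except $\Lang{csp}\{\{1\},\{0,1,2\}\}$, where a solution is exactly a proper $2$-coloring of the graph of $\{1\}$-pairs, so this problem is essentially $\Lang{2-colorable}$, which lies in~$\Class L$. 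I may therefore assume from now on that $C\neq D$ are proper nonempty subsets of $\{0,1,2\}$ with $0\notin C\cap D$ and $2\notin C\cap D$.

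The next group of cases is handled by propagation. Call a constraint $S$ \emph{rigid} if $S\in\{\{0\},\{1\},\{2\},\{0,2\}\}$; the point is that for a rigid pair $\{x,y\}$, fixing whether $x$ lies in the prospective solution $X$ forces whether $y$ does (to ``out'', ``out'', ``in'', or ``the same as $x$'' when $S$ is $\{0\}$, $\{1\}$, $\{2\}$, $\{0,2\}$, respectively), or else yields an immediate contradiction. Hence, when \emph{both} $C$ and $D$ are rigid, the following works and is implementable in logarithmic space (in fact in $\Class{FO}$): if $|U|\le 1$ accept; otherwise fix one element $v$, and for each of the two guesses for $v$'s membership in $X$, propagate the forced status along the constraint that $v$ shares with each other element and then verify every pair constraint against the resulting set; accept iff some branch succeeds. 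Correctness is immediate, since the status of $v$ determines $X$ completely.

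The only pairs not yet covered are those in which $C$ or $D$ is one of the two \emph{monotone} constraints $\{0,1\}$ (``at most one'') or $\{1,2\}$ (``at least one''). Replacing a prospective solution $X$ by $U\setminus X$ turns a $\{C,D\}$-\textsc{csp} into the $\{\overline C,\overline D\}$-\textsc{csp} whose constraints are obtained by applying $i\mapsto 2-i$, and this transformation is first-order computable; since $\overline{\{1,2\}}=\{0,1\}$ I may assume $C=\{0,1\}$, and then $0\notin C\cap D$ forces $D\in\{\{1\},\{2\},\{1,2\}\}$. For $D=\{1\}$ every solution has at most one element (two elements of $X$ would violate whichever constraint their pair carries), so a solution exists iff there is no $\{1\}$-pair or some single vertex is incident to all $\{1\}$-pairs --- a first-order condition. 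For $D=\{2\}$, every vertex incident to a $\{2\}$-pair must lie in $X$, so a solution exists iff there is no $\{2\}$-pair or the vertices incident to $\{2\}$-pairs are pairwise joined by $\{2\}$-pairs --- again first order.

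The subcase $D=\{1,2\}$, i.e.\ $\Lang{csp}\{\{0,1\},\{1,2\}\}$, is the one I expect to be the real obstacle. Writing $G$ for the graph formed by the $\{1,2\}$-pairs, a set $X$ is a solution iff $X$ is a clique of $G$ and $U\setminus X$ is an independent set of $G$; that is, iff $G$ is a \emph{split graph}. I would then invoke the F\"oldes--Hammer characterization, by which $G$ is split exactly when it contains no induced $C_4$, $C_5$, or $2K_2$, so the property is even first-order expressible; alternatively the Hammer--Simeone degree-sequence criterion recognizes split graphs in logarithmic space. Either way $\Lang{csp}\{\{0,1\},\{1,2\}\}\in\Class L$, and together with the cases above this gives $\Lang{csp}\{C,D\}\in\Class L$ for all $C,D\subseteq\{0,1,2\}$.
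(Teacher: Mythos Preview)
Your proof is correct and follows essentially the same approach as the paper: a finite case analysis that reduces every pair $\{C,D\}$ either to a trivial answer, to bipartiteness testing (for $\{\{1\},\{0,1,2\}\}$, via Reingold), or to split-graph recognition (for $\{\{0,1\},\{1,2\}\}$, via F\"oldes--Hammer). The organization differs: the paper enumerates cases by the value of~$C$, while you group the ``rigid'' constraints $\{0\},\{1\},\{2\},\{0,2\}$ and dispatch all rigid--rigid pairs at once by propagation from a single vertex, which is a tidier packaging of what the paper does case by case (e.g.\ its ``$\bar B$ must be complete bipartite'' argument for $\{\{0,2\},\{1\}\}$ is exactly your propagation). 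One small slip: in your list of forced values, the constraint $\{1\}$ forces $y$ to the \emph{opposite} of~$x$, not to ``out''; your algorithm is unaffected since you propagate and then verify.
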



\begin{proof}
  Our aim is to explain, for each choice of $C$ and~$D$, how we
  can check in logarithmic space whether a $\{C,D\}$-\textsc{csp} $P$
  has a solution $X \subseteq U$. For a given input $P$, let $B$ be
  the basic graph whose vertex set is~$U$ and which has an edge
  $\{x,y\}$ when $P(\{x,y\}) = C$. Let $\bar B$ be the complement
  graph of~$B$ (exchange edges and non-edges, but do not add
  self-loops). The edges of~$B$ tell us where there are
  ``$C$-constraints'' in~$P$ and the edges of~$\bar B$ where there are
  ``$D$-constraints'' (for $C = D$, the graph $\bar B$ is empty,
  however). We may clearly assume that $B$ has at least three
  vertices.  
  
  We start with some easy observations: If $B$ is the complete graph,
  then there is always a solution if $0 \in C$ (choose $X =
  \emptyset$) or $2 \in C$ (choose $X=U$); there is obviously no
  solution for $C = \emptyset$; and also none for $C = \{1\}$ since
  the graph contains a triangle while $C = \{1\}$ enforces that $B$ must
  be bipartite. We can handle $\bar B$ being the complete graph
  similarly. Thus, we may (1) assume that both $B$ and $\bar B$ contain at
  least one edge. This in turn handles (2) $C = \emptyset$, where there
  can be no solution, and also none for $D = \emptyset$. On the other
  hand, (3) if $0 \in C \cap D$ or $2 \in C \cap D$, there is always a
  solution (namely $X = \emptyset$ or $X = U$). Finally, observe (4)
  that $\Lang{csp}\{C,D\} = \Lang{csp}\bigl\{\{2-c \mid c\in C\},\{2-d\mid
  d\in D\}\bigr\}$ since solutions for \textsc{csp}s of the first kind
  are the complements of solutions for the second kind.

  Let us now go over the cases remaining when $C \neq \emptyset$, $D
  \neq \emptyset$, $0 \notin C \cap D$, and $2\notin C \cap D$:
  \begin{enumerate}
  \item $C = \{0\}$. The remaining choices for $D$ are $\{1\}$,
    $\{2\}$, and $\{1,2\}$ since otherwise by (3) we are done. For
    $D=\{1\}$, a solution can only exist 
    if $\bar B$ is bipartite and $X$ is one of the shores. Both shores
    must be non-empty since $\bar B$ contains an edge by~(1). Since
    shores are independent sets in $\bar B$, the set $X$ must form a
    clique in~$B$. Since no edge of the clique can satisfy the
    constraint $C 
    = \{0\}$, there can be no edges and $|X| = 1$. Thus, all we need to 
    check is whether $\bar B$ is a star, in which case there will be a
    solution. Next, for $D = \{2\}$ there can never be a solution
    since both $B$ and $\bar B$ contain an edge, creating conflicting
    requirements for~$X$. Finally, for $D = \{1,2\}$ if there is any
    solution at all, the set $X = \{v \mid v$ is isolated in $B\}$
    will be such a solution. So, test whether this is indeed the case.
  \item $C = \{2\}$. By observation (4) this case is already settled
    by the previous case.
  \item $C = \{0,2\}$. The only remaining choice for $D$ is
    $\{1\}$. Again, this means that $\bar B$ must be bipartite with
    shores $X$ and $U \setminus X$. Now, if an edge is missing in
    $\bar B$ between a vertex in $X$ and in $U \setminus X$, the
    ``equality constraint'' $C$ cannot be satisfied for this edge
    in~$B$. Thus, $\bar B$ must not only be bipartite, but complete
    bipartite and, then, there is always a solution. All we
    need to test is whether $\bar B$ is complete bipartite (or,
    equivalently, whether $B$ consists of two cliques). Clearly, this
    can be done using even a first-order formula.
  \item $C = \{1\}$. The remaining choices are $D=\{1\}$,
    $D=\{0,1\}$, $D = \{1,2\}$, and $D = \{0,1,2\}$ (the choices
    $\{0\}$, $\{2\}$, and $\{0,2\}$ have already been handled above,
    with the roles of $C$ and $D$ exchanged). For $D = \{1\} = C$
    no solution can exist when the universe has three or more
    elements, which we assume. For $D = \{0,1\}$ the situation is
    similar to the one we had for $C = \{0\}$ and $D = \{1\}$: The constraint $C
    = \{1\}$ enforces that $B$ is bipartite with one shore being~$X$,
    but then $D = \{0,1\}$ enforces that $X$ has size~$1$. So, again,
    we just need to test whether a graph is a star, only this time
    for~$B$. Next, the case $D = \{1,2\}$ is symmetric to $D =
    \{0,1\}$. Finally, for $D = \{0,1,2\}$, the only constraint on~$X$
    is the one given by $C$, which asks whether $B$ is bipartite. This
    test can be done in logarithmic space, however, by Reingold's
    Theorem.
  \item $C = \{1,2\}$. The only remaining choice is $D =
    \{0,1\}$. We claim that there is a solution if, and only if, $B$
    is a \emph{split graph} (a graph whose vertex set can be
    partitioned into two sets $S_{\mathrm{clique}}$ and
    $S_{\mathrm{indep}}$ such that $S_{\mathrm{clique}}$ is a  
    clique and $S_{\mathrm{indep}}$ is an independent set). To see
    this, first note that if $B$ is a split graph, $X =
    S_{\mathrm{clique}}$ satisfies all constraints: Between vertices
    inside $X = S_{\mathrm{clique}}$ there are only $C$-constraint (``pick
    at least one''), between vertices in $U \setminus X =
    S_{\mathrm{indep}}$ there are only $D$-constraint (``pick at most
    one''), and for every pair of vertices where one lies in $X$ and
    the other does not, both a $C$- and a $D$-constraint is always
    satisfied. For the other direction, if $X$ is a solution, then
    there can be no ``at most one'' constraints between the vertices
    in~$X$ and there can be no ``at least one'' constraints between
    the vertices in $U \setminus X$. This shows that $X$ induces a
    clique in $B$ and $U \setminus X$ induces an independent set
    in~$B$. 
    Testing whether $B$ is a split graph can be done using a
    first-order formula since it is known \cite{FoeldesH1977} that a
    graph is a split graph if, and only if, no induced subgraph is
    isomorphic to $2K_2$, $C_4$, or $C_5$.
  \item $C = \{0,1\}$. This is the same as the previous case by
    observation~(4).
  \item $C = \{0,1,2\}$. No untreated choices for $D$ remain. \qedhere
  \end{enumerate}
\end{proof}

\subsection{\emph{E}\boldmath$\mathsf{^*}$\emph{ae} Over Basic Graphs: From P to L}
\label{sec:estarae}

Our objective is to show $\FaginDef{basic}(E^*ae) \subseteq \Class{L}$ in
this section.  More precisely, we only need to show $ \FaginDef{basic}(E^*_1ae) \subseteq
\Class L$ since  \cite[Theorem 4.1]{GottlobKS2004}
states $\FaginDef{basic}(E^*ae) = \FaginDef{basic}(E^*_1ae)$. 

A proof of the weaker claim $\FaginDef{basic}(E_1^*ae) \subseteq
\Class{P}$ is spread over the 35~pages of Sections 4,~5, and~6 of the
paper \cite{GottlobKS2004} by 
Gottlob et al.\ and consists of two kinds of arguments:
Graph-theoretic and algorithmic. Since the graph-theoretic
arguments are independent of complexity-theoretic questions, our main
job is to show how the algorithms described by Gottlob et al.\ can be 
implemented in logarithmic space rather than polynomial time.

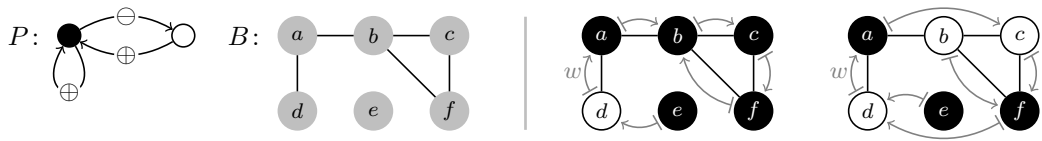
\begin{figure}[htb]
  \centering
  \begin{tikzpicture}
    
    \begin{scope}
      \node [left] at (-.15,0) {$P\colon$};
      
      \node [node, minimum size=3mm] at (1.5,0) (w) {};
      \node [node, minimum size=3mm,fill=black] at (0,0) (b) {};
      
      \draw [overlay] (b) edge [out=-60, in =-120,looseness=15,->] node
      [on,anchor=center] {$\oplus$} (b);
      \draw (b) edge [bend left,->]  node [on] {$\ominus$} (w);
      \draw (w) edge [bend left,->]  node [on] {$\oplus$} (b);
    \end{scope}

    \begin{scope}[xshift=3cm,yshift=-1cm]
      \node [left] at (-.25,1) {$B\colon$};
      \begin{scope}[every node/.style={node,fill=black!25,draw=black!25}]
        \node (a) [at={(0,0)}] {$d$};
        \node (b) [at={(0,1)}] {$a$};
        \node (c) [at={(1,0)}] {$e$};
        \node (d) [at={(1,1)}] {$b$};
        \node (e) [at={(2,0)}] {$f$};
        \node (f) [at={(2,1)}] {$c$};

        \draw (a) -- (b) -- (d) -- (e) -- (f) -- (d);
      \end{scope}
    \end{scope}
    
    \draw [black!25,line width=1pt] (6,.25) -- ++(0,-1.5);
    
    \begin{scope}[xshift=7cm,yshift=-1cm,w/.style={fill=white},b/.style={fill=black,text=white}]
        \node [node] (a) [at={(0,0)}, w] {$d$};
        \node [node] (b) [at={(0,1)}, b] {$a$};
        \node [node] (c) [at={(1,0)}, b] {$e$};
        \node [node] (d) [at={(1,1)}, b] {$b$};
        \node [node] (e) [at={(2,0)}, b] {$f$};
        \node [node] (f) [at={(2,1)}, b] {$c$};

        \draw (a) -- (b) -- (d) -- (e) -- (f) -- (d);
        
        \begin{scope}[every edge/.style={|->,bend left,black!50,draw,shorten >=1pt,shorten <=1pt}]
          \draw (c) edge (a)
                (a) edge node[auto,font=\small,inner sep=1pt]{$w$} (b)
                (b) edge (d)
                (d) edge (f)
                (f) edge (e)
                (e) edge (d);
        \end{scope}
    \end{scope}

    \begin{scope}[xshift=10.5cm,yshift=-1cm,w/.style={fill=white},b/.style={fill=black,text=white}]
        \node [node] (a) [at={(0,0)}, w] {$d$};
        \node [node] (b) [at={(0,1)}, b] {$a$};
        \node [node] (c) [at={(1,0)}, b] {$e$};
        \node [node] (d) [at={(1,1)}, w] {$b$};
        \node [node] (e) [at={(2,0)}, b] {$f$};
        \node [node] (f) [at={(2,1)}, w] {$c$};

        \draw (a) -- (b) -- (d) -- (e) -- (f) -- (d);
        
        \begin{scope}[every edge/.style={|->,bend
            left,black!50,draw,shorten >=1pt,shorten <=1pt}]
          \draw (c) edge[bend right] (a)
                (a) edge node[auto,font=\small,inner sep=1pt]{$w$} (b)
                (b) edge (f)
                (f) edge (e)
                (e) edge (a)
                (d) edge[bend right] (e);
        \end{scope}
    \end{scope}
    
  \end{tikzpicture}
  \caption{Example of a pattern graph $P = (C,A^\oplus, A^\ominus)$ with two ``colors'' black and 
    white (so $C = \{\mathit{black}, \mathit{white}\}$, $A^\oplus =
    \{(\mathit{black},\mathit{black}),
    (\mathit{white},\mathit{black})\}$, and $A^\ominus =
    \{(\mathit{black},\mathit{white})\}$) and an uncolored (``gray'') example
    graph~$B$. We have $B \in \Lang{saturation}(P)$ as shown by two
    examples of legal colorings of $B$ together with witness functions $w$ (in gray).}  
  \label{fig:patternex}
\end{figure}

Similarly to the switch from model checking problems to graphs problems
in the previous section, we also wish to reformulate the model
checking problems $\Models{basic}(\phi)$ for $E_1^*ae$-formulas~$\phi$ in a
graph-theoretic manner. Gottlob et al.\ introduce the notion of \emph{pattern graphs}
for this: A \emph{pattern graph $P = (C,A^\oplus,A^\ominus)$} consists
of a set of \emph{colors~$C$}, a set $A^\oplus \subseteq C\times C$
of \emph{$\oplus$-arcs}, and a set $A^\ominus \subseteq C \times C$ of
\emph{$\ominus$-arcs} ($A^\oplus$ and $A^\ominus$ need not be
disjoint). Given a basic graph $B = (V,E)$, a \emph{coloring 
  of~$G$ with respect to~$P$} is a function $c \colon V \to C$. A
mapping $w \colon V \to V$ is called a \emph{witness function for a
  coloring~$c$} if for all $x \in V$ we have (1)~$x \neq w(x)$,
(2)~if $\{x, w(x)\} \in E$, then $\bigl(c(x), c(w(x))\bigr) \in
A^\oplus$, and (3)~if $\{x, w(x)\} \notin E$, then $\smash{\bigl(c(x),
  c(w(x))\bigr)} \in A^\ominus$.  If there  
exists a coloring together with a witness function for~$B$ with
respect to~$P$, we say that \emph{$B$ can be saturated by~$P$} and the
\emph{saturation problem $\Lang{saturation}(P)$} is the set of all
basic graphs that can be saturated by~$P$, see
Figure~\ref{fig:patternex} for an example.

The intuition behind these definitions is that a
witness function tells us for each $x$ in $\forall x$ which $y$ in
$\exists y$ we must pick to make a formula~$\phi$ of the form
$\exists M_1 \cdots \exists M_n\, \forall x \exists y\, \psi$ true. The pattern
graph encodes the restrictions imposed by $\psi$ and the
monadic predicates~$M_i$:

\begin{fact}[{\cite[Theorem 4.6]{GottlobKS2004}}]\label{fact:pattern}
  For every formula $\phi = \exists M_1 \cdots \exists M_n\, \forall x
  \exists y\, \psi$, where the $M_i$ are monadic and $\psi$ is
  quantifier-free, there is a pattern graph $P$ with $2^n$ vertices 
  such that $\Models{basic}(\phi) =  \Lang{saturation}(P)$.
\end{fact}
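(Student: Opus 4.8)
The plan is to read off the pattern graph $P=(C,A^\oplus,A^\ominus)$ directly from the monadic predicates and the quantifier-free matrix $\psi$. First I would set $C=\{0,1\}^n$, so that $|C|=2^n$, and identify a coloring $c\colon V\to C$ of a basic graph $B=(V,E)$ with a choice of monadic relations $M_1,\dots,M_n\subseteq V$, via $c(v)_i=1$ precisely when $v\in M_i$. Under this identification, $(B,M_1,\dots,M_n)\models\forall x\exists y\,\psi$ says exactly that for every $v\in V$ there is some $u\in V$ making $\psi$ true when $x$ is interpreted as $v$ and $y$ as $u$; so the task is to arrange the arcs of $P$ so that the existence of a witness function for $c$ captures precisely this condition.

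Second, I would pin down what the truth value of $\psi$ at a pair $(v,u)$ can depend on. Since $\psi$ is quantifier-free over the vocabulary $(E^2,M_1^1,\dots,M_n^1)$, it is a Boolean combination of the atoms $x=y$, $E(x,y)$, $E(y,x)$, $E(x,x)$, $E(y,y)$, $M_i(x)$, $M_i(y)$. In a basic graph $E$ is symmetric and self-loop-free, so $E(x,x)$ and $E(y,y)$ are always false and $E(y,x)$ equals $E(x,y)$; hence the value of $\psi$ at $(v,u)$ is a function only of whether $v=u$, whether $\{v,u\}\in E$ (which, when $v=u$, is automatically false), and the colors $c(v)$ and $c(u)$. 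This gives three Boolean tables: $\psi^{=}(\gamma)$, the value of $\psi$ when $x=y$ and $c(x)=\gamma$; $\psi^{E}(\gamma,\delta)$, the value when $x\neq y$, $\{x,y\}\in E$, $c(x)=\gamma$, $c(y)=\delta$; and $\psi^{\bar E}(\gamma,\delta)$, the value when $x\neq y$, $\{x,y\}\notin E$, $c(x)=\gamma$, $c(y)=\delta$. I then put $(\gamma,\delta)\in A^\oplus$ exactly when $\psi^{E}(\gamma,\delta)$ is true or $\psi^{=}(\gamma)$ is true, and $(\gamma,\delta)\in A^\ominus$ exactly when $\psi^{\bar E}(\gamma,\delta)$ is true or $\psi^{=}(\gamma)$ is true. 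The extra ``$\psi^{=}(\gamma)$'' disjunct equips every self-satisfiable color with a universal outgoing arc of both kinds.

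Third, I would check $\Models{basic}(\phi)=\Lang{saturation}(P)$ for basic graphs with at least two vertices; the case of a graph on one vertex is trivial and handled separately, since no witness function with $w(x)\neq x$ exists there. A witness function $w$ for a coloring $c$ is just a Skolem function for $\exists y$: by the definitions of $A^\oplus$, $A^\ominus$ and $\psi^{E}$, $\psi^{\bar E}$, the arc conditions on $w$ say precisely that choosing $y=w(x)$ satisfies $\psi$ at $x$, as long as $w(x)\neq x$. Conversely, given $(M_i)$ witnessing $\phi$, for each $v$ fix some $u_v$ with $\psi(v,u_v)$ true: if $u_v\neq v$ set $w(v)=u_v$, which lies in $A^\oplus$ or $A^\ominus$ by construction; if $u_v=v$ then $\psi^{=}(c(v))$ is true, so $c(v)$ has universal arcs and $w(v)$ may be taken to be any other vertex (this is where at least two vertices are needed). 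The two directions then fit together, and the pattern graph has $|C|=2^n$ vertices as required.

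The step requiring the most care is exactly the mismatch between the formula, which permits $y=x$ as a witness, and a witness function, which forbids $w(x)=x$; the universal-arc gadget resolves it, at the price of the trivial one-vertex exception. The remaining ingredients — the bijection between colorings and monadic assignments and the reduction of $\psi$ to the three Boolean tables — are routine.
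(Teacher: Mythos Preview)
The paper does not supply its own proof; the statement is quoted as a Fact from \cite{GottlobKS2004}, so there is nothing in-paper to compare against. Your construction---take $C=\{0,1\}^n$, observe that over basic graphs $\psi$ depends only on $c(x)$, $c(y)$, whether $x=y$, and whether $\{x,y\}\in E$, and then record the satisfying color pairs in $A^\oplus$ and $A^\ominus$---is the natural and correct one, and your device of giving every self-satisfiable color universal outgoing arcs of both kinds is exactly the right way to absorb the $y=x$ witnesses into the $w(x)\neq x$ framework.

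One remark on the one-vertex case. You call it ``trivial and handled separately,'' but it is precisely the place where the literal equality $\Models{basic}(\phi)=\Lang{saturation}(P)$ can fail for your $P$: a one-vertex basic graph is never in $\Lang{saturation}(P)$ (there is no $w$ with $w(x)\neq x$), yet it lies in $\Models{basic}(\phi)$ whenever some $\psi^{=}(\gamma)$ is true. So it is not a separate instance of the same equality but a genuine exception. This is immaterial for every use the paper makes of the fact---the two sets differ by at most a single fixed graph, which cannot affect membership in $\Class{FO}$, $\Class L$, and so on---but you should state the exception explicitly rather than label it trivial.
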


Thus, it remains to show $\Lang{saturation}(P) \in \Class L$ for all
pattern graphs~$P$. Towards this aim, for a fixed pattern graph $P$
we devise logspace algorithms that work for larger and larger
classes of basic graphs~$B$, ending with the class of all basic graphs.

\subparagraph*{Graphs of Bounded Tree Width and Special Graphs}
We start by considering only graphs of \emph{bounded tree width,} an
important class of graphs introduced by Robertson and
Seymour in~\cite{RobertsonS1986}: A \emph{tree decomposition} of a
graph~$B$ is a tree $T$ together with a mapping that assigns subsets of $B$'s
vertices (called \emph{bags}) to the nodes of~$T$. The bags must have two properties:
First, for every edge $\{x,y\}$ of $B$ there must be some bag that contains 
both $x$ and $y$. Second, the nodes of $T$ whose bags contain a
given vertex $x$ must be connected in~$T$. The \emph{width} of a
decomposition is the size of its largest bag (minus $1$ for technical
reasons). The \emph{tree width} of~$B$ is the minimal width of
any tree decomposition for it. A class of graphs has \emph{bounded
  tree width} if there is a constant~$c$ such that all graphs in the
class have tree width at most~$c$. From an algorithmic point of view,
many problems that can be solved efficiently on trees can also be
solved efficiently on graphs of bounded tree width. Courcelle's
Theorem turns this into a precise statement:

\begin{fact}[Courcelle's Theorem, \cite{Courcelle1990a}]\label{fact:courcelle}
  For every \textsc{mso}-formula $\phi$ and $t \ge 1$ we have
  \begin{align*}
    \Models{basic}(\phi) \cap 
    \{ G \mid G\text{ has tree width at most }t\} \in \Class{LINTIME}.
  \end{align*}
\end{fact}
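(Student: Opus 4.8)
The plan is to follow the classical automata-theoretic proof of Courcelle's Theorem. Fix the \textsc{mso}-formula $\phi$ and the width bound $t$ once and for all; every constant below may depend on $\phi$ and $t$, but the running time will be linear in the size of the input graph $G$. \textbf{Phase 1 (decomposition).} Using Bodlaender's linear-time algorithm for bounded tree width, I would either compute a tree decomposition of $G$ of width at most $t$ or, if none exists, reject $G$ immediately --- in that case $G$ is not in the intersection on the left-hand side, so rejecting is correct. \textbf{Phase 2 (term representation).} Normalize this decomposition into a \emph{nice} tree decomposition and read it off as a term $\tau_G$ over a finite signature of operations on $(\le t{+}1)$-boundaried graphs: operations such as ``create an isolated boundary vertex'', ``add an edge between two named boundary vertices'', ``forget (un-name) a boundary vertex'', and ``glue two boundaried graphs along their common boundary''. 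The term $\tau_G$ has size $O(|G|)$, evaluates to $G$, and can be produced in linear time from the nice decomposition.

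The heart of the argument is \textbf{Phase 3 (formula to automaton).} I would construct, by induction on the structure of $\phi$, a deterministic bottom-up tree automaton $\mathcal{A}_\phi$ over the signature of Phase 2 that accepts $\tau_G$ if and only if $G \models \phi$. To make the induction go through, one works with formulas having free individual and set variables, encoding a valuation by the subset of the \emph{current} boundary it meets; this only enlarges the finite signature by a constant-size annotation. Atomic formulas $E(x,y)$, $x = y$, and $X(y)$ yield small explicit automata; conjunction and negation yield the product and complement automata; and the existential set quantifier $\exists X\,\psi$ is handled by deleting the $X$-annotation (projection) and then redeterminizing the resulting nondeterministic tree automaton via the subset construction. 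Each construction preserves finiteness, so $\mathcal{A}_\phi$ has a constant number of states (in general non-elementary in $|\phi|$, which is irrelevant since $\phi$ is fixed). \textbf{Phase 4 (evaluation).} Run $\mathcal{A}_\phi$ bottom-up on $\tau_G$: since $|\tau_G| = O(|G|)$ and each transition is a constant-size table lookup, this takes linear time and answers ``yes'' exactly when $G \models \phi$. Chaining the four phases yields the claimed $\Class{LINTIME}$ bound.

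I expect the main obstacles to be two. First, Phase 1 is itself a substantial algorithm: if one is content with polynomial time rather than genuine linear time, a much simpler bounded-width decomposition procedure suffices, but the $\Class{LINTIME}$ statement really needs the linear-time version. Second, in Phase 3 the determinization for set quantifiers is where the state explosion occurs and where the correctness proof is most delicate --- one must verify that projecting away an annotation and applying the subset construction produces an automaton whose accepted terms are exactly those on which \emph{some} extension of the valuation satisfies $\psi$, and this requires carefully tracking how boundary vertices (and hence set valuations) are identified under the ``glue'' operation and how they restrict when passing to sub-terms. Once those two points are in place, the remaining steps are routine bookkeeping.
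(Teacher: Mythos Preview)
The paper does not prove this statement at all: it is recorded as a \emph{Fact} with a citation to Courcelle's original paper and is used as a black box, so there is no ``paper's own proof'' to compare against. Your sketch is the standard modern proof outline (Bodlaender's linear-time decomposition, conversion to a boundaried-graph term, inductive translation of the \textsc{mso} formula into a bottom-up tree automaton, linear-time evaluation) and is correct as such; you are supplying considerably more detail than the paper does. One minor historical remark: Courcelle's 1990 paper predates Bodlaender's linear-time decomposition algorithm, so the $\Class{LINTIME}$ bound as stated here is really the combination of the two results rather than Courcelle's theorem alone --- your Phase~1 already reflects this, and the paper's phrasing quietly folds it in.
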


Gottlob et al.\ apply this theorem to show that when the input
graphs~$B$ have bounded tree width, we can decide whether $B \in
\Lang{saturation}(P)$ holds in polynomial time: the property $B \in
\Lang{saturation}(P)$ is easily described in \textsc{mso}
logic. We can lower the complexity from ``polynomial time'' to
``logarithmic space'' by using the following logarithmic space version
of Courcelle's Theorem:

\begin{fact}[Logspace Version of Fact~\ref{fact:courcelle}, \cite{ElberfeldJT2010}]\label{fact:ejt}
  For every \textsc{mso}-formula $\phi$ and $t \ge 1$ we have
  \begin{align*}
    \Models{basic}(\phi) \cap 
    \{ G \mid G\text{ has tree width at most }t\} \in \Class{L}.
  \end{align*}
\end{fact}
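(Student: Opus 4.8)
The plan is to retrace the classical linear-time proof of Courcelle's Theorem (Fact~\ref{fact:courcelle}) and to show that each of its three ingredients — computing a bounded-width tree decomposition, normalizing it, and running an \textsc{mso} tree automaton on it — can be carried out in logarithmic space; since $\Class L$ is closed under composition, chaining the three stages yields the claim. Concretely, on input a basic graph $G$ the algorithm first decides whether the tree width of $G$ is at most $t$ (rejecting otherwise, since membership in the set of the Fact also requires $\mathrm{tw}(G)\le t$), then computes an actual decomposition, and finally performs the model check for $\phi$ on it.

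\emph{Stage 1: the tree decomposition.} I would establish, or invoke as a black box, a logspace version of Bodlaender's algorithm: for each fixed $t$ there is a logspace machine that decides $\mathrm{tw}(G)\le t$ and, if so, outputs a tree decomposition of $G$ of width $g(t)$ for some function $g$, which one may furthermore take to have depth $O(\log n)$ via a logspace balancing step. The difficulty is that the natural recursive separator-based constructions keep a recursion stack of potentially linear depth. To stay in logspace one does not store separators but recomputes them on demand: whether a bounded-size vertex set $S$ is a balanced separator, and on which side of $S$ a given vertex lies, are themselves logspace-decidable via reachability-type computations over the bounded family of candidate separators; the recursion tree has only logarithmic depth, and at each level only a bounded-size set together with a pointer needs to be carried.

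\emph{Stages 2 and 3: normalization, encoding, and the automaton run.} Given a decomposition, one normalizes it in logspace to a \emph{nice} binary tree decomposition (singleton bags at the leaves; sibling bags differing from their parent by at most one vertex) and encodes the pair (structure, decomposition) as a labeled tree $T_G$ over a finite alphabet $\Sigma_t$ whose letters record, per node, the bag type together with the atomic diagram of the induced substructure; these are routine logspace transformations. From $\phi$ and $t$ one then builds, purely as a function of the formula and by the standard Courcelle/Feferman--Vaught construction, a deterministic bottom-up finite tree automaton $\mathcal{A}_\phi$ over $\Sigma_t$ such that $\mathcal{A}_\phi$ accepts $T_G$ iff $G\models\phi$; since $\mathcal{A}_\phi$ is a fixed finite object, producing it requires no input-dependent work. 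It then remains to decide acceptance, i.e.\ to run $\mathcal{A}_\phi$ on $T_G$. If $T_G$ has logarithmic depth this is a bottom-up traversal storing, per level of the current root-to-node path, only a direction bit and one automaton state, hence logarithmic space in total. For unbalanced trees one instead uses that running a bottom-up deterministic tree automaton is exactly the evaluation of a term of size $n$ over a fixed finite algebra (universe the state set, operations the transition functions), which is in $\Class{NC}^1\subseteq\Class L$ by the technique behind the Boolean formula value problem.

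Composing the three logspace stages yields the asserted logspace decision procedure for $\Models{basic}(\phi)$ restricted to inputs of tree width at most $t$. I expect the genuinely hard step to be Stage~1, the logspace computation of the tree decomposition, which is the technical heart of the argument; the normalization, encoding, and automaton run in Stages~2 and~3 are comparatively routine once a balanced, bounded-width decomposition is in hand.
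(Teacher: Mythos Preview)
The paper does not prove this statement at all: it is stated as a \emph{Fact} and simply cited from \cite{ElberfeldJT2010}, where the logspace version of Courcelle's Theorem is the main result. So there is no ``paper's own proof'' to compare against here.

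Your sketch does follow the high-level architecture of \cite{ElberfeldJT2010}: compute a bounded-width tree decomposition in logspace, then evaluate a fixed tree automaton on it. Stages~2 and~3 are indeed routine once one has the decomposition, and your observation that evaluating a fixed bottom-up tree automaton is term evaluation over a fixed finite algebra (hence in $\Class{NC}^1 \subseteq \Class L$) is correct and standard. Where your proposal is thin is exactly where you say it is: Stage~1. The phrase ``I would establish, or invoke as a black box, a logspace version of Bodlaender's algorithm'' is essentially circular, since that \emph{is} the theorem being cited. Your separator-based outline is in the right spirit, but glosses over the real difficulties: one must \emph{deterministically} and \emph{consistently} select the same balanced separator every time the same subproblem is revisited (otherwise the implicit recursion tree is not well-defined), one must identify the current subproblem itself using only the $O(\log n)$ bits carried along the recursion path, and one must argue that the separator tests and side-of-separator queries reduce to undirected reachability (Reingold) rather than directed reachability. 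None of this is wrong in your sketch, but none of it is actually carried out either; the cited paper spends most of its effort precisely here. In short: right plan, but the hard lemma is asserted rather than proved.
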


In their graph-theoretic arguments, Gottlob et al.\ encounter not only
graphs of bounded tree width, but also graphs that they call
\emph{$(k,t)$-special} and 
which are defined as follows: For a basic graph $B = (V,E)$ let us
call two vertices $u$ and~$v$ \emph{equivalent} if for all
$x \in V\setminus \{u,v\}$ we have $\{u,x\} \in E$ if, and only if,
$\{v,x\} \in E$. Observe that this defines an easy-to-check
equivalence relation on the vertices of~$B$ and that each equivalence
class is either a clique or an independent set of~$B$. A graph is
\emph{$(k,t)$-special} if we can remove (up to) $k$ equivalence classes
$A_1$, \dots, $A_k$ from the graph such that the remaining graph
has tree width at most~$t$.

The intuition behind $(k,t)$-special graphs is that equivalent
vertices are ``more or less  
indistinguishable'' and, thus, for a large enough equivalence
class removing some vertices does not change whether the graph can
be saturated or not. Formally, let $B$ be $(k,t)$-special and let
$A_1,\dots,A_k$ be to-be-removed equivalence classes. We obtain an
\emph{$s$-shrink of $B$} by repeatedly removing vertices from those
$A_i$ that have more than $s$ vertices until all of them have at
most $s$ vertices. The proof of Lemma~6.4 in~\cite{GottlobKS2004}
implies the following two facts:

\begin{fact}\label{fact:shrink1}
  For every $k$, $t$, and pattern graph $P$ there is an $s$ such 
  for every $s$-shrink $B'$ of a $(k,t)$-special graph $B$ we have 
  $B \in \Lang{saturation}(P)$ if, and only if, $B' \in
  \Lang{saturation}(P)$.
\end{fact}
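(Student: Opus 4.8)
The plan is to argue that if two vertices $u,v$ are equivalent and there are ``enough'' other vertices in their common equivalence class $A_i$, then a witness-function-plus-coloring for $B$ can be transported to one for $B' = B \setminus \{v\}$ and vice versa, so that removing one surplus vertex from an oversized equivalence class preserves membership in $\Lang{saturation}(P)$; iterating this down to size $s$ then gives the fact. The threshold $s$ will be chosen to depend only on $k$, $t$, and the number of colors $|C|$ of $P$ — crucially not on $B$ — which is exactly what the statement demands.

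First I would record the structural observation already supplied in the excerpt: each equivalence class $A_i$ is either a clique or an independent set, and for any vertex $x \notin A_i$, either $x$ is adjacent to all of $A_i$ or to none of it. Consequently, from the point of view of the pattern graph, the vertices of $A_i$ are interchangeable except for the edges \emph{inside} $A_i$; and inside $A_i$ the adjacency pattern is completely uniform. Next, I would analyze how a witness function $w$ can behave on a large equivalence class. Color classes partition $A_i$ into at most $|C|$ parts; if $|A_i|$ is large enough (more than, say, $|C|$), some color $\chi$ occurs on at least two vertices of $A_i$, and more generally, once $|A_i|$ exceeds a bound polynomial in $|C|$ and $k$, we can find a vertex $v \in A_i$ that is ``redundant'' in the following sense: (i) $v$ is not the $w$-image of any vertex outside $A_i$ that could not equally well point to another same-colored vertex of $A_i$, and (ii) deleting $v$ does not destroy anyone's ability to find a witness, because the uniform adjacency to $A_i$ means any constraint satisfiable via an edge/non-edge to $v$ is also satisfiable via an edge/non-edge to any other same-colored vertex of $A_i$, of which at least one remains.

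The core lemma is then: if $|A_i| > s_0$ for a suitable constant $s_0 = s_0(k,t,|C|)$, there is a single vertex $v\in A_i$ whose deletion yields $B'$ with $B\in\Lang{saturation}(P) \iff B'\in\Lang{saturation}(P)$. For the forward direction, take a coloring $c$ and witness function $w$ for $B$; restrict $c$ to $B'$; redefine $w$ on $B'$ by, whenever $w(x) = v$, redirecting $x$ to some other vertex $v^* \in A_i \setminus \{v\}$ of the same color as $v$ and of the same adjacency type to $x$ (such $v^*$ exists once $|A_i|$ is large, and the only subtlety is when $x \in A_i$ itself, handled by the clique/independent-set dichotomy and by keeping enough surplus so that a same-colored partner distinct from both $x$ and $v$ survives). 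All three witness conditions are preserved because $c(v^*) = c(v)$ and the edge relation between $x$ and $v^*$ matches that between $x$ and $v$. For the backward direction, take $c',w'$ for $B'$ and extend them to $B$ by declaring $c(v)$ to be any color realized by a neighbor-compatible vertex of $A_i$ in $B'$ and setting $w(v)$ to be that vertex — again using largeness of $A_i \setminus \{v\}$ to guarantee such a vertex exists with a color $\chi$ such that $(\chi,\chi)$ or $(\chi,c(v))$ lies in the appropriate arc set; one checks $w(v)$ satisfies its own witness condition and that $v$ being added back breaks no one else's condition, since everyone who used $w' $ still has the same witness available in $B$.

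Finally, I would assemble the fact: set $s := s_0(k,t,|C|)$ with $|C| = 2^n$ the number of colors guaranteed by Fact~\ref{fact:pattern}; an $s$-shrink is obtained by repeatedly deleting surplus vertices from oversized $A_i$'s, and each single deletion preserves saturation by the core lemma (noting that deleting a vertex from $A_i$ keeps $A_i$ an equivalence class and keeps the graph $(k,t)$-special, since the same $A_1,\dots,A_k$ still work), so by induction on the number of deletions $B \in \Lang{saturation}(P) \iff B' \in \Lang{saturation}(P)$. I expect the main obstacle to be the bookkeeping in the redirection step when $x$ itself lies in the equivalence class $A_i$ from which we are deleting: there one must be careful that the replacement witness $v^*$ is simultaneously distinct from $x$, distinct from the deleted $v$, of the right color, and of the right adjacency type to $x$ — and the clique-versus-independent-set case distinction matters here. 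Pinning down the exact constant $s_0$ so that all these simultaneous constraints can always be met (a counting argument over at most $|C|$ colors and at most $k$ protected classes) is the technical heart; everything else is a routine verification that the three defining properties of a witness function are stable under the transport.
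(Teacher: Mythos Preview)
The paper does not give its own proof of this statement: it is labelled a \emph{Fact} precisely because it is quoted from the proof of Lemma~6.4 in Gottlob--Kolaitis--Schwentick~\cite{GottlobKS2004}, and no argument is supplied here beyond that citation. So there is no in-paper proof to compare against.

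That said, your reconstruction is essentially the standard argument and is sound. Two small points. First, the threshold $s_0$ need depend only on $|C|$; the parameters $k$ and $t$ govern the part of the graph \emph{outside} the removed equivalence classes and never enter the redirection argument. Second, your backward direction is overcomplicated: to extend a saturation $(c',w')$ of $B'$ to $B$, just pick any surviving $u\in A_i\setminus\{v\}$, set $c(v):=c'(u)$ and $w(v):=w'(u)$, and observe that $v$ has the same adjacency to $w'(u)$ as $u$ does (by equivalence if $w'(u)\notin A_i$, and by the clique/independent-set dichotomy if $w'(u)\in A_i$); no search for a ``neighbor-compatible color'' is needed. For the forward direction, the cleanest way to avoid the bookkeeping you flag is to note that any permutation of $A_i$ is an automorphism of $B$, so after composing $(c,w)$ with such an automorphism you may assume the specific vertex being deleted carries a color that occurs at least three times in $A_i$ (possible once $|A_i|>2|C|$), which dissolves the awkward $w(v)=v^*,\ w(v^*)=v$ two-cycle case.
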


\begin{fact}\label{fact:shrinkbtw}
  An $s$-shrink of a $(k,t)$-special graph has tree width at most
  $t+sk$. 
\end{fact}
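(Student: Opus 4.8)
The plan is to build a tree decomposition of the $s$-shrink $B'$ directly from a tree decomposition of the ``core'' that survives after the $k$ equivalence classes have been deleted, simply by throwing the (now small) remnants of these classes into every bag.

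So let $B$ be $(k,t)$-special, let $A_1,\dots,A_k$ be equivalence classes whose removal leaves a graph of tree width at most~$t$, let $B'$ be the corresponding $s$-shrink, and set $A_i' = A_i \cap V(B')$. By the definition of an $s$-shrink, $|A_i'| \le s$ for each~$i$, so the set $R = A_1' \cup \dots \cup A_k'$, being a union of at most $k$ sets of size at most~$s$, satisfies $|R| \le sk$. Put $V_{\mathrm{core}} = V(B) \setminus (A_1 \cup \dots \cup A_k)$ and note $V_{\mathrm{core}} \subseteq V(B')$. Since the shrink only deletes vertices, $B'$ is an induced subgraph of $B$, and hence the subgraph of $B'$ induced on $V_{\mathrm{core}}$ is exactly the subgraph of $B$ induced on $V_{\mathrm{core}}$, which has tree width at most~$t$ by the definition of $(k,t)$-specialness. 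Fix a tree decomposition $(T,\beta)$ of this core graph of width at most~$t$, so $|\beta(n)| \le t+1$ for every node~$n$ of~$T$; if the core graph happens to have no vertices, let $(T,\beta)$ consist of a single node with empty bag.

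Now define $\beta'(n) = \beta(n) \cup R$ for every node~$n$ of~$T$; the claim is that $(T,\beta')$ is a tree decomposition of~$B'$. Vertex coverage is immediate, since vertices in $V_{\mathrm{core}}$ sit in some $\beta(n)$ and vertices in $R$ sit in every bag. For an edge $\{x,y\}$ of~$B'$: if $x,y \in V_{\mathrm{core}}$ it is an edge of the core graph and is covered by some $\beta(n) \subseteq \beta'(n)$; and if, say, $x \in R$, then either $y \in R$ too and every bag covers the edge, or $y \in V_{\mathrm{core}}$ lies in some $\beta(n)$, whence $\beta'(n) \supseteq \{y\} \cup R \ni x$. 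The connectivity condition holds as well: a vertex of $V_{\mathrm{core}}$ lies in precisely the same bags as in $(T,\beta)$ (it is never newly inserted), so its set of nodes is connected, while a vertex of $R$ lies in all bags, so its set of nodes is all of~$T$. Finally, $|\beta'(n)| \le |\beta(n)| + |R| \le (t+1) + sk$ for every~$n$, so $(T,\beta')$ has width at most $t + sk$, proving that the tree width of $B'$ is at most $t + sk$. I do not expect any genuine obstacle here; the only points needing a moment's attention — that the induced subgraph of $B'$ on $V_{\mathrm{core}}$ is literally the graph whose tree width the definition of $(k,t)$-specialness bounds, and the empty-core corner case — have both been dealt with above.
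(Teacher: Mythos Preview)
Your argument is correct: the standard ``put the small leftover set into every bag'' trick applied to a width-$t$ decomposition of the core yields a decomposition of $B'$ of width at most $t+sk$, and you have checked all three tree-decomposition axioms (including the empty-core corner case). The paper itself does not give a proof of this fact but simply imports it from \cite[Lemma~6.4]{GottlobKS2004}; your write-up is exactly the expected elementary justification.
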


In Lemmas 6.3 and 6.4 of \cite{GottlobKS2004}, Gottlob et al.\ present
polynomial-time algorithms for testing whether a graph
is $(k,t)$-special and for computing an $s$-shrink when the
test is positive. The following lemma shows that we can reimplement
these algorithms in a space-efficient manner (which the original
algorithms are not):

\begin{lemma}\label{lem:shrink3}
  For every $s$, $k$, and $t$, there is a logspace computable function
  that maps every $(k,t)$-special graph $B$ to an $s$-shrink of~$B$
  (and all other graphs to ``not $(k,t)$-special'').
\end{lemma}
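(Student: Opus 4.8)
The plan is to show that all the ingredients needed to compute an $s$-shrink — identifying the equivalence classes, deciding $(k,t)$-specialness, and actually deleting vertices — can be carried out in logarithmic space. First I would note that the equivalence relation defined above (``$u\sim v$ iff $u$ and $v$ have the same neighbours outside $\{u,v\}$'') is first-order definable and hence computable in $\Class{FO}\subseteq\Class L$; in logspace one can therefore enumerate the equivalence classes in a canonical order, say by the smallest vertex index they contain, and for each vertex output the index of its class. Next, $(k,t)$-specialness asks whether there is a choice of at most $k$ of these classes whose removal leaves tree width at most $t$. Since $k$ and $t$ are fixed constants, there are only polynomially many (indeed $O(n^k)$) candidate sets $\{A_{i_1},\dots,A_{i_k}\}$ of classes to delete, and these can be enumerated in logspace. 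For each candidate, one forms the induced subgraph $B\setminus(A_{i_1}\cup\cdots\cup A_{i_k})$ — an $\Class{FO}$-definable vertex-induced subgraph, hence logspace-computable as a query — and invokes the logspace tree-width test: by \cite{ElberfeldJT2010} (the same machinery underlying Fact~\ref{fact:ejt}), ``tree width at most $t$'' is decidable in $\Class L$ for fixed $t$. If no candidate works, output ``not $(k,t)$-special''; otherwise fix the lexicographically first successful candidate $A_1,\dots,A_k$.

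Having fixed $A_1,\dots,A_k$, the shrink is produced by deleting, from each $A_i$ with $|A_i|>s$, all but $s$ of its vertices. To make this a genuine function I would use a canonical rule: keep the $s$ smallest-indexed vertices of each oversized $A_i$ and delete the rest; every other vertex of $B$ is kept. Whether a given vertex $v$ survives is then decidable in logspace — determine $v$'s class, check whether that class is one of the chosen $A_i$, if so count (by a logspace counter up to $s$) how many vertices of that class have a smaller index than $v$, and keep $v$ iff that count is less than $s$. Iterating this surviving-vertex test over all of $V$ and restricting $E_B$ to surviving pairs yields an $s$-shrink $B'$, and the whole procedure composes logspace-computable steps, so it is logspace-computable overall. (Composition of logspace functions is logspace, so chaining the class computation, the candidate search, and the deletion presents no difficulty.)

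The main obstacle is making sure the candidate-search step is legitimately in logspace rather than merely in polynomial time. Two points need care. First, one must not materialise all $O(n^k)$ induced subgraphs simultaneously; instead, the outer loop over candidates is driven by a logspace counter over $k$-tuples of class representatives, and for each fixed candidate the tree-width test is run on the $\Class{FO}$-defined subgraph, reusing the same work space. Since logspace machines may reuse space across the loop iterations, the total space is $O(\log n)$ plus the (constant, depending on $t$) space of the tree-width subroutine. Second, the tree-width subroutine from \cite{ElberfeldJT2010} must be invoked on a graph presented implicitly as a first-order query over $B$; but $\Class L$ is closed under $\Class{FO}$- (indeed logspace-) reductions, so feeding it the induced subgraph — equivalently, composing with the query that renumbers the surviving vertices — stays within $\Class L$. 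Once these bookkeeping issues are handled, correctness is immediate: any successful candidate certifies $(k,t)$-specialness, the canonical tie-breaking makes the output well-defined, and Facts~\ref{fact:shrink1} and~\ref{fact:shrinkbtw} guarantee the output behaves as an $s$-shrink should for the saturation problem.
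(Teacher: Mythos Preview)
Your proposal is correct and follows essentially the same approach as the paper: iterate in logspace over all $k$-tuples of (representatives of) equivalence classes, invoke the logspace tree-width test from \cite{ElberfeldJT2010} on the induced subgraph, and upon success keep the lexicographically first $s$ vertices of each chosen class. You spell out the bookkeeping (canonical class enumeration, reuse of workspace across iterations, closure of $\Class L$ under composition) in more detail than the paper does, but the underlying argument is the same.
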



\begin{proof}
  To check whether a basic graph~$B$ is $(k,t)$-special, simply
  iterate over all tuples $(v_1,\dots,v_k)$ of vertices, remove all
  vertices equivalent to any~$v_i$, and test whether the remaining
  graph has tree width at most~$t$ using the logspace algorithm from
  Fact~\ref{fact:ejt}. When a tuple passes the test, for each $v_i$
  remove all but the lexicographically first $s$ vertices that are
  equivalent to~$v_i$ from the graph. What remains is the desired
  shrink. 
\end{proof}

The following lemma sums up the bottom line of the above discussion:

\begin{lemma}\label{lem:kt}
  For every pattern graph $P$ and all $k$ and $t$ we have
  {\begin{align*}
    \Lang{saturation}(P) \cap \penalty0 \{B \mid B\text{ is $(k,t)$-special\/}\} \in
    \Class L.
  \end{align*}}%
\end{lemma}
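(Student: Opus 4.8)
The plan is to chain together the machinery built up in this subsection. Fix a pattern graph $P$ together with the constants $k$ and $t$. First, I would invoke Fact~\ref{fact:shrink1} to obtain the constant $s = s(k,t,P)$ with the property that every $s$-shrink of a $(k,t)$-special graph preserves membership in $\Lang{saturation}(P)$. With this $s$ in hand, apply Lemma~\ref{lem:shrink3}: there is a logspace-computable function $f$ that, on input a $(k,t)$-special graph $B$, outputs an $s$-shrink $B' = f(B)$, and on all other inputs outputs the marker ``not $(k,t)$-special''. Since we only care about the behaviour of our algorithm on $(k,t)$-special graphs, we may assume from now on that the input $B$ is $(k,t)$-special, so that $B' = f(B)$ is a genuine $s$-shrink.

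Next I would record two things about $B'$. On the one hand, by Fact~\ref{fact:shrink1} we have $B \in \Lang{saturation}(P)$ if, and only if, $B' \in \Lang{saturation}(P)$, so it suffices to decide the latter. On the other hand, by Fact~\ref{fact:shrinkbtw} the graph $B'$ has tree width at most $t + sk$, and since $s$, $k$, and $t$ are all constants (depending only on $P$), this is a fixed bound independent of the input. Thus $B'$ always lies in the class of graphs of tree width at most $t + sk$.

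The final step is to decide $B' \in \Lang{saturation}(P)$ via the logspace version of Courcelle's Theorem. As already observed, the property ``$G \in \Lang{saturation}(P)$'' is expressible by an \textsc{mso}-formula $\phi_P$ over basic graphs: it asserts the existence of a colouring (encoded by monadic second-order variables, one per colour class) together with a binary relation encoding a witness function, subject to the local conditions (1)--(3) defining a witness function, which are first-order. Applying Fact~\ref{fact:ejt} to $\phi_P$ and the constant $t + sk$, we can decide in logarithmic space whether $B'$ lies in $\Models{basic}(\phi_P) \cap \{G \mid G\text{ has tree width at most }t+sk\}$, i.e.\ whether $B' \in \Lang{saturation}(P)$. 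Composing the logspace function $f$ from Lemma~\ref{lem:shrink3} with this logspace decision procedure — and using that logarithmic space is closed under composition — yields a logspace algorithm for $\Lang{saturation}(P) \cap \{B \mid B\text{ is }(k,t)\text{-special}\}$, which is exactly the claim.

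There is no serious obstacle here beyond bookkeeping: the genuinely hard work — computing shrinks space-efficiently (Lemma~\ref{lem:shrink3}), the graph-theoretic facts about shrinks (Facts~\ref{fact:shrink1} and~\ref{fact:shrinkbtw}), and the logspace Courcelle Theorem (Fact~\ref{fact:ejt}) — has already been done. The only point needing a little care is the order of quantification: $s$ must be chosen \emph{first}, depending on $k$, $t$, and $P$ but not on $B$, so that the tree-width bound $t+sk$ is a true constant and Fact~\ref{fact:ejt} applies with a fixed~$t$.
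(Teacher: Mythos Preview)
Your proof is correct and follows exactly the paper's own argument: compute an $s$-shrink in logspace via Lemma~\ref{lem:shrink3}, note that it has bounded tree width by Fact~\ref{fact:shrinkbtw} and preserves membership in $\Lang{saturation}(P)$ by Fact~\ref{fact:shrink1}, and then apply the logspace Courcelle Theorem (Fact~\ref{fact:ejt}). One small slip worth fixing: \textsc{mso} logic does not allow quantification over \emph{binary} relations, so you cannot encode the witness function that way; rather, once the colouring is fixed by monadic predicates, the existence of a witness function is just the first-order condition $\forall x\,\exists y\,(\cdots)$, and this is what makes $\Lang{saturation}(P)$ genuinely \textsc{mso}-expressible.
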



\begin{proof}
  Let $B$ be a basic input graph. First, use the algorithm from
  Lemma~\ref{lem:shrink3} to (1) test whether $B$ is $(k,t)$-special
  (and if not, reject) and then to (2) compute a shrink $B'$ of~$B$. By
  Fact~\ref{fact:shrink1} we have $B \in \Lang{saturation}(P)$ if, and
  only if, $B' \in \Lang{saturation}(P)$. Thus, it suffices to decide
  the latter membership problem. However, by Fact~\ref{fact:shrinkbtw}
  the graph $B'$ has bounded tree width and, thus, we can use 
  the logspace version of Courcelle's Theorem from Fact~\ref{fact:ejt}
  to decide whether $B'\in \Lang{saturation}(P)$ holds.
\end{proof}

\subparagraph*{Graphs With Self-Saturating Mixed Cycles}
We extend the class of graphs that our logspace machines can
handle to graphs that are not necessarily $(k,t)$-special, but at
least contain a \emph{mixed self-saturating cycle}. A
\emph{self-saturating cycle of a basic graph $B=(V,E)$ with respect
  to a pattern graph $P = (C, A^\oplus, A^\ominus)$} is a sequence
$(v_1,v_2,\dots,v_{n+1})$ of vertices in~$V$ for $n\ge 2$ where the $v_i$ for
$i \in \{1,\dots,n\}$ are all different, $v_{n+1} = v_1$, and we can
assign colors $c \colon \{v_1,\dots,v_n\} \to C$ such that for all $i\in
\{1,\dots,n\}$ we have: if $\{v_i,v_{i+1}\} \in E$, then
$(c(v_i),c(v_{i+1})) \in A^\oplus$; and if $\{v_i,v_{i+1}\} \notin E$,
then $(c(v_i),c(v_{i+1})) \in A^\ominus$. In other words, $B$
restricted to $\{v_1,\dots,v_n\}$ can be saturated with the
``natural'' witness function that ``moves along'' the cycle. The
following is an easy observation concerning self-saturating cycles:

\begin{lemma}\label{lem:selfsatcycle}
  For every $B \in \Lang{saturation}(P)$ there is a
  self-saturating cycle in~$B$~for~$P$.
\end{lemma}


\begin{proof}
  Let $B = (V,E)$ be saturated with respect to $P = (C, A^\oplus,
  A^\ominus)$ via some coloring $c \colon V 
  \to C$ and a witness function $w \colon V \to V$. Starting at any
  vertex~$v$, consider the sequence 
  $v_1 = v$, $v_2 = w(v_1)$, $v_3 = w(v_2)$, \dots, which must clearly
  run into a cycle at some point. Let $(v_i,v_{i+1},\dots,v_j)$ with
  $v_j = v_i$ be this cycle. (For instance, in
  Figure~\ref{fig:patternex} in the first example, starting at~$e$, 
  we run into the cycle $(b,c,f,b)$; and in the second example, starting
  at $e$, we run into the cycle $(d,a,c,f,d)$.) Clearly, the cycle
  $(v_i,v_{i+1},\dots,v_j)$ is self-saturating as demonstrated by
  the coloring~$c$.  
\end{proof}

A self-saturating cycle is \emph{mixed} if for
some $i,j \in \{1,\dots,n\}$ we have $\{v_i,v_{i+1}\} \in E$ and 
$\{v_j,v_{j+1}\}\notin E$, otherwise the cycle is called
\emph{pure}. In Figure~\ref{fig:patternex}, $(b,c,f,b)$ is a pure
self-saturating cycle and $(a, c, f, d, a)$ is a mixed self-saturating
cycle as proved by the two example colorings. Two facts
concerning mixed self-saturating cycles will be important:

\begin{fact}[{\cite[Lemma~6.5]{GottlobKS2004}}]\label{fact:d}
  For every pattern graph $P$ there is a constant~$d$ such that every
  basic graph that has a mixed self-saturating cycle with respect
  to~$P$ also has such a cycle of length at most~$d$.
\end{fact}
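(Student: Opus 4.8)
The plan is to bound the length of a \emph{shortest} mixed self-saturating cycle. Fix the pattern graph $P=(C,A^\oplus,A^\ominus)$ and a basic graph $B$ that has \emph{some} mixed self-saturating cycle with respect to $P$; among all of them pick one, say $\Gamma=(v_1,\dots,v_n,v_1)$ with witnessing coloring $c\colon\{v_1,\dots,v_n\}\to C$, for which $n$ is minimal. Our goal is $n\le d$ for a constant $d$ depending only on $|C|$; the statement then holds with this $d$. For $i\in\{1,\dots,n\}$ record the \emph{type} $\tau_i\in\{\oplus,\ominus\}$ of the $i$th step, where $\tau_i=\oplus$ if $\{v_i,v_{i+1}\}\in E$ and $\tau_i=\ominus$ otherwise (indices modulo $n$). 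By definition of a self-saturating cycle, $(c(v_i),c(v_{i+1}))\in A^{\tau_i}$ for all $i$, and ``mixed'' means both symbols occur among $\tau_1,\dots,\tau_n$. After a cyclic renumbering we may assume $\tau_n=\ominus$ and $\tau_1=\oplus$, i.e.\ that $v_1$ sits at a $\ominus$-to-$\oplus$ transition of $\Gamma$.

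The engine of the proof is a \emph{splicing} step. Suppose $c(v_i)=c(v_j)$ for two positions $2\le i<j\le n-1$. Then $\Gamma'=(v_1,\dots,v_i,v_{j+1},\dots,v_n,v_1)$ is again a cyclic sequence of pairwise distinct vertices, it is strictly shorter than $\Gamma$ (we deleted $v_{i+1},\dots,v_j$, at least one vertex, and $n-(j-i)\ge 3$), every step of $\Gamma'$ other than the new step $v_i\to v_{j+1}$ is an unchanged step of $\Gamma$ and hence valid, and $\Gamma'$ still contains the $\oplus$-step $v_1\to v_2$ and the $\ominus$-step $v_n\to v_1$, so it is still mixed. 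The new step has color pair $(c(v_i),c(v_{j+1}))=(c(v_j),c(v_{j+1}))\in A^{\tau_j}$; hence $\Gamma'$ is a mixed self-saturating cycle as soon as the actual type of the chord $\{v_i,v_{j+1}\}$ in $B$ equals $\tau_j$, or the pair $(c(v_j),c(v_{j+1}))$ is \emph{two-sided}, meaning it lies in $A^\oplus\cap A^\ominus$. By minimality of $\Gamma$ this splice must fail, so: whenever a color repeats at internal positions $i<j$, the pair $(c(v_j),c(v_{j+1}))$ is \emph{one-sided} and the chord $\{v_i,v_{j+1}\}$ carries the ``wrong'' edge/non-edge status in $B$. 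The mirror splice that instead deletes $v_i,\dots,v_{j-1}$ and reconnects $v_{i-1}\to v_j$ yields a second family of forbidden configurations, and applying these splices between \emph{several} occurrences of one color at once produces many more.

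The pigeonhole step is then routine in outline: there are at most $8|C|^3$ possible values of the local configuration $(\tau_{i-1},c(v_{i-1}),c(v_i),c(v_{i+1}),\tau_i)$ around a vertex $v_i$, so once $n$ exceeds a bound polynomial in $|C|$ some configuration must repeat many times along $\Gamma$. The real work — and the step I expect to be the crux — is to show that the ``wrong-status chord'' obstruction above cannot block \emph{all} the candidate splices supplied by the repeated configurations simultaneously, so that \emph{some} admissible splice exists, contradicting minimality. This is exactly where the mixedness hypothesis is essential and seems to force a case analysis: a pure self-saturating cycle really can be arbitrarily long with no short substitute (take a long induced cycle in $B$, coloured by a single colour that carries a $\oplus$-loop of $P$), so no argument avoiding mixedness can work. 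When every ``forward'' splice is blocked because its chord has the wrong status in $B$, one has to reroute the required chord through another repeated position, or through the $\ominus$-to-$\oplus$ transition at $v_1$, exploiting that the relevant stretch of the pattern contains both a $\oplus$-arc and a $\ominus$-arc, so that among the available chords at least one has the status the pattern demands.

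Concretely I would organize this case analysis in two stages, following \cite[Lemma~6.5]{GottlobKS2004}: first bound the number of maximal monochromatic \emph{blocks} of $\Gamma$ (a block being a maximal run of equal $\tau_i$) by pigeonholing on the colors of transition vertices and splicing between two like-colored transitions; then bound the length of each individual block, where the picture is cleaner because inside a pure block every chord created by splicing has a single prescribed type, the other blocks are left untouched so mixedness is automatic, and only the edge/non-edge status of one chord is at stake, so a handful of internal color repetitions suffice to force at least one correct choice. Combining the two bounds gives the constant $d$, which comes out polynomial in $|C|$ — hence in $2^n$ for an $E_1^n\,ae$-formula with $n$ monadic quantifiers, via Fact~\ref{fact:pattern} — and whose precise value is irrelevant for the uses of this fact in Section~\ref{section:upper}.
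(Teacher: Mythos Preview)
This statement is given as a \emph{Fact} in the paper, cited without proof from \cite[Lemma~6.5]{GottlobKS2004}; there is no proof here to compare your attempt against. On its own merits, your skeleton is sound: taking a shortest mixed self-saturating cycle, anchoring a $\ominus$-to-$\oplus$ transition at $v_1$ so that the steps $v_n\to v_1$ and $v_1\to v_2$ survive every internal splice, and then splicing at repeated colours is the right setup, and your check that the spliced cycle stays mixed and has length at least~$3$ is correct.

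However, the gap you yourself flag is the whole lemma, not a detail. Nothing in a pigeonhole on the local configurations $(\tau_{i-1},c(v_{i-1}),c(v_i),c(v_{i+1}),\tau_i)$ prevents~$B$ from giving \emph{every} candidate chord the wrong edge/non-edge status: those statuses are free parameters of~$B$, unconstrained by anything you have recorded so far. Your two-stage plan (bound the number of maximal $\tau$-blocks, then bound each block) is indeed the shape of the argument in \cite{GottlobKS2004}, but each stage needs a genuine combinatorial argument that you have not supplied --- and the within-block stage in particular cannot be as clean as you suggest, since inside a pure $\oplus$-block every chord you create must be an edge of~$B$ for the splice to succeed, and nothing you have said forces even one such edge to exist. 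As calibration, the bound the original proof yields already for $|C|=2$ is $d=2^{76}+2$ (quoted in the proof of Lemma~\ref{lem:mixedcycle}), which should make you wary of the hope that a light pigeonhole over $8|C|^3$ configurations closes the argument. What you have is a correct plan, not a proof.
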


\begin{fact}[{\cite[Section~6.3]{GottlobKS2004}}]\label{fact:6.3}
  For each pattern graph $P$ there exist $k$ and $t$ such that $B
  \in \Lang{saturation}(P)$ holds for all graphs $B$ that contain a
  mixed self-saturating cycle but are not $(k,t)$-special. 
\end{fact}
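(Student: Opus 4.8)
The plan is to exhibit, inside any such $B$, a coloring and a witness function built around a small \emph{gadget}: a short mixed self-saturating cycle. First I would invoke Fact~\ref{fact:d} to fix a mixed self-saturating cycle $Q = (v_1,\dots,v_n,v_1)$ of length $n \le d$ together with its coloring $c_Q$; the ``move-along-the-cycle'' witness function already saturates the vertices $v_1,\dots,v_n$, so the whole task reduces to coloring the remaining set $W = V(B) \setminus \{v_1,\dots,v_n\}$ and supplying a witness for each vertex of $W$. Since $Q$ is mixed, among its arcs there is at least one $\oplus$-arc $(c_Q(v_i),c_Q(v_{i+1})) \in A^{\oplus}$ and at least one $\ominus$-arc $(c_Q(v_j),c_Q(v_{j+1})) \in A^{\ominus}$; write $\alpha^{\oplus}=c_Q(v_i)$, $\beta^{\oplus}=c_Q(v_{i+1})$, $\alpha^{\ominus}=c_Q(v_j)$, $\beta^{\ominus}=c_Q(v_{j+1})$, and note that all four are colors that actually occur on $Q$.

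The coloring of $W$ would be controlled by a constant number of \emph{anchor} vertices $Z^{\oplus},Z^{\ominus} \subseteq W$, each of bounded size. Color every anchor in $Z^{\oplus}$ with $\beta^{\oplus}$ and every anchor in $Z^{\ominus}$ with $\beta^{\ominus}$, and color a remaining vertex $x$ by the rule: if $x$ has a neighbor in $Z^{\oplus}$, give $x$ the color $\alpha^{\oplus}$ and that neighbor as witness (the edge realizes the $\oplus$-arc, which is legal since $(\alpha^{\oplus},\beta^{\oplus}) \in A^{\oplus}$); otherwise, if $x$ has a non-neighbor in $Z^{\ominus}$, give $x$ the color $\alpha^{\ominus}$ and that non-neighbor as witness. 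For this to produce a legal, completely saturated coloring, two kinds of side conditions must hold. The easy kind is \emph{internal consistency}: the anchors themselves, now colored $\beta^{\oplus}$ or $\beta^{\ominus}$, must in turn find a witness; this is arranged by routing each anchor's witness back into $Q$ --- i.e.\ by additionally demanding that the anchors have the correct adjacency to a handful of fixed cycle vertices --- so that everything ``hangs off'' the gadget's own witness structure. The hard kind is the \emph{no-bad-vertex} condition: there must be no $x \in W$ outside the anchor sets that is simultaneously non-adjacent to all of $Z^{\oplus}$ and adjacent to all of $Z^{\ominus}$. Everything hinges on being able to choose $Z^{\oplus},Z^{\ominus}$ so that this condition holds while, at the same time, each anchor carries the prescribed adjacency to the fixed cycle vertices used for internal routing.

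The key structural claim I would isolate and prove is therefore purely graph-theoretic: \emph{for every constant $s$ there exist $k,t$ such that in every graph that is not $(k,t)$-special one can choose bounded-size anchor sets $Z^{\oplus},Z^{\ominus}$ --- disjoint from any prescribed set of $\le d$ cycle vertices --- satisfying the no-bad-vertex condition together with the internal-routing adjacencies.} The intuition is that a ``bad'' vertex is a boundary-type vertex, adjacent to all of one small set and to none of another; if such vertices cannot be avoided no matter how the small sets are chosen, then $B$'s vertices collapse into a bounded number of coarse neighborhood classes whose large blocks behave exactly like the equivalence classes in the definition of $(k,t)$-specialness, and deleting those blocks leaves bounded tree width --- contradicting non-specialness. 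I expect this implication to be the main obstacle: making it precise calls for a Ramsey-type pigeonhole on adjacency patterns combined with the structure theory of graphs of large tree width (an excluded-grid style argument), extracting from the tree width that survives deletion of $k$ equivalence classes a substructure rich enough to furnish the two anchor sets. This is precisely the combinatorial heart of Section~6.3 of \cite{GottlobKS2004}; since it carries no complexity-theoretic content, it transfers verbatim to the $\Class L$ setting, which is all the logspace upper-bound argument that invokes this fact actually requires.
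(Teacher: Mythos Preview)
The present paper does not prove this statement at all: it is stated as a \emph{Fact} with a citation to \cite[Section~6.3]{GottlobKS2004} and is used as a black box in the proof of Lemma~\ref{lem:self}. There is therefore no proof in this paper to compare your proposal against. Your sketch is aimed at reconstructing the argument of the cited paper, and you yourself say as much in your final sentence; as an outline of that external argument it is plausible (short mixed cycle as gadget, anchor vertices to route the rest, the structural dichotomy between ``anchors can be found'' and ``the graph is $(k,t)$-special''), but verifying it would require checking Section~6.3 of \cite{GottlobKS2004}, not anything in the paper under review.
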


\begin{lemma}\label{lem:self}
  For every pattern graph $P$, we have
  {
    \begin{align*}
      \Lang{saturation}(P) \cap \{B \mid B\text{
        contains a mixed self-saturating cycle}\} \in \Class L.
    \end{align*}%
  }%
\end{lemma}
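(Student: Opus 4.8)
The plan is to split on whether the input graph $B$ is $(k,t)$-special, using the two constants $k$ and $t$ supplied by Fact~\ref{fact:6.3}. First I would run the logspace procedure from Lemma~\ref{lem:shrink3} on $B$ to determine whether $B$ is $(k,t)$-special. If it is, then $B$ falls under Lemma~\ref{lem:kt}, so we can decide $B \in \Lang{saturation}(P)$ in logarithmic space and are done; note that in this branch we do not even need to know whether a mixed self-saturating cycle exists, since membership in $\Lang{saturation}(P)$ is decided outright. If $B$ is \emph{not} $(k,t)$-special, then Fact~\ref{fact:6.3} tells us that as soon as $B$ contains \emph{any} mixed self-saturating cycle, we automatically have $B \in \Lang{saturation}(P)$. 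Hence on the promise that $B$ does contain a mixed self-saturating cycle (which is the intersection we are computing), a non-$(k,t)$-special graph is always a yes-instance and we may accept.

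So the whole computation reduces to: (i) decide $(k,t)$-specialness via Lemma~\ref{lem:shrink3}; (ii) if special, invoke the algorithm of Lemma~\ref{lem:kt}; (iii) if not special, accept. To make this a bona fide logspace decision procedure for the set $\Lang{saturation}(P) \cap \{B \mid B \text{ contains a mixed self-saturating cycle}\}$ rather than just a promise-problem algorithm, I would add an explicit check in branch (iii) that $B$ really does contain a mixed self-saturating cycle. By Fact~\ref{fact:d} there is a constant $d$ (depending only on $P$) such that if any mixed self-saturating cycle exists then one of length at most $d$ exists; since $d$ is constant, a logspace machine can enumerate all vertex sequences $(v_1,\dots,v_{n+1})$ of length $n+1 \le d+1$, and for each candidate cycle enumerate all colorings $c\colon\{v_1,\dots,v_n\}\to C$ (again constantly many, as $|C|$ and $n$ are bounded by constants) and verify the $A^\oplus$/$A^\ominus$ conditions and the ``mixed'' condition by reading edges of $B$. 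This is a fixed first-order-style search over constant-size objects, clearly doable in logarithmic space. Only if such a short mixed self-saturating cycle is found do we accept in branch (iii); otherwise we reject, which is correct since then $B$ is outside the set we are deciding.

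The main obstacle is not any single step but making sure the case analysis is watertight: specifically, that the constants $k$ and $t$ used in step (i) are exactly the ones delivered by Fact~\ref{fact:6.3}, so that the implication ``not $(k,t)$-special and has a mixed self-saturating cycle $\Rightarrow$ saturable'' applies verbatim in branch (iii), while simultaneously Lemma~\ref{lem:kt} applies to those same $k$ and $t$ in branch (ii). Since Lemma~\ref{lem:kt} holds for \emph{all} $k$ and $t$, there is no tension here. The remaining routine point is that all three sub-procedures run in logarithmic space and are composed a constant number of times, and logspace is closed under such composition; together with the correctness argument above, this establishes the lemma.
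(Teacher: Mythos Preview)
Your overall strategy matches the paper's: branch on $(k,t)$-specialness (with $k,t$ from Fact~\ref{fact:6.3}), invoke Lemma~\ref{lem:kt} in the special case, invoke Fact~\ref{fact:6.3} in the non-special case, and use Fact~\ref{fact:d} to make the mixed-cycle test a constant-size search. The paper simply runs the mixed-cycle test \emph{first} and then branches, but the ingredients are the same.

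There is, however, a genuine correctness gap in your branch~(ii). You write that when $B$ is $(k,t)$-special ``we do not even need to know whether a mixed self-saturating cycle exists, since membership in $\Lang{saturation}(P)$ is decided outright.'' That is fine for a promise problem, but it is wrong for deciding the set $\Lang{saturation}(P)\cap\{B\mid B\text{ contains a mixed self-saturating cycle}\}$. A $(k,t)$-special graph $B$ can perfectly well lie in $\Lang{saturation}(P)$ while having \emph{only pure} self-saturating cycles (Lemma~\ref{lem:selfsatcycle} guarantees some self-saturating cycle, not a mixed one). Your algorithm would accept such a~$B$ in branch~(ii), yet $B$ is not in the intersection you are supposed to decide. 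You added the explicit mixed-cycle check only in branch~(iii); it is needed in branch~(ii) as well, or---more simply and as the paper does---performed once at the very start, rejecting immediately if no short mixed self-saturating cycle is found. With that single adjustment your argument is complete and coincides with the paper's.
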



\begin{proof}
  Let $k$, $t$, and $d$ be the constants from  Facts~\ref{fact:d}
  and~\ref{fact:6.3}. By Fact~\ref{fact:d}, we can decide whether an
  input graph $B$ contains a mixed self-saturating cycle by iterating
  over all possible cycles of 
  maximum length~$d$ and then testing for all possible colorings
  whether a saturation has been found for the cycle. If $B$ fails
  these tests, we can clearly reject.
  
  Otherwise, $B$ has a mixed self-saturating cycle. Test whether $B$
  is $(k,t)$-special using Lemma~\ref{lem:shrink3} and, if so, use
  Lemma~\ref{lem:kt} to decide whether $B \in \Lang{saturation}(P)$
  holds. Finally, if $B$ is not $(k,t)$-special, we can accept by
  Fact~\ref{fact:6.3}.
\end{proof}

\subparagraph*{Arbitrary Basic Graphs}
The last step is to extend our algorithm to graphs that do not contain
mixed self-saturating cycles (and are not 
$(k,t)$-special, but this will no longer be important). Clearly, by
considering the union of the languages from Lemma~\ref{lem:self}
above and Lemma~\ref{lem:pure} below, we see that
$\Lang{saturation}(P) \in \Class L$ holds for all pattern graphs~$P$.

\begin{lemma}\label{lem:pure}
  For every pattern graph $P$, we have
  {
    \begin{align*}
      \Lang{saturation}(P) \cap \{B \mid B\text{ contains no mixed
        self-saturating cycle}\} \in \Class L.
    \end{align*}%
  }%
\end{lemma}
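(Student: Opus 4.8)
The plan is to handle the remaining case --- basic graphs $B$ that can be saturated by $P$ but contain no mixed self-saturating cycle --- by a structural analysis of what a witness function can look like under this restriction. By Lemma~\ref{lem:selfsatcycle}, if $B \in \Lang{saturation}(P)$ then $B$ contains \emph{some} self-saturating cycle, and since no mixed one exists, every self-saturating cycle of $B$ is pure: it is either an ``edge cycle'' (all consecutive pairs are edges of $B$) or a ``non-edge cycle'' (all consecutive pairs are non-edges of $B$, i.e.\ a cycle in the complement $\bar B$). Following the strategy of Gottlob et al., I would first argue that the absence of mixed self-saturating cycles forces $B$ to have a very restricted shape: I expect that such graphs must be $(k,t)$-special for constants $k,t$ depending only on~$P$. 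Intuitively, a graph that is ``complicated'' --- neither of bounded tree width nor reducible to bounded tree width by deleting a few equivalence classes --- will contain long alternating structures that can be colored to produce a mixed self-saturating cycle, contradicting the hypothesis. This is essentially the combinatorial content one extracts from the dichotomy in Sections~4--6 of \cite{GottlobKS2004}, specialized to the pure case.

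Concretely, I would proceed as follows. First, observe that the membership test ``$B$ is $(k,t)$-special'' is in $\Class L$ by Lemma~\ref{lem:shrink3}. Second, establish the key structural claim: there exist $k$ and $t$, depending only on~$P$, such that every basic graph that contains no mixed self-saturating cycle is $(k,t)$-special. Given this claim, the algorithm is immediate: on input $B$, first reject if $B$ contains a mixed self-saturating cycle (checkable in $\Class L$ by Fact~\ref{fact:d}, iterating over cycles of length at most~$d$ and over all colorings); otherwise $B$ is $(k,t)$-special, so invoke Lemma~\ref{lem:kt} to decide $B \in \Lang{saturation}(P)$ in logarithmic space. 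Since the class of inputs we must handle here is precisely those with no mixed self-saturating cycle, this settles the lemma, and together with Lemma~\ref{lem:self} it yields $\Lang{saturation}(P) \in \Class L$ for every pattern graph~$P$.

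The main obstacle is proving the structural claim that ``no mixed self-saturating cycle'' implies $(k,t)$-special --- this is where the graph-theoretic heavy lifting of \cite{GottlobKS2004} resides, and it must be extracted from their Section~6 arguments rather than reproved from scratch. The delicate points are: (i) a pure self-saturating cycle in $B$ is an honest cycle in $B$, while one in $\bar B$ is a cycle in the complement, and one must rule out \emph{both} kinds of ``escape routes'' that would let a large or structurally rich graph host a mixed cycle; (ii) the bound $d$ from Fact~\ref{fact:d} and the bounds $k,t$ from Fact~\ref{fact:6.3} must be shown to be compatible, so that a graph failing $(k,t)$-specialness genuinely admits a short mixed cycle; and (iii) one must be careful that ``equivalent vertices'' (each class a clique or independent set) interact correctly with cycle colorings, so that collapsing large equivalence classes does not destroy the no-mixed-cycle property. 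Once the structural claim is in hand, the logspace algorithm itself is routine, relying only on Lemma~\ref{lem:shrink3}, Lemma~\ref{lem:kt}, and Fact~\ref{fact:d}, all already available.
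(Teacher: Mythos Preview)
Your central structural claim --- that for suitable constants $k,t$ (depending only on~$P$) every basic graph with no mixed self-saturating cycle is $(k,t)$-special --- is false, and this is a genuine gap, not a detail to be filled in from \cite{GottlobKS2004}. The existence of mixed self-saturating cycles depends as much on the pattern graph~$P$ as on the input graph~$B$. For instance, take any pattern with $A^\ominus = \emptyset$ (say, a single $\oplus$-self-loop at black). Then \emph{no} graph whatsoever admits a mixed self-saturating cycle, since every self-saturating cycle must use only $\oplus$-arcs. Yet the class of all basic graphs is certainly not contained in the $(k,t)$-special graphs for any fixed $k$ and~$t$: a large $3$-regular expander has only singleton equivalence classes and unbounded tree width. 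Such a graph can very well lie in $\Lang{saturation}(P)$ (color everything black and let each vertex witness to a neighbor), so your algorithm, which falls back on Lemma~\ref{lem:kt} and hence rejects all non-$(k,t)$-special inputs, would answer incorrectly. Fact~\ref{fact:6.3} does \emph{not} give the contrapositive you want: it only says that graphs \emph{with} a mixed cycle that are not $(k,t)$-special are automatically saturable, which is orthogonal to your claim.

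The paper proceeds quite differently and does not route the pure case through $(k,t)$-specialness at all. Instead it appeals to Theorem~5.17 (and the algorithm summarized in Theorem~5.14) of \cite{GottlobKS2004}, which already gives a polynomial-time decision procedure correct on all inputs whose self-saturating cycles are pure. The contribution here is to observe that each ``computationally relevant action'' of that procedure --- forming the complement graph, computing connected components of $B$ and of~$\bar B$ (Reingold), testing whether a component has tree width below a fixed constant (\cite{ElberfeldJT2010}), and running the \textsc{mso} model-checking subroutines $\mathit{satucheck}_P$ and $\mathit{satucheck}'_P$ on bounded-tree-width pieces (Fact~\ref{fact:ejt}) --- can be carried out in logarithmic space. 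The key structural insight you are missing is that the pure case is handled component-by-component in $B$ and in~$\bar B$, with bounded tree width arising at the level of individual components rather than via a global $(k,t)$-specialness bound on~$B$.
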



\begin{proof}
  Let $B$ be our input graph. Using Fact~\ref{fact:d} we can first
  rule out (even using a first-order formula) those $B$ containing a mixed 
  self-saturating cycle. Thus, for $B \in \Lang{saturation}(P)$ to
  hold, all self-saturating cycles of~$B$ must be pure (the reverse is
  not true, however: $B$~could have a pure self-saturating cycle that
  cannot be extended to a coloring of the whole graph). In
  \cite{GottlobKS2004}, this situation is addressed in Theorem~5.17,
  which states (reformulated in the terminology 
  of the present paper): \emph{There is a polynomial-time Turing
    machine that decides $\Lang{saturation}(P)$ correctly whenever all
    self-saturating cycles of the input graph~$G$ are pure.} For the
  proof of this statement, the actual algorithm is summarized at the end of 
  \cite[Theorem~5.14]{GottlobKS2004} as follows: ``In fact, the
  computationally relevant actions of the algorithm described in this proof are:
  ---~Computing the complement $G^c$ of $G$ [\dots]. ---~Determining 
  the connected components of $G$ or $G^c$ [\dots]. ---~Checking
  for each component, whether its treewidth is smaller than a
  constant [\dots]. ---~Performing a constant number of further
  [\dots] actions on single components, such as the procedure
  calls $\mathit{satucheck}_P(G)$ or $\mathit{satucheck}'_P(G)$.''
  The omitted parts (``[\dots]'') are statements about the
  \emph{time complexity} of these operations.

  To see that these operations can also be performed in logarithmic 
  space, first note that the complement graph~$G^c$ ($\bar G$~in the 
  notation of this paper) of~$G$ is obtained by simply exchanging
  edges and non-edges (without introducing self-loops, of
  course). Determining the connected components of an undirected graph
  can be done in logarithmic space using Reingold's 
  algorithm. Determining the tree width of a component can be done
  in logarithmic space~\cite{ElberfeldJT2010}. Finally, the procedure calls
  ``$\mathit{satucheck}_P(G)$ or $\mathit{satucheck}'_P(G)$''  
   consist  of checking whether a graph $G$ of bounded
  tree width satisfies a fixed \textsc{mso} formula, which can be done
  in logarithmic space by Fact~\ref{fact:ejt}.
\end{proof}

\subsection{\emph{E}\boldmath$_\mathsf{1}$\emph{ae} Over Basic Graphs: From L to FO}
\label{sec:e1ae}

Our final task for this paper is showing $\FaginDef{basic}(E_1ae)
\subseteq \Class{FO}$.\footnote{\specialsmall In contrast,
  Lemmas~\ref{lem:lower-e1e1ae} and~\ref{lem:lower-e2ae} show that if
  we have \emph{two} monadic 
quantifiers or one \emph{binary} quantifier, the prefix class contains
an $\Class L$-complete problem.} By Fact~\ref{fact:pattern}, it
suffices to show $\Lang{saturation}(P) \in \Class{FO}$ for all pattern
graphs with \emph{two} colors (denoted ``white'' and ``black'' in
the following) and this will be our objective in this
section.\footnote{\specialsmall In contrast, using   three colors we
  can describe $\Class{L}$-complete problems: $\Lang{saturation}(P) =
  A_3$ where $P$ contains a $\oplus$-labeled 3-cycle and $A_3$ is the $\Class
L$-complete language from Table~\ref{tab:lower}.}

In the previous section we proved $\Lang{saturation}(P) \in \Class L$
for all pattern graphs by developing logspace algorithms that worked
for larger and larger classes of graphs. However, this approach is
bound to fail for the class $\Class{FO}$ since properties like 
``the graph is a tree'' (let alone ``the graph is $(k,t)$-special'') are
not expressible in first-order logic. Instead, in this section we show
$\Lang{saturation}(P) \in \Class{FO}$ directly for each possible
pattern graph with two colors.

The simplest case arises when $P = (C, A^\oplus, A^\ominus)$ is
acyclic (meaning that the directed graph $(C, A^\oplus \cup
A^\ominus)$ is acyclic): Lemma~\ref{lem:selfsatcycle} shows that we
then have $\Lang{saturation}(P) = \emptyset$ since 
self-saturating cycles cannot exist for such~$P$. Thus, we only need to
consider pattern graphs $P$ with cycles (self-loops are also cycles,
here). Since $P$ only has two
colors, there are only few ways in which such cycles may
arise. The more cycles there are, the easier 
it will be to color the graph, so we first handle the case that there
are cycles both in $A^\oplus$ and $A^\ominus$, then that there is a cycle
in $A^\oplus$ or in $A^\ominus$, and finally that there is only a
cycle in $A^\oplus \cup A^\ominus$. 

\begin{lemma}\label{lem:twocycles}
  Let $P =  (\{\mathit{black},\mathit{white}\},A^\oplus,A^\ominus)$
  contain cycles both in $A^{\oplus}$ and~$A^\ominus$. Then
  $\Lang{saturation}(P)$ contains all graphs with at least two
  vertices (and is hence in~$\Class{FO}$).
\end{lemma}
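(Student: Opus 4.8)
The plan is to verify directly that every basic graph $B=(V,E)$ with $|V|\ge 2$ can be saturated by $P$; since a one-vertex graph admits no valid witness function at all and, as we will see, every larger graph is saturable, this identifies $\Lang{saturation}(P)$ with the (first-order definable) class of basic graphs having at least two vertices. First I would record the only structural feature of $P$ that matters: with just two colors, a cycle in a relation $A\subseteq C\times C$ is either a self-loop $(\gamma,\gamma)\in A$ for some color $\gamma$, or the two-cycle $\{(\mathit{black},\mathit{white}),(\mathit{white},\mathit{black})\}\subseteq A$; call such an $A$ \emph{cyclic}. By hypothesis both $A^\oplus$ and $A^\ominus$ are cyclic.

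The technical core is a small sub-claim: if $H$ is any basic graph without isolated vertices and $A\subseteq C\times C$ is cyclic, then there is a coloring $c$ of $V(H)$ such that every vertex $v$ has a \emph{neighbor} $u$ in $H$ with $(c(v),c(u))\in A$. If $A$ has a self-loop at $\gamma$, color all of $V(H)$ with $\gamma$ and take any neighbor as the witness. Otherwise $A$ contains the two-cycle; here I would pick a spanning forest of $H$ (a spanning tree of each connected component) and properly $2$-color each of its trees. Since $H$ has no isolated vertex, every vertex lies in a forest-tree with at least two vertices and hence has a forest-neighbor, and in a proper coloring that neighbor has the opposite color — exactly what the two-cycle requires.

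Then I would split into two dual cases. If $B$ has no isolated vertex, apply the sub-claim with $H=B$ and $A=A^\oplus$: for each $x$ let $w(x)$ be the witnessing neighbor. Then $x\ne w(x)$ and $\{x,w(x)\}\in E$ with $(c(x),c(w(x)))\in A^\oplus$, so condition~(2) in the definition of a witness function holds and condition~(3) is vacuous. If instead $B$ has an isolated vertex $z$, then, because $|V|\ge 2$ and $z$ is adjacent to nothing, $B$ has no universal vertex; equivalently the complement graph $\bar B$ has no isolated vertex. Apply the sub-claim with $H=\bar B$ and $A=A^\ominus$: for each $x$ let $w(x)$ be a $\bar B$-neighbor with $(c(x),c(w(x)))\in A^\ominus$. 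A $\bar B$-neighbor is precisely a $B$-non-neighbor, so $\{x,w(x)\}\notin E$ and $x\ne w(x)$, condition~(3) holds and condition~(2) is vacuous. Either way, $(c,w)$ witnesses $B\in\Lang{saturation}(P)$.

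The two cases are exactly dual under complementation ($E\leftrightarrow$ non-$E$, $A^\oplus\leftrightarrow A^\ominus$), and the single fact gluing them together is that a graph on at least two vertices cannot simultaneously have an isolated vertex and a universal vertex. I expect the only real content — the main obstacle — to be the two-cycle half of the sub-claim: one color no longer suffices, and since $B$ (and $\bar B$) need not be bipartite one cannot just properly $2$-color the whole graph, so routing each vertex through a properly $2$-colored spanning forest is the needed workaround; the rest is routine checking of conditions (1)--(3).
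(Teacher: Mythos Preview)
Your proposal is correct and follows essentially the same approach as the paper: both factor the argument through a sub-claim about graphs with no isolated vertices and a cyclic arc set, handle the self-loop case by monochromatic coloring, handle the two-cycle case by a $2$-coloring along a spanning structure (you use a spanning forest, the paper uses BFS-distance parity from a root in each component --- the same thing), and then pass to $\bar B$ and $A^\ominus$ when $B$ has an isolated vertex. The observation that an isolated vertex in $B$ forces $\bar B$ to have no isolated vertex is exactly the glue the paper uses as well.
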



\begin{proof}
  Suppose all vertices of~$B$ have degree at least~$1$. Then $B \in
  \Lang{saturation}(P)$ holds for one of two reasons:
  \begin{enumerate}
  \item If there is a self-loop in $A^\oplus$ at one of the colors
    (\patterntwo{any}{}{->}{$\oplus$}{any}{}{any}{} or
    \patterntwo{->}{$\oplus$}{any}{}{any}{}{any}{} where the gray arcs
    can be arbitrary and also be missing) then we can simply color all 
    vertices with the color of the self-loop. The witness function can
    be set to $w(v) = u$ where $u$ is any neighbor of~$v$.
  \item If there is no self-loop in $A^\oplus$, the cycle in
    $A^\oplus$ must be
    \patterntwo{any}{}{any}{}{->}{$\oplus$}{->}{$\oplus$}. We treat
    each connected component $C$ of~$B$ separately. Pick any vertex $c \in
    C$. For each vertex~$v$ of the component, color it white if it has
    an even distance from~$c$, otherwise color it black. Setup the
    witness function $w$ as follows: Map $c$ to any of its
    neighbors. Map each vertex $v$ in the component to one of its
    neighbors that has distance $1$ less from~$c$. Clearly, such a
    neighbor must exist and it will have the opposite color from~$v$.
  \end{enumerate}
  Now suppose that there is a vertex in $B$ that has degree~$0$. Then
  in the complement graph $\bar B$ all vertices have
  an edge to this vertex and, hence, all have degree at
  least~$1$. We can now repeat the above argument, only for a cycle in
  $A^\ominus$ instead of~$A^\oplus$.
\end{proof}

\begin{lemma}\label{lem:onecycle}
  Let $P =  (\{\mathit{black},\mathit{white}\},A^\oplus,A^\ominus)$
  contain a cycle in $A^{\oplus}$ or in~$A^\ominus$. Then
  $\Lang{saturation}(P) \in \Class{FO}$.
\end{lemma}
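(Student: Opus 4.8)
The plan is to handle the two subcases "cycle in $A^\oplus$" and "cycle in $A^\ominus$" separately, using the complementation duality that lets us pass between $B$ and $\bar B$ (and correspondingly between $A^\oplus$ and $A^\ominus$). Since $P$ has only two colors, a cycle in $A^\oplus$ is either a $\oplus$-self-loop at one color, or a $\oplus$-$2$-cycle between black and white. If there is a $\oplus$-self-loop at, say, black, then every graph $B$ in which every vertex has a neighbor lies in $\Lang{saturation}(P)$: color everything black and let $w(v)$ be an arbitrary neighbor. So the only obstruction is isolated vertices, and the condition "$B$ has an isolated vertex" together with a recursive application of the argument to $\bar B$ (which then has no isolated vertex) will need to be folded in — but the complement of "every vertex has a neighbor" is first-order. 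If instead the $\oplus$-cycle is the black–white $2$-cycle and there is no $\oplus$-self-loop, we are essentially in case~2 of the proof of Lemma~\ref{lem:twocycles}: on each connected component with at least two vertices we can $2$-color by parity of distance from a root and move along toward the root; the snag is that a component might be a single vertex, again requiring a passage to $\bar B$.

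The real work, I expect, is that having only a cycle in $A^\oplus$ (and possibly none in $A^\ominus$) is not by itself enough to saturate $B$: we must still be able to deal with the vertices and the "non-edge witness requirement" $A^\ominus$. The key reduction should be: after using the $\oplus$-cycle to take care of "most" vertices, the remaining constraints are about how individual vertices behave, and because the pattern graph is fixed and has only two colors, these residual constraints collapse to a constant number of first-order-checkable local conditions (degree-$0$ vs.\ degree-$\ge 1$, having a white/black neighbor, having a white/black non-neighbor, and a bounded amount of "the whole graph is essentially complete / essentially edgeless"). Concretely, I would argue that for each fixed $P$ with a $\oplus$-cycle, $\Lang{saturation}(P)$ is the union of a constant number of cases, each defined by a Boolean combination of: "$B$ is complete", "$B$ has $\le k$ vertices", "every vertex has degree $\ge 1$", and their complements applied to $\bar B$; and Lemma~\ref{lem:selfsatcycle} handles the acyclic fallback. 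Each such condition is first-order over basic graphs (completeness and bounded-size are trivially $\Class{FO}$; "$\forall x\,\exists y\,E(x,y)$" is $\Class{FO}$), so the union is in $\Class{FO}$.

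The main obstacle will be bookkeeping the case split cleanly: one has to be sure that when $A^\ominus$ contributes no cycle, it still admits *some* pair $(c(v),c(w(v)))$ for the finitely many "leftover" vertices whose witness must go along a non-edge — i.e.\ that $A^\ominus$ is nonempty, or else that no such leftover vertices arise. If $A^\ominus=\emptyset$, then every witness edge must be a real edge, so $\Lang{saturation}(P)$ is contained in "$\forall x\,\exists y\,E(x,y)$" intersected with the $\oplus$-colorability condition, which is exactly what the case-$1$/case-$2$ arguments above decide; if $A^\ominus\neq\emptyset$, pick a fixed arc $(a,b)\in A^\ominus$ and route every remaining vertex's witness to a suitable non-neighbor of the appropriate color, which again reduces to first-order local conditions on the existence of such non-neighbors. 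I would therefore organize the proof as: (i) reduce to the case $P$ has a cycle in $A^\oplus$ by complementation; (ii) split on whether the $\oplus$-cycle is a self-loop or the $2$-cycle; (iii) in each branch, write $\Lang{saturation}(P)$ explicitly as a first-order-definable Boolean combination of the standard local predicates, with Lemma~\ref{lem:selfsatcycle} giving emptiness whenever no coloring can even start. The delicate point — and where I expect to spend the most care — is verifying that the "leftover vertices" in each branch really are handled by purely local, hence first-order, conditions rather than by a genuine reachability or parity computation over the whole graph.
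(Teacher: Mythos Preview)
Your overall plan matches the paper's: reduce by complementation to a cycle in $A^\oplus$, invoke Lemma~\ref{lem:twocycles} when $A^\ominus$ also has a cycle, observe that the argument of that lemma already saturates every $B$ with minimum degree~$\ge 1$, and then deal with the isolated vertices via whatever $A^\ominus$ offers. The paper organizes the remaining analysis around the single $\ominus$-arc that must exist (since $A^\ominus$ is nonempty and acyclic), and splits on whether the target color of that arc lies on the $\oplus$-cycle.

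The genuine gap in your proposal is the concrete claim about \emph{which} first-order predicates suffice. You assert that $\Lang{saturation}(P)$ is a Boolean combination of ``$B$ is complete'', ``$|V|\le k$'', ``$\forall x\,\exists y\,E(x,y)$'', and their $\bar B$-versions. This is not enough for the hardest two-color pattern, namely a $\oplus$-self-loop at black together with the $\ominus$-arc black$\to$white and the $\oplus$-arc white$\to$black. There the paper shows that, once $B$ has both an isolated vertex and a non-isolated one, membership in $\Lang{saturation}(P)$ hinges on whether the non-isolated part is a \emph{single edge} (no) versus a matching with $\ge 2$ edges or a component on $\ge 3$ vertices (yes). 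Distinguishing ``exactly one edge'' from ``at least two edges'' on arbitrarily large vertex sets is not a Boolean combination of your listed predicates (two such graphs can agree on completeness, on $|V|\le k$ for every fixed~$k$, on ``all degrees $\ge 1$'', and on the $\bar B$-versions). The missing predicate---``there exist two distinct edges'' or, equivalently, ``$\exists x\exists y\exists z\,(E(x,y)\wedge E(y,z)\wedge x\neq z)$ or two disjoint edges''---is of course itself first-order, so your conclusion survives; but you will not discover it without carrying out precisely the three-way case split the paper performs. Your ``recursive application to~$\bar B$'' is also not what happens: complementation is used once, at the outset, and thereafter the isolated vertices are handled directly by routing their witnesses to a specific white vertex manufactured inside~$S$, which is where the single-edge obstruction appears.
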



\begin{proof}
  By possibly switching to complement graphs, we may assume that 
  there is a cycle in $A^\oplus$. We may also assume that there is
  \emph{no} cycle in $A^\ominus$ since, otherwise, we can apply
  Lemma~\ref{lem:twocycles}. As in the proof of that lemma, if in
  the basic input graph $B = (V,E)$ all vertices have degree at
  least~$1$, then $B \in \Lang{saturation}(P)$ holds; so assume that
  there is a vertex of degree~$0$ in~$B$. Then $A^\ominus = 
  \emptyset$ implies $B \notin \Lang{saturation}(P)$ since there cannot be an
  edge between a degree-$0$ vertex and its witness. Similarly,
  if all vertices of~$B$ have degree~$0$, then $B \notin
  \Lang{saturation}(P)$: Since $A^\ominus$ is acyclic, there is no way
  to assign a color to all vertices. So, in the following we may
  assume that the set $S = \{v \mid v$ has degree at least $1$ in
  $B\}$ is neither empty nor all of~$V$ and that $A^\ominus \neq
  \emptyset$. 

  Since $A^\ominus$ neither contains a cycle nor is empty, it can consist only of a single 
  edge: $A^\ominus = \{(\mathit{black},\mathit{white})\}$ or
  $A^\ominus = \{(\mathit{white},\mathit{black})\}$. Because of the
  symmetry of the colors, we only consider the first case. Suppose
  that the color white lies on a cycle in $A^\oplus$ (either because
  of a self-loop at the white color as in
  \patterntwo{->}{$\oplus$}{any}{}{->}{$\ominus$}{any}{}
  or because of a cycle involving
  both colors as in
  \patterntwo{any}{}{any}{}{->}{$\ominus\oplus$}{->}{$\oplus$}). We
  can now color the graph as follows: Color all 
  vertices in $S$ according to the method of
  Lemma~\ref{lem:twocycles} (either all of them are white or we
  alternate between white and black according to the distance to a
  fixed vertex of each component) and setup the witness function~$w$
  on~$S$. Then some vertex $v_0 \in S$ will be colored white
  (typically, many are white, but at least one vertex will be
  white). Color all vertices in $V \setminus S$ black and set the
  witness function to $w(v) = v_0$ for $v \in V \setminus S$. Clearly,
  there will be no edges between $v$ and $v_0$ and, thus, the
  $\ominus$-arc from black to white is saturated.
  
  Now suppose that the color white does not lie in a cycle in
  $A^\oplus$. With most cases ruled out above, the only way
  this can happen is when there is a $\oplus$-self-cycle at black,
  there is the assumed $\ominus$-arc from black to white, and possibly
  an $\oplus$-arc back from white to black:
  \patterntwo{any,draw=none,overlay}{}{->}{$\oplus$}{->}{$\ominus$}{any,draw=none}{}\ \
  or
  \patterntwo{any,draw=none,overlay}{}{->}{$\oplus$}{->}{$\ominus$}{->}{$\oplus$}~. Clearly,
  in the first case, where the backward $\oplus$-arc is missing, $B
  \notin \Lang{saturation}(P)$ holds since 
  the vertices in $S$ \emph{must} be colored black and there is no way
  to then color the vertices in $V \setminus S$. Thus, let us now
  concentrate on the case
  \patterntwo{any,draw=none,overlay}{}{->}{$\oplus$}{->}{$\ominus$}{->}{$\oplus$}~. We
  distinguish three cases:  

  \begin{enumerate}
  \item \emph{$B$ consists of a single edge $\{u,v\}$ plus some
      isolated vertices.} Then we must have $B
    \notin\Lang{saturation}(P)$: We must color all isolated vertices,
    the vertices in $V \setminus S$, black since there cannot be an edge from them to
    their witness in $B$ and $(\mathit{black},\mathit{white})$ is the
    only edge in $A^\ominus$. Then at least one of the two endpoints
    of the single edge in $B$ (say, $u$) must be white, namely the 
    endpoint that is the witness of at least one vertex in $V
    \setminus B$. This enforces that the other endpoint, $v$, is black (since
    $(\mathit{white},\mathit{black}) \in A^\oplus$ is the only edge
    starting at the color white in the pattern graph). Then $v$ cannot
    have a witness: The vertex~$u$ is white, so no edge in $A^\oplus$ can be
    used, nor is any of the other vertices in $V \setminus S$ white,
    so the edge in $A^\ominus$ cannot be used either. 
  \item \emph{$B$ restricted to $S$ is a matching with at least two
      edges.} In this case, pick the first two edges $\{v_1,v_2\} \in
    E$ and $\{v_3,v_4\} \in E$ and color $v_1$ in
    white, $v_2$ in black, $v_3$ in white, and $v_4$ in black. Define
    the witness function $w$ by $w(v_1) = v_2$, $w(v_2) = v_3$,
    $w(v_3) = v_4$, and $w(v_4) = v_1$. Clearly, the coloring and the
    witness function are correct on the vertex set
    $\{v_1,v_2,v_3,v_4\}$. Extend this to a coloring of all vertices
    as follows: All vertices of $S \setminus \{v_1,\dots,v_4\}$ are
    black and their witness is the other end of the edge they are
    attached to, all vertices of $V \setminus S$ are black and their
    witness is~$v_1$ (which is white and there is no edge in $B$
    between vertices in $V \setminus S$ and $v_1 \in S$).
  \item \emph{At least one connected component of~$B$ contains $3$ or
      more vertices.} Let $C$ be such a component. Consider a
    spanning tree~$T$ of $C$ and let $v$ be a leaf of this tree. Color
    $v$ white and all other vertices in the component black. The
    witness of $v$ is its neighbor $u$ in the spanning tree. The
    witness of $u$ is any of its neighbors other than $v$ (such a
    vertex must exist since the spanning tree contains a path of
    length at least~$2$). The witnesses of all other vertices in the
    component is any of their neighbors in the spanning tree. Clearly,
    each vertex of the component is now connected by an edge in~$E$ to
    a black witness as required by $A^\oplus$. Now color all remaining
    vertices of $S$ black, make any of their neighbors in $B$ their
    witnesses, color all vertices of $V \setminus S$ black, and make
    $v$ their witness. As in the previous case, all vertices of $V
    \setminus S$ now have a white witness and there is no edge between
    them and the witness; which is exactly what $A^\ominus$
    requires. \qedhere
  \end{enumerate}
\end{proof}

We are left with the case that the set $A^\oplus \cup
A^\ominus$ contains a cycle, but neither $A^\oplus$ nor~$A^\ominus$
does. This is only possible when $P$ is either
\patterntwomixed{->}{$\oplus$}{->}{$\ominus$} or
\patterntwomixed{->}{$\ominus$}{->}{$\oplus$}. For this special kind
of cycle,
there is an analogue of Fact~\ref{fact:6.3} that does not refer to
$(k,t)$-special graphs: 

\begin{fact}[{\cite[Lemma~6.7]{GottlobKS2004}}]\label{fact:6.7}
  For every pattern graph $P$, we have $B \in \Lang{saturation}(P)$
  for all~$B$ that contain a self-saturating cycle for~$P$ on  which
  $\oplus$- and $\ominus$-arcs alternate.  
\end{fact}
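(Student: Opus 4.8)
The plan is to prove the constructive form of the claim: given a self-saturating cycle $\sigma=(v_1,\dots,v_n,v_1)$ for $P=(C,A^\oplus,A^\ominus)$ on which the arc types strictly alternate, together with the coloring $c_0\colon\{v_1,\dots,v_n\}\to C$ that witnesses it, I would extend $c_0$ to a coloring $c\colon V\to C$ of the whole basic graph $B=(V,E)$ and exhibit a witness function $w\colon V\to V$. On the cycle vertices there is nothing to do: keep $c=c_0$ there and set $w(v_i)=v_{i+1}$ with indices taken modulo $n$, which is exactly what ``self-saturating'' guarantees. Two structural facts are free of charge. First, strict alternation forces $\sigma$ to have even length $n\ge 4$ and to realize at least one genuine $\oplus$-arc along an edge of $B$ and at least one genuine $\ominus$-arc along a non-edge of $B$. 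Second, after a cyclic shift we may assume the arc from $v_i$ to $v_{i+1}$ is a $\oplus$-arc exactly when $i$ is odd; then the even-indexed cycle vertices are precisely the heads of $\sigma$'s $\oplus$-arcs and the odd-indexed ones precisely the heads of its $\ominus$-arcs.

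Next I would deal with the ``cooperative'' non-cycle vertices. Let $u\in V$ lie off $\sigma$. If $u$ is adjacent in $B$ to some even-indexed $v_{2j}$, i.e.\ to the head of the $\oplus$-arc from $v_{2j-1}$ to $v_{2j}$, then put $c(u):=c_0(v_{2j-1})$ and $w(u):=v_{2j}$; since $\{u,v_{2j}\}\in E$ and $(c_0(v_{2j-1}),c_0(v_{2j}))\in A^\oplus$, the witness conditions are met. Symmetrically, if $u$ is \emph{non}-adjacent to some odd-indexed $v_{2j+1}$, the head of the $\ominus$-arc from $v_{2j}$ to $v_{2j+1}$, then put $c(u):=c_0(v_{2j})$ and $w(u):=v_{2j+1}$; now $\{u,v_{2j+1}\}\notin E$ and $(c_0(v_{2j}),c_0(v_{2j+1}))\in A^\ominus$ do the job. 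The only vertices not yet handled are the ``hostile'' ones: those adjacent to \emph{every} odd-indexed cycle vertex and non-adjacent to \emph{every} even-indexed cycle vertex. A hostile vertex has a completely rigid adjacency pattern with respect to $\sigma$.

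The hostile vertices are the main obstacle, and disposing of them is the combinatorial core. A small example already shows that one cannot in general keep the given coloring of $\sigma$ while accommodating a hostile vertex: the cycle itself has to be recolored. The leverage is the rigid shape of $P$ that the existence of an alternating self-saturating cycle imposes --- in the situations where Fact~\ref{fact:6.7} gets applied, $P$ has only two colors, every realized $\oplus$-arc runs from one color to the other along an edge, and every realized $\ominus$-arc runs from that second color back along a non-edge. Using this, one argues that, after possibly recoloring $\sigma$ (for instance by rotating the colors along it so that a neighbor of the color a hostile $u$ needs becomes available, or by splicing $u$ into $\sigma$ in place of a length-two sub-path $v_{i-1}v_iv_{i+1}$ whose replacement arcs $v_{i-1}\to u\to v_{i+1}$ can be realized with the arcs of $P$), every hostile vertex also receives a valid color and witness. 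Verifying that some valid choice always exists --- which is where all the case distinctions sit --- is precisely the content of Lemma~6.7 of \cite{GottlobKS2004}, whose combinatorial argument I would follow. Once every vertex of $V$ has been assigned a color and a witness, $c$ and $w$ together show $B\in\Lang{saturation}(P)$.
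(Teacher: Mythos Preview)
The paper does not prove Fact~\ref{fact:6.7}: it is quoted as Lemma~6.7 of \cite{GottlobKS2004} and used as a black box, so there is no in-paper argument to compare your attempt against.

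Evaluated on its own, your outline is correct for the cycle vertices and for the ``cooperative'' off-cycle vertices, but the hostile case is a genuine gap rather than a detail. Both of your concrete suggestions fail. For the splicing idea: if $u$ is hostile and you replace $v_{i-1}\to v_i\to v_{i+1}$ by $v_{i-1}\to u\to v_{i+1}$, note that $v_{i-1}$ and $v_{i+1}$ have the \emph{same} parity, so by the hostile adjacency pattern $u$ is either adjacent to both of them or to neither --- the required edge/non-edge alternation at $u$ is never realized, regardless of~$i$. For the colour-rotation idea: any rotation of~$c_0$ along the cycle that keeps the cycle self-saturating must shift by an even amount (an odd shift would put a $\oplus$-arc colour pair on a non-edge step and vice versa), and an even shift leaves the parity classes --- and hence the hostility of~$u$, which is an adjacency condition, not a colour condition --- unchanged. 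Having exhausted these, you write that the remaining verification ``is precisely the content of Lemma~6.7 of \cite{GottlobKS2004}, whose combinatorial argument I would follow''; but that lemma \emph{is} Fact~\ref{fact:6.7}, so this is circular. A further issue is your mid-proof specialisation to two-colour pattern graphs: the fact is stated for \emph{every} pattern graph~$P$, so even a complete two-colour argument would not establish it as written.
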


\begin{lemma}\label{lem:mixedcycle}
  Let $P = (\{\mathit{black},\mathit{white}\}, A^\oplus, A^\ominus)$
  contain a cycle in $A^\oplus \cup A^\ominus$, but none in $A^\oplus$
  nor in $A^\ominus$. Then $\Lang{saturation}(P) \in \Class{FO}$.
\end{lemma}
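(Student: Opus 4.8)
The plan is to reduce the membership question to the presence of a short combinatorial gadget. Concretely, I want to show that a basic graph~$B$ lies in $\Lang{saturation}(P)$ if, and only if, it contains a self-saturating cycle for~$P$ of length at most a fixed constant~$d$; since~$d$ depends only on~$P$, this last condition is first-order expressible. By the symmetry of the two colors I may restrict attention to the pattern graph with $A^\oplus = \{(\mathit{black},\mathit{white})\}$ and $A^\ominus = \{(\mathit{white},\mathit{black})\}$ (the other possibility listed before the lemma is obtained from this one by swapping black and white).

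Next I would pin down the shape of a self-saturating cycle for such a~$P$. If $(v_1,\dots,v_n,v_1)$ is self-saturating via a coloring~$c$, then the only $\oplus$-arc forces every edge step $\{v_i,v_{i+1}\}\in E$ to satisfy $c(v_i)=\mathit{black}$ and $c(v_{i+1})=\mathit{white}$, and the only $\ominus$-arc forces every non-edge step to satisfy $c(v_i)=\mathit{white}$ and $c(v_{i+1})=\mathit{black}$. Hence the colors along the cycle alternate, so~$n$ is even, and edges and non-edges of~$B$ alternate along the cycle as well; in particular the $\oplus$- and $\ominus$-arcs of the cycle alternate, and $n\ge 4$ (for $n=2$ the single pair $\{v_1,v_2\}$ would have to be both an edge and a non-edge). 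Thus \emph{every} self-saturating cycle for this~$P$ is mixed and has strictly alternating arc labels.

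These structural facts close the argument. On one hand, $B\in\Lang{saturation}(P)$ implies the existence of a self-saturating cycle by Lemma~\ref{lem:selfsatcycle}; since every such cycle alternates $\oplus$- and $\ominus$-arcs, Fact~\ref{fact:6.7} gives the converse, so $B\in\Lang{saturation}(P)$ exactly when~$B$ has a self-saturating cycle for~$P$. On the other hand, every such cycle is mixed, so Fact~\ref{fact:d} supplies a constant~$d$ with the property that~$B$ has a self-saturating cycle iff it has one of length at most~$d$. Combining, $B\in\Lang{saturation}(P)$ iff there is an even $\ell\le d$ and pairwise distinct $v_1,\dots,v_\ell$ with $\{v_i,v_{i+1}\}\in E$ for odd~$i$ and $\{v_i,v_{i+1}\}\notin E$ for even~$i$, indices taken modulo~$\ell$: coloring $v_i$ black for odd~$i$ and white for even~$i$ shows such a cycle is self-saturating, and conversely rotating any self-saturating cycle of length $\le d$ to start at a black vertex produces a tuple of exactly this form. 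This is a finite disjunction of first-order sentences, one per value of~$\ell$, so $\Lang{saturation}(P)\in\Class{FO}$.

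The genuine work is already packaged in Facts~\ref{fact:d} and~\ref{fact:6.7} from~\cite{GottlobKS2004}; the only delicate point here is the structural observation that, for this two-color pattern graph, a self-saturating cycle is automatically mixed and automatically arc-alternating. That is what makes Fact~\ref{fact:6.7} applicable for the converse direction and what lets Fact~\ref{fact:d} turn ``contains a self-saturating cycle'' into a bounded, hence first-order, condition. The case analysis behind it is short but must be carried out carefully, including the degenerate length-two situation.
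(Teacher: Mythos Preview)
Your proposal is correct and follows essentially the same route as the paper: both arguments hinge on Lemma~\ref{lem:selfsatcycle} for necessity, Fact~\ref{fact:6.7} for sufficiency, and Fact~\ref{fact:d} to bound the cycle length, with the key observation being that for this particular two-color pattern every self-saturating cycle is automatically mixed and arc-alternating. The paper is slightly terser about this structural point and, as a bonus, notes that $d=4$ actually suffices, yielding a single explicit first-order sentence rather than your disjunction over even $\ell\le d$; but the underlying logic is the same.
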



\begin{proof}
  Let $B$ be a basic input graph. We wish to test whether $B$
  contains a mixed self-saturating cycle for~$P$, which must be 
  \patterntwomixed{->}{$\oplus$}{->}{$\ominus$} or
  \patterntwomixed{->}{$\ominus$}{->}{$\oplus$}. By Fact~\ref{fact:d},
  if such a mixed self-saturating cycle exists, there is one of
  length~$d$ for some constant~$d$. (The proof in \cite{GottlobKS2004}
  yields $d = 2^{76}+2$ for our pattern graph; but a direct argument
  shows that $d=4$ suffices, fortunately.) Thus, the following formula
  tells us whether a mixed self-saturating cycle exists in $B$
  for~$P$:
  \begin{align*}
    \exists a \exists b \exists c \exists d \bigl(&E(a,b) \land \neg E(b,c)
    \land E(c, d) \land \neg E(d,a) \land{}\\
    &a\neq b \land b \neq c \land c\neq d \land a \neq c \land b \neq
    d \land a \neq d \bigr).
  \end{align*}
  We claim that this formula \emph{also} tells us whether $B \in
  \Lang{saturation}(P)$ holds: The existence a mixed self-saturating 
  cycle in~$B$ is a necessary condition for $B \in \Lang{saturation}(P)$ by
  Lemma~\ref{lem:selfsatcycle}. It is also a sufficient condition by
  Fact~\ref{fact:6.7} because of the special structure of the only cycle
  in~$P$.
\end{proof}

\section{Conclusion}

In the present paper we have completely classified the first-order
reduction closures of prefix classes of \textsc{eso} logic over
directed, undirected, and basic graphs: each one of them is equal to
one of the standard classes $\Class{FO}$, $\Class L$, $\Class{NL}$, or
$\Class{NP}$. It turned out that the prefix classes for directed and
undirected graphs are always the same, but often differ from the prefix
classes for basic graphs. Especially interesting prefixes that mark
the border between one complexity class and the next are $E_1ae$, 
$E^*ae$, and~$Eaa$.

A natural question that arises is: Can we find a prefix class whose
reduction closure is~$\Class P$? By the results of 
the present paper, this cannot be an \textsc{eso} prefix class, unless
unlikely collapses occur. However, what about prefix classes of
general second-order logic? We may similarly ask whether any class
other than $\Class L$, $\Class{NL}$, and the classes of the polynomial 
hierarchy can be characterized by a prefix class of second-order
logic. 

Together with the results from \cite{EiterGG2000}, we now have a
fairly complete picture of the complexity of all \textsc{eso} prefix
classes over directed graphs, undirected graphs, basic graphs, and
strings. Concerning arbitrary logical structures, Gottlob et al.\
\cite{GottlobKS2004} already point out that their $\Class
P$-$\Class{NP}$-dichotomy for 
directed graphs generalizes to the collection of all finite structures
over any relational vocabulary that contains a relation symbol of
arity at least two; and it is not hard to see that our
Theorem~\ref{thm:main} also generalizes in this way (a closer look at
the $\Class{FO}$ and $\Class{NL}$ upper bounds in \cite{GottlobKS2004}
shows that they hold for arbitrary structures). The complexity of
prefix classes over other special structures is, however, still open,
including those of trees, infinite words, and bipartite graphs.

\bibliography{main}

\end{document}